\providecommand{\U}[1]{\protect\rule{.1in}{.1in}}
\newtheorem{theorem}{Theorem}
\newtheorem{acknowledgement}[theorem]{Acknowledgement}
\newtheorem{claim}[theorem]{Claim}
\newtheorem{conjecture}[theorem]{Conjecture}
\newtheorem{corollary}[theorem]{Corollary}
\newtheorem{lemma}[theorem]{Lemma}
\newenvironment{proof}[1][Proof]{\noindent\textbf{#1.} }{\ \rule{0.5em}{0.5em}}
\newdimen\dummy
\let\pdfoutput=\undefined\fi
\begin{document}

\title{Tiling $R^{5}$ by Crosses}
\author{P. Horak$^{1}$, V. Hromada$^{2}$\\$^{1}$University of Washington, Tacoma, USA\\$^{2}$Slovak University of Technology, Bratislava, Slovakia\\
}
\maketitle

\begin{abstract}
\noindent An $n$-dimensional cross comprises $2n+1$ unit cubes: the center
cube and reflections in all its faces. It is well known that there is a
tiling of $R^{n}$ by crosses for all $n.$ AlBdaiwi and the first author
proved that if $2n+1$ is not a prime then there are $2^{\aleph _{0}}$ \
non-congruent regular (= face-to-face) tilings of $R^{n}$ by crosses, while
there is a unique tiling of $R^{n}$ by crosses for $n=2,3$. They conjectured
that this is always the case if $2n+1$ is a prime. To support the conjecture
we prove in this paper that also for $R^{5}$ there is a unique regular, and
no non-regular, tiling by crosses. So there is a unique tiling of $R^{3}$ by
crosses, there are $2^{\aleph _{0}}$ tilings of $R^{4},$ but for $R^{5}$
there is again only one tiling by crosses. We guess that this result goes
against our intuition that suggests "the higher the dimension of the \
space, the more freedom we get".

\end{abstract}

\noindent Tilings of $R^{n}$ by unit cubes go back to 1907 when Minkowski
conjectured \cite{M} that each lattice tiling of $R^{n}$ by unit cubes
contains twins, a pair of cubes sharing a complete $n-1$ dimensional face.
This conjecture was proved by Haj\'{o}s \cite{Hajos} in 1942.\bigskip\ 

\noindent In 1930, when Minkowski's conjecture was still open, Keller
\cite{Keller} suggested that the lattice condition in the conjecture is
redundant, that the nature of the problem is purely geometric, and not
algebraic as assumed by Minkowski. Thus he conjectured that each tiling of
$R^{n}$ by unit cubes contains twins. It is trivial to see that each tiling of
$R^{2}$ by unit cubes contains twins, and it is also easy to verify it for
$R^{3}.$ However, a proof that each tiling of $R^{n},$ $4\leq n\leq6,$
contains twins takes in aggregate 80 pages, see \cite{Peron}. There was no
progress on Keller's conjecture for more than 50 years. Only in 1992 Lagarias
and Shor \cite{LS} constructed a tiling of $R^{n},n\geq10,$ by unit cubes with
no twins. First they found such a tiling in $R^{10},$ which we consider a very
surprising and remarkable result. However, once one has such a tiling in hand,
it is relatively easy to find it for $R^{n},n>10,$ as well. The second part
supports our belief that "the higher the dimension of the space, the more
freedom we get". Mackey \cite{Mackey} proved that the Keller's conjecture is
false for $n=8,9$ as well. As to the remaining value of $n=7,$ there are only
some partial results, see \cite{D}.\bigskip

\noindent Since late fifties tilings of $R^{n}$ by different clusters of unit
cubes have been considered, see e.g. \cite{Stein} and \cite{U}, many of them
related to perfect error-correction codes in Lee metric (also called Manhattan
metric in $Z^{n}$). The Golomb-Welch conjecture \cite{GW} has been a main
motivating power of the research in this area for the last forty years. A
perfect $e$-error correcting Lee code over $Z$ of block size $n$, denoted
$PL(n,e)$, is a set $C\subset Z^{n}$ of codewords so that each word $A\in
Z^{n}$ is at Lee distance at most $e$ from exactly one codeword in $C.$
Similarly, a perfect $e$-error correcting Lee code over $Z_{q}$ of block size
$n$, denoted $PL(n,q,e)$, is a set $C\subset Z_{q}^{n}$ of codewords so that
each word $A\in Z_{q}^{n}$ is at Lee distance at most $e$ from exactly one
codeword in $C.$

\begin{conjecture}
Golomb-Welch. For $n\geq 3$ and $e>1,$ there is no $PL(n,e)$ code.
\end{conjecture}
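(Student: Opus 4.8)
\noindent The statement is the Golomb--Welch conjecture, which remains open in full generality; what follows is therefore a research program rather than a complete argument, and the ``main obstacle'' is literally the fact that no one has yet closed it. The natural first move is to translate the coding-theoretic claim into geometry. A $PL(n,e)$ code exists if and only if the integer lattice $\mathbb{Z}^{n}$ can be tiled by translates of the Lee ball $B_{n}(e)=\{x\in\mathbb{Z}^{n}:\sum_{i}|x_{i}|\le e\}$, each codeword serving as a ball center. For $e=1$ the cluster $B_{n}(1)$ is exactly the $n$-dimensional cross of $2n+1$ cubes studied in this paper, and such tilings do exist for every $n$; the conjecture asserts that this freedom collapses as soon as $e\ge2$. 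Thus the goal is to prove that for $n\ge3$ and $e\ge2$ the cluster $B_{n}(e)$ admits no tiling of $\mathbb{Z}^{n}$ by translations.

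\noindent I would first attack the regime where $e$ is large relative to $n$, following the original geometric idea. After rescaling by $1/e$, the discrete ball $B_{n}(e)$ converges in the Hausdorff sense to the continuous cross-polytope $C_{n}=\{x\in\mathbb{R}^{n}:\|x\|_{1}\le1\}$. For $n\ge3$ the cross-polytope does not tile $\mathbb{R}^{n}$ by translations: its facets are simplices, hence not centrally symmetric, so by the Venkov--McMullen characterization no translational tiling exists. The plan is to make this rigidity quantitative and transfer it to the lattice: a hypothetical tiling of $\mathbb{Z}^{n}$ by $B_{n}(e)$ would, after rescaling, force an approximate translational tiling of $\mathbb{R}^{n}$ by $C_{n}$, contradicting the continuous obstruction once $e$ is large enough that the approximation error falls below the available geometric slack. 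This disposes of all sufficiently large $e$ for each fixed $n$.

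\noindent The complementary small-$e$ range I would treat algebraically, passing to the finite quotient $\mathbb{Z}_{q}^{n}$. A perfect code there is a tiling of the group $\mathbb{Z}_{q}^{n}$ by the ball, and in the group algebra this is equivalent to a factorization in which, for every nontrivial character, the character value of the ball or of the center set vanishes; concretely one obtains the divisibility constraint that the volume $V(n,e)=\sum_{k=0}^{\min(n,e)}2^{k}\binom{n}{k}\binom{e}{k}$ divide $q^{n}$, together with vanishing conditions on the exponential sums attached to $B_{n}(e)$. For $e=2$, where $V(n,2)=2n^{2}+2n+1$, one can hope to force a contradiction directly from these number-theoretic and character-sum conditions, and then bootstrap to larger $e$ by an induction on the radius or a splitting argument.

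\noindent The hard part---and the reason the conjecture is still open---is the intermediate zone where $e\ge2$ is \emph{fixed and small} while $n\to\infty$. Here both strategies degrade at once: the continuous approximation is useless because $e$ is bounded, while the $\ell_{1}$ ball becomes ever ``spikier'' as its inradius-to-circumradius ratio $1/\sqrt{n}$ tends to $0$, so the geometric slack vanishes; simultaneously the divisibility condition $V(n,e)\mid q^{n}$ has many solutions and obstructs nothing by itself. What is needed is a single obstruction uniform in $n$---an invariant of $B_{n}(e)$ that provably cannot be matched across a full tiling for every $n\ge3$ at once. Producing such a uniform invariant is precisely the step at which all current approaches stall, and it is where I would expect essentially all of the difficulty to lie.
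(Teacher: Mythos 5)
You have not been asked to prove a theorem here: the statement is labeled a \emph{conjecture} in the paper, and the paper offers no proof of it whatsoever --- it appears only as historical motivation for the study of cross tilings, with the authors deferring to the literature for the state of the art. So there is no proof in the paper to compare your proposal against, and your framing of the statement as open is exactly right. The Golomb--Welch conjecture is proved only in special cases (small $n$, or $e$ large relative to $n$); the general case, in particular fixed $e\geq 2$ with $n\to\infty$, remains unresolved.

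As a survey of the known attack lines your proposal is accurate and well calibrated. The reduction of $PL(n,e)$ to tiling $\mathbb{Z}^{n}$ by the Lee ball $B_{n}(e)$ is correct and is the same translation the present paper uses for $e=1$; the asymptotic argument via convergence of the rescaled ball to the cross-polytope, which fails to tile $\mathbb{R}^{n}$ translationally for $n\geq 3$ because its facets are not centrally symmetric, is essentially Golomb and Welch's original route to the large-$e$ case; the computation $V(n,2)=2n^{2}+2n+1$ is correct; and your diagnosis that the genuinely hard regime is fixed small $e$ with $n$ growing --- where the continuous approximation gives no leverage and the arithmetic conditions are too weak --- matches the accepted view of why the problem is open. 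The only caveat is the obvious one: none of this constitutes a proof, and you say so yourself. Since the paper proves nothing here either, there is no gap to report beyond the fact that the conjecture itself is unproved by anyone.
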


\noindent Clearly, the above conjecture, if true, implies that there is no
$PL(n,q,e)$ code for $n\geq3$, $e>1,$ and $q\geq2e+1.$ For the state of the
art on the conjecture we refer the reader to \cite{H1}.\bigskip

\noindent In this paper we focus on tilings by $n$-crosses. An $n$-dimensional
cross comprises $2n+1$ unit cubes: the "central" one and reflections in all
its faces. A tiling $\mathcal{L}$ of $R^{n}$ by crosses is called a $Z$-tiling
if centers of all crosses in $\mathcal{L}$ have integer coordinates. Further,
$\mathcal{L}$ is called a lattice tiling if centers of all crosses in
$\mathcal{L}$ form a lattice. A regular (also called a face-to-face) tiling is
a tiling that is congruent to a $Z$-tiling; otherwise the tiling is called
non-regular. We recall that two tilings $\mathcal{T}$ and $\mathcal{S}$ of
$R^{n}$ are congruent if there exists a linear, distance preserving bijection
of $R^{n}$ which maps $\mathcal{T}$ on $\mathcal{S}$.\ It seems that
K\'{a}rteszi \cite{Kar} was the first to ask whether there exists a tiling of
$R^{3}$ by crosses. Such a tiling was constructed by Freller in 1970;
Korchm\'{a}ros about the same time treated the case $n>3$. Golomb and Welch
showed the existence of these tilings in terms of error-correcting codes, see
Section 3.5 in \cite{Stein}. Immediately after the existence question has been
answered, the enumeration of tilings has been studied. \ In \cite{Mol}
Moln\'{a}r proved:

\begin{theorem}
\label{M}Molnar. The number of pair-wise non-congruent lattice $Z$-tilings of
$R^{n}$ by crosses equals the number of non-isomorphic Abelian groups of order
$2n+1.$
\end{theorem}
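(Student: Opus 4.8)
The plan is to set up the standard dictionary between lattice tilings by crosses and \emph{splittings} of finite abelian groups, and then to show that this dictionary descends to a bijection between congruence classes of tilings and isomorphism classes of groups of order $2n+1$.

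First I would translate the tiling condition into algebra. A lattice $Z$-tiling is determined by the sublattice $L\subseteq Z^{n}$ of cross-centers. Since the central cross and its $L$-translates tile $Z^{n}$, the $2n+1$ cells $\{0,\pm e_{1},\dots,\pm e_{n}\}$ of the central cross form a complete set of distinct coset representatives of $L$; hence $[Z^{n}:L]=2n+1$ and $G:=Z^{n}/L$ is abelian of order $2n+1$. Writing $g_{i}$ for the image of the standard basis vector $e_{i}$ in $G$, the transversal condition reads $\{0,\pm g_{1},\dots,\pm g_{n}\}=G$ with all elements distinct. This is exactly a splitting of $G$ by $\{1,-1\}$, and it is recorded by the surjection $\phi\colon Z^{n}\to G$, $e_{i}\mapsto g_{i}$, with $\ker\phi=L$.

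Next I would pin down congruence algebraically. A congruence is a linear isometry $T$ of $R^{n}$; being linear it fixes the origin, and since it carries one cross-tiling onto another it must send the central cross (whose centroid is $0$) to the central cross, so $T$ lies in the symmetry group of the cross. That group is the hyperoctahedral group $B_{n}$ of signed permutations, since any such $T$ permutes the $2n$ arm-directions $\{\pm e_{i}\}$. Consequently two lattice tilings given by $L$ and $L'$ are congruent if and only if $P(L)=L'$ for some $P\in B_{n}$; as every $P\in B_{n}$ is an automorphism of $Z^{n}$, such a $P$ induces an isomorphism $Z^{n}/L\cong Z^{n}/L'$. This already shows that congruent tilings produce isomorphic groups.

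The core of the argument is the converse together with surjectivity. For surjectivity, given any abelian $G$ of order $2n+1$: because $|G|$ is odd, no nonzero element satisfies $g=-g$, so $G\setminus\{0\}$ splits into $n$ pairs $\{g,-g\}$; choosing one representative $g_{i}$ from each pair yields a splitting $\{0,\pm g_{1},\dots,\pm g_{n}\}=G$, and the homomorphism $e_{i}\mapsto g_{i}$ has a kernel $L$ realizing $G$. For injectivity, let $\psi\colon Z^{n}/L\to Z^{n}/L'$ be an isomorphism and write $g'_{i}$ for the image of $e_{i}$ in $G'=Z^{n}/L'$. Then $\psi$ carries the splitting $\{g_{i}\}$ of $G$ onto a splitting of $G'$, so matching the $\pm$-pairs produces a permutation $\sigma$ and signs $\epsilon_{i}\in\{1,-1\}$ with $\psi(g_{i})=\epsilon_{i}g'_{\sigma(i)}$; the signed permutation $P$ defined by $Pe_{i}=\epsilon_{i}e_{\sigma(i)}$ then satisfies $\phi'\circ P=\psi\circ\phi$, whence $P(L)=L'$ and the tilings are congruent. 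I expect the main obstacle to be precisely this step, namely upgrading an abstract group isomorphism to a signed permutation of the ambient $Z^{n}$: the point that makes it work is that a splitting of an odd-order group meets each $\pm$-pair exactly once, so that $\sigma$ is genuinely a bijection. Combining surjectivity and injectivity gives the desired bijection between congruence classes of lattice tilings and isomorphism classes of abelian groups of order $2n+1$, which proves the theorem.
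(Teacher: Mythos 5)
Your argument is correct. Note that the paper does not prove this statement at all --- it is quoted as Moln\'ar's theorem from the cited reference [Mol] --- so there is no in-paper proof to compare against; but your route is exactly the one the paper's own machinery suggests: the dictionary of Theorem \ref{14} between lattice tilings by $S_{n,1}$ and homomorphisms $\phi\colon Z^{n}\to G$ bijective on $S_{n,1}$ (i.e.\ splittings of $G$ by $\{\pm1\}$), together with the observation that a linear isometry carrying one cross-tiling to another must fix the central cross and hence be a signed permutation. Both directions of your bijection are sound, and you correctly isolate the one point where oddness of $|G|=2n+1$ is used, namely that a splitting meets each pair $\{g,-g\}$ exactly once so that the abstract isomorphism $\psi$ can be lifted to an element of the hyperoctahedral group.
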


\noindent Szab\'{o} \cite{S} constructed a non-regular lattice tiling of
$R^{n}$ by crosses in the case when $2n+1$ is not a prime. Using refinements
of this construction it was proved in \cite{H} that in this case there are
$2^{\aleph_{0}\text{ }}$non-congruent $Z$-tilings of $R^{n}$ by crosses. In a
strict contrast to this result it was proved there that, for $n=2,$ and $n=3$,
there is a unique, up to a congruence, tiling of $R^{n}$ by crosses. It is
conjectured in \cite{H}, see also \cite{Bader}:

\begin{conjecture}
\label{2}If $2n+1$ is a prime then there exists, up to a congruence, only one
$Z$-tiling of $R^{n}$ by crosses.
\end{conjecture}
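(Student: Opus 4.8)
\noindent The plan is to split the statement into a lattice part and a genuinely hard non-lattice part. By Theorem \ref{M} the pairwise non-congruent lattice $Z$-tilings of $R^{n}$ by crosses are counted by the abelian groups of order $2n+1$; when $2n+1$ is prime the only such group is the cyclic group $Z_{2n+1}$, so there is, up to congruence, exactly one lattice tiling. Hence the whole content of the conjecture is the assertion that no non-lattice $Z$-tiling exists, i.e. that every $Z$-tiling of $R^{n}$ by crosses is in fact a lattice tiling. I would therefore aim to prove the rigidity statement: for $2n+1$ prime, the set of centres $M\subset Z^{n}$ of a cross tiling is forced to be a coset of a sublattice $\Lambda\subset Z^{n}$ of index $2n+1$.

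First I would pass to the coding picture: placing a cross at $c\in Z^{n}$ covers exactly the Lee ball $\{c,\,c\pm e_{1},\dots,c\pm e_{n}\}$, so a $Z$-tiling is precisely a perfect $1$-error-correcting Lee code $M$, and every lattice point is either a centre or the unique $\pm e_{i}$-neighbour of a unique centre. This lets me encode the whole tiling by a single colouring $\lambda:Z^{n}\to Z_{2n+1}$, where $\lambda(x)=0$ marks a centre and $\lambda(x)=\pm i$ records the direction to the centre covering $x$; here I crucially use that, because $2n+1$ is prime, the $2n+1$ values $\{0,\pm1,\dots,\pm n\}$ are exactly the residues modulo $2n+1$. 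The unique lattice tiling is the linear colouring $\lambda_{0}(x)=\sum_{i=1}^{n} i\,x_{i}\bmod(2n+1)$, so the goal becomes to show that any admissible $\lambda$ equals $\lambda_{0}$ up to an isometry of $Z^{n}$ and an additive constant. Writing $D_{i}(x)=\lambda(x+e_{i})-\lambda(x)$, a direct check shows $D_{i}(x)=i$ whenever $x$ or $x+e_{i}$ is a centre, so the task is exactly to propagate the identity $D_{i}\equiv i$ from such pairs to all of $Z^{n}$.

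The core of the argument, generalising the treatment of $R^{2},R^{3}$ and the present $R^{5}$, would be a shell-by-shell propagation: fix a centre at the origin, observe that its $2n$ neighbours and their colours are forced, and then show by induction on the Lee distance from the origin that the centres in each successive Lee sphere are uniquely determined, hence coincide with those of $\lambda_{0}$. Primality of $2n+1$ should be the engine: the only known mechanism producing non-lattice cross tilings is Szabo's twisting construction, which deforms the tiling along cosets of a proper nontrivial subgroup of a group of order $2n+1$; when $2n+1$ is prime no such subgroup exists, so there should be no room for a consistent local deformation. I would try to turn this heuristic into a lemma asserting that any local alternative to $\lambda_{0}$ on a ball of Lee radius $2$ fails to extend to a perfect code, and then chain these local rigidities together.

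The hard part, and the reason the conjecture remains open beyond small dimensions, is making this propagation uniform in $n$. Two difficulties stand out. A $Z$-tiling is not assumed periodic, so one cannot simply pass to a finite torus $Z_{q}^{n}$ and finish with a character/Fourier computation; establishing periodicity of perfect radius-$1$ Lee codes would itself be a major step. And even granting a local analysis, the number of candidate configurations on the Lee sphere of radius $2$ grows rapidly with $n$, so the finite case checks that settle $n\le 3$ and $n=5$ explode combinatorially. A successful general proof would need a single dimension-free rigidity argument---most plausibly a proof that $D_{i}\equiv i$ is forced directly from primality, for instance via a counting or group-ring identity over a fundamental domain---rather than an enumeration; finding that uniform mechanism is the main obstacle I would expect to face.
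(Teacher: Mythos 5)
You should first be clear that the statement you were asked to prove is, in the paper, an open conjecture: the authors prove it only for $n=5$ (Theorem \ref{1}), so there is no proof in the paper to match in full generality. Your reduction is exactly the paper's first step: by Theorem \ref{M} (Moln\'ar), primality of $2n+1$ leaves a single abelian group of order $2n+1$, hence a unique lattice tiling up to congruence, so the conjecture is equivalent to the rigidity statement that every $Z$-tiling is a lattice tiling (this is precisely the paper's Theorem \ref{33} for $n=5$). Your shell-by-shell propagation is also the same strategy in spirit as the paper's Phases A--D: the authors fix a codeword $W$, determine the exact census of codeword types in its $3$-, $4$- and $5$-neighbourhoods (Theorems \ref{6}--\ref{11}), show the $5$-neighbourhood is congruent to a canonical one and is uniquely determined by the codewords of type $[2^{1},1^{1}]$ (Theorem \ref{12}), and then propagate equality of neighbourhoods along chains of codewords at mutual distance $3$ -- your sequence $Z_{0}=O,Z_{1},\dots,Z_{m}=Z$ appears verbatim in their proof of Theorem \ref{33}.

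The genuine gap is that the decisive step of your plan -- the lemma that ``any local alternative to $\lambda_{0}$ on a ball of Lee radius $2$ fails to extend,'' or equivalently that $D_{i}\equiv i$ is forced by primality -- is exactly the content of the conjecture, and you offer no mechanism for it beyond the heuristic that Szab\'o's construction needs a proper subgroup. The paper's own results show why this cannot be a soft local-counting argument: in Phase B they note that for some $n\equiv 2\ (\operatorname{mod}3)$, e.g.\ $n=62$, two tilings have different numbers of codewords of type $[5^{1}]$, so the local census at radius $5$ is not determined by $n$ alone, and even for $n=5$ forcing $\left\vert[5^{1}]\right\vert=0$ required double-local identities and a graph-theoretic case analysis (triangle decompositions of complements of cubic graphs on ten vertices) with no visible dimension-free analogue. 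Two smaller inaccuracies: the identification of $\{0,\pm 1,\dots,\pm n\}$ with $Z_{2n+1}$ uses only $|S_{n,1}|=2n+1$ and has nothing to do with primality, so primality is not ``crucially used'' where you say it is; and Szab\'o's construction produces non-regular \emph{lattice} tilings when $2n+1$ is composite, not non-lattice $Z$-tilings, so it is not evidence that twisting along subgroups is the only obstruction to the rigidity you need. To your credit, you flag the missing uniform mechanism yourself; but as it stands the proposal establishes only the (easy, and shared with the paper) reduction, and the conjecture remains unproved by it.
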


\noindent It seems to us that the above conjecture, if true, would totally go
against our intuition that suggests: the higher the dimension of the space,
the more freedom we get; see also an above comment related to the
Lagarias-Shor result on Keller's conjecture.\bigskip\ 

\noindent To provide supporting evidence for Conjecture \ref{2} we prove in
this paper:

\begin{theorem}
\label{1}There exists, up to a congruence, a unique $Z$-tiling of $R^{5}$ by crosses.
\end{theorem}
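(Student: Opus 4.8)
\section*{Proof plan for Theorem \ref{1}}

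The plan is to recast the problem in terms of perfect Lee codes and then to prove a rigidity statement, after which Molnar's theorem finishes the job. Concretely, a $Z$-tiling of $R^{5}$ by crosses is the same data as a set $M\subseteq Z^{5}$ of cross-centers such that the Lee balls of radius $1$ about the points of $M$ partition $Z^{5}$; that is, $M$ is a perfect $1$-error-correcting Lee code $PL(5,1)$. After a translation I may assume $0\in M$. Since $2\cdot 5+1=11$ is prime, Theorem \ref{M} says there is exactly one lattice tiling up to congruence (there is a unique abelian group of order $11$, namely $Z_{11}$). Hence the entire content of Theorem \ref{1} beyond Theorem \ref{M} is the assertion that \emph{every} $PL(5,1)$ code is, up to the action of the symmetry group generated by translations, permutations of coordinates, and sign changes, a coset of a sublattice $\Lambda\subset Z^{5}$ of index $11$; equivalently, that no non-lattice $Z$-tiling exists. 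This is precisely the point at which primality must be used, since for composite $2n+1$ the Szabo construction cited above produces genuinely non-lattice codes.

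To exhibit the rigidity I would slice $Z^{5}$ into the hyperplane layers $H_{k}=\{x:x_{5}=k\}\cong Z^{4}$ and read off how each layer is covered. A center in $H_{k}$ contributes a $4$-dimensional cross to $H_{k}$, while a center in $H_{k+1}$ or $H_{k-1}$ contributes a single ``tip'' cell $c\mp e_{5}$. Thus each layer $H_{k}$ is partitioned, as a copy of $Z^{4}$, into $4$-crosses (indexed by $M\cap H_{k}$) together with isolated cells (indexed by the projections of $M\cap H_{k\pm 1}$). This furnishes the engine of the argument: the positions of the isolated cells in $H_{k}$ are dictated by the centers in the two neighbouring layers, so knowledge of a block of consecutive layers strongly constrains the next one. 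A density count confirms the bookkeeping: the codewords have global density $1/11$, so on average a layer is covered to the extent $9/11$ by its own $4$-crosses and $2/11$ by the tips arriving from the two adjacent layers.

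The crux, and the step I expect to be by far the hardest, is to show that these interlayer constraints force periodicity of the lattice type. I would first classify the admissible ways a partition of $Z^{4}$ into $4$-crosses and singletons can sit inside a single layer, and then analyse how two (or a few) consecutive admissible layers can be glued; the goal is to prove that, once $0\in M$ is fixed, the configuration in any bounded region propagates uniquely outward, and that the only globally consistent continuation is invariant under a fixed translation of order $11$. Here primality is the decisive ingredient: a composite modulus $2n+1=ab$ would permit two independent ``shift'' continuations, the mechanism underlying Szabo's non-lattice tilings, whereas for $11$ the case analysis should collapse to a single admissible shift vector. I anticipate that this classification is combinatorially heavy and may require exhaustive, possibly computer-assisted, checking of local configurations, the main difficulty being to keep the number of cases under control as the propagation is carried out.

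Finally, once the code $M$ is shown to coincide, up to the symmetry group, with a sublattice of index $11$, it is a lattice tiling, and Theorem \ref{M} applies directly: because $11$ is prime there is only one such tiling up to congruence. Combining this with the rigidity statement yields that every $Z$-tiling of $R^{5}$ by crosses is congruent to this single lattice tiling, which is precisely Theorem \ref{1}.
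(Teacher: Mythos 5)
Your reduction is the same as the paper's: by Theorem \ref{M} and the primality of $11$ there is a unique lattice tiling, so everything hinges on proving that every $Z$-tiling of $R^{5}$ by crosses is a lattice tiling (this is exactly the paper's Theorem \ref{33}). But that rigidity statement is precisely the content of the theorem, and your proposal does not prove it --- it only names it and sketches a strategy (slice $Z^{5}$ into layers $H_{k}=\{x_{5}=k\}$, classify admissible layer configurations, and argue that the interlayer constraints propagate uniquely). You yourself flag the classification and propagation step as ``combinatorially heavy'' and possibly requiring computer assistance; as written there is no argument, only an intention. The gap is not a small one: the space of partitions of $Z^{4}$ into $4$-crosses and singletons is enormous (recall that already the tilings of $R^{4}$ by crosses alone number $2^{\aleph_{0}}$), so there is no reason to believe a direct classification of admissible layers is tractable, and your heuristic for where primality enters (``a composite modulus would permit two independent shift continuations'') is not a mechanism that can be turned into a proof of uniqueness for $n=5$.

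For comparison, the paper attacks Theorem \ref{33} by an entirely local analysis centered on codewords rather than on hyperplane layers. It first proves purely quantitative identities: for every codeword $W$ the number of codewords of each type ($[2^{1},1^{1}]$, $[1^{3}]$, $[3^{1},1^{1}]$, etc.) in the $3$-, $4$-, and $5$-neighborhoods of $W$ is forced by double-counting which codewords cover which words (Theorems \ref{5}--\ref{11}, using $n\equiv 2 \pmod 3$ and then $n=5$). It then shows the $5$-neighborhood of every codeword is congruent to an explicit canonical set determined by the codewords of type $[2^{1},1^{1}]$ (Theorem \ref{12}, via a graph-theoretic analysis of $K_{10}$ minus a perfect matching and triangle decompositions), that this neighborhood is symmetric and closed under the relevant sums (Theorem \ref{13}), and finally that adjacent codewords have identical, not merely congruent, neighborhoods --- which forces $W-Z\in\mathcal{T}$ for all codewords $W,Z$ and hence the lattice property. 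If you want to complete your proof you would either need to carry out the layer classification in full (with no guarantee it closes) or switch to a neighborhood-counting argument of the paper's kind.
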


\noindent We note that a sketch of a proof of the above statement has been
given in \cite{HH1}. However, the sketch is so short that it is impossible for
the interested reader to reconstruct the whole proof from it. Therefore in
this paper a complete version of the proof is provided. Although we proved
Conjecture \ref{2} only for $n=5,$ an essential part of the proof of Theorem
\ref{1} holds for all $n=2(\operatorname{mod}3).$ We believe that this part
will be helpful when proving this conjecture for some other values of
$n.$\bigskip

\noindent Clearly, if $\mathcal{L}$ is a $Z$-tiling of $R^{n}$ by crosses,
then centers of crosses in $\mathcal{L}$ form a $PL(n,1)$ code. It is easy to
check that the unique tiling of $R^{5}$ by crosses is $11$-periodic. Thus, as
an immediate consequence we get:

\begin{corollary}
There is a $PL(5,q,1)$ code if and only if $11|q.$
\end{corollary}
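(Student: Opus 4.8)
The plan is to pass from codes over $Z_q$ to periodic perfect Lee codes over $Z$, and then read off the two implications separately: necessity from a volume count, and sufficiency from the $11$-periodicity of the tiling produced by Theorem \ref{1}. First I would set up the dictionary between the two settings. Under the standard convention $q\ge 2e+1=3$ (the regime in which the problem is posed), the Lee ball of radius $1$ in $Z_q^5$ has exactly $11$ points, namely $x$ together with $x\pm e_j$ for $1\le j\le 5$, all distinct. Pulling a set $M\subseteq Z_q^5$ back along the projection $\pi\colon Z^5\to Z_q^5$ yields a $qZ^5$-periodic set $\pi^{-1}(M)\subseteq Z^5$, and since a radius-$1$ Lee ball has diameter $2<q$ it embeds without wraparound; a routine check then shows that $M$ is a $PL(5,q,1)$ code precisely when $\pi^{-1}(M)$ is a $PL(5,1)$ code over $Z$, equivalently when the crosses centred at $\pi^{-1}(M)$ tile $R^5$. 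Thus a $PL(5,q,1)$ code exists if and only if $R^5$ admits a $qZ^5$-periodic tiling by crosses. The small cases $q\in\{1,2\}$, where the ball fails to have $11$ points, are degenerate and treated by hand.

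For necessity, suppose a $PL(5,q,1)$ code $M$ exists with $q\ge 3$. The $11$-point Lee balls centred at the codewords partition $Z_q^5$, so $11\cdot|M|=q^5$; since $11$ is prime this forces $11\mid q$. This direction is entirely elementary and uses only the packing identity, no tiling theory.

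For sufficiency, assume $11\mid q$. By Theorem \ref{1} there is a $Z$-tiling of $R^5$ by crosses, and the one fact that must actually be verified is that this tiling is $11$-periodic, i.e. invariant under $11Z^5$. This is transparent from the explicit lattice code $\{x\in Z^5 : \sum_{i=1}^{5} i\,x_i\equiv 0 \pmod{11}\}$ furnished by Theorem \ref{M}: its syndrome map sends the $11$ ball-points $x$ and $x\pm e_j$ to the shifts $0,\pm1,\dots,\pm5$, which reduce mod $11$ to the distinct residues $0,1,\dots,10$, so each residue class is covered exactly once (a perfect code) while every generator $11e_j$ fixes the code. Because $11\mid q$ gives $qZ^5\subseteq 11Z^5$, this $11$-periodic code is a fortiori $qZ^5$-periodic, and hence descends through $\pi$ to a $PL(5,q,1)$ code over $Z_q$. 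Combining the two implications gives the corollary. The only genuine work is the periodicity check in the sufficiency step; the necessity is the counting identity and the rest is the pullback correspondence.
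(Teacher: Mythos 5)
Your proof is correct, but it takes a genuinely different route from the paper in one of the two directions. The paper obtains the corollary as a one-line byproduct of Theorem \ref{1}: a $PL(5,q,1)$ code pulls back along $Z^5\to Z_q^5$ to a $qZ^5$-periodic $Z$-tiling of $R^5$ by crosses, which by uniqueness must be (congruent to) the lattice tiling of index $11$, so $qZ^5$ sits inside an index-$11$ lattice and $11\mid q$; conversely the $11$-periodic tiling descends whenever $11\mid q$. You keep the same lifting dictionary and essentially the same sufficiency argument — the explicit perfect linear code $\{x:\sum_{i}i\,x_i\equiv 0\pmod{11}\}$, verified perfect via the syndrome bijection and $11Z^5$-periodic, hence $qZ^5$-periodic when $11\mid q$ — but you replace the uniqueness-based necessity with the sphere-packing count $11\cdot|M|=q^5$ together with primality of $11$. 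This makes your proof of the corollary entirely independent of Theorem \ref{1}, indeed of everything in the paper beyond the classical homomorphism construction; that is a real gain in economy, and it shows the corollary does not actually require the main theorem, whereas the paper's framing spends the full strength of uniqueness on a statement that elementary counting already delivers. What the paper's route buys instead is uniformity: it exhibits the corollary as an automatic consequence of the structural result, with no separate counting step. Two small corrections to your write-up: the explicit lattice code is furnished by the homomorphism construction (Theorem \ref{14} and Corollary \ref{20}), not by Moln\'ar's Theorem \ref{M}, which only enumerates lattice tilings — harmless here since you verify perfectness directly; and your attention to the convention $q\ge 2e+1=3$ is genuinely necessary, since for $q=1$ the trivial code would be vacuously perfect while $11\nmid 1$, and for $q=2$ the Lee ball has only $6$ points, so both degenerate cases must be excluded exactly as you do.
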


\noindent As to the non-regular tilings of $R^{n}$ by crosses, it was
mentioned above that such a tiling exists if $2n+1$ is not a prime. A result
of Redei \cite{Redei} implies that, if $2n+1$ is a prime, then there is no
lattice non-regular tiling of $R^{n}$ by crosses. It is easy to check that a
non-regular tiling of $R^{2}$ by crosses does not exist. The same result for
$n=3$ has been proved in \cite{G}. As the other main result of this paper we
will show that:

\begin{theorem}
\label{3}Let $2n+1$ be a prime. If there is a unique $Z$-tiling of $R^{n}$ by
crosses, then there is no non-regular tiling of $R^{n}$ by crosses.
\end{theorem}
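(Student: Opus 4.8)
The plan is to argue by contradiction: assuming that $2n+1$ is prime, that the $Z$-tiling is unique up to congruence, and that a non-regular tiling $\mathcal{T}$ nevertheless exists, I will derive a contradiction by showing that $\mathcal{T}$ produces either a non-regular \emph{lattice} tiling or a $Z$-tiling not congruent to the standard one. The first outcome is ruled out by the result of Rédei quoted above, which asserts that no non-regular lattice tiling exists when $2n+1$ is prime; the second contradicts the uniqueness hypothesis. Thus the whole theorem reduces to showing that a non-regular tiling forces one of these two rigid structures, which is the upgrade of Rédei's lattice statement to arbitrary tilings.

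First I would normalize $\mathcal{T}$. Since the cross is invariant under the full hyperoctahedral group, after a suitable congruence I may assume that all $2n+1$ cubes of every cross have faces parallel to the coordinate hyperplanes; then $\mathcal{T}$ is a tiling of $R^n$ by axis-parallel unit cubes grouped into crosses. For axis-parallel cube tilings the only congruences preserving the axis directions are the hyperoctahedral symmetries, and these preserve $\mathbb{Z}^n$; hence $\mathcal{T}$ is regular if and only if a single translation makes all cube centers integral, that is, if and only if it is face-to-face. Consequently a non-regular $\mathcal{T}$ must contain two crosses whose cubes fail to share full faces.

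Next I would introduce, for each coordinate $i$, the offset recording for every line $\ell$ parallel to $e_i$ the common fractional part $f_i(\ell)\in\mathbb{R}/\mathbb{Z}$ of the $i$-th coordinates of the cube faces that $\ell$ meets. Because the cubes along $\ell$ partition it into unit segments, $f_i$ is well defined and constant along each such line; and because the two cubes joined by an arm of a cross sit on a common local grid, they carry equal offsets in every coordinate. I would use these two facts to propagate offsets: the arms link lines differing by a unit vector, so the occurring values of each $f_i$ are tightly constrained, and I expect one can show they lie in a finite cyclic subgroup $H\subset\mathbb{R}/\mathbb{Z}$ whose order divides $2n+1$. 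This divisibility is the number-theoretic core of the argument and is precisely where the arithmetic of $2n+1$ enters, mirroring the mechanism behind Rédei's result.

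\textbf{The hard part} will be passing from these local offset constraints to global rigidity. When $2n+1=p$ is prime, $H$ is either trivial, so that all offsets coincide and $\mathcal{T}$ is face-to-face and hence regular, or $H\cong\mathbb{Z}_p$, so that the offsets take genuinely non-integer values in $\tfrac1p\mathbb{Z}/\mathbb{Z}$. I would rule out the second case by showing that a tiling whose offsets lie in $\tfrac1p\mathbb{Z}$ is periodic with period dividing the common denominator, hence descends to a lattice tiling on the quotient torus; that lattice tiling is non-regular, contradicting Rédei's result for prime $2n+1$. The delicate obstacle is establishing the required periodicity, i.e.\ excluding an aperiodic drift of the offsets across $R^n$; here I would combine the finiteness of the admissible local cross configurations with the uniqueness of the $Z$-tiling, which constrains the admissible patterns so severely that any global pattern must repeat. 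Once periodicity is secured, the reduction to Rédei's lattice case closes the contradiction and proves that no non-regular tiling can exist.
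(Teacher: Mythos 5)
Your outline diverges from the paper's argument at two load-bearing points, and both are genuine gaps rather than omitted routine details. The first is the ``number-theoretic core'': you claim that the fractional offsets $f_i(\ell)$ must lie in a finite cyclic subgroup $H\subset\mathbb{R}/\mathbb{Z}$ of order dividing $2n+1$. The propagation mechanism you offer cannot deliver this. The arms of a single cross only show that the $2n+1$ cubes of \emph{that} cross carry identical offsets (they are integer translates of one another); they do not link the offsets of distinct crosses. Indeed, the known non-regular tilings (Szab\'o's constructions for composite $2n+1$) are obtained by translating a whole sub-family of crosses by an \emph{arbitrary real} amount, so the occurring offsets can be irrational; no local argument of the kind you sketch can force them into $\tfrac1p\mathbb{Z}$. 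Second, even if the offsets were rational and the tiling periodic, ``periodic'' does not mean ``lattice,'' so the final reduction to R\'edei's theorem (which concerns lattice tilings) does not close; and the periodicity itself is asserted via an appeal to ``finiteness of local configurations plus uniqueness'' without an actual argument.

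The ingredient you are missing is the structural fact the paper takes from Stein--Szab\'o: for each axis $i$ and each fractional value $\alpha$, the set of crosses whose centers have $i$-th coordinate with fractional part $\alpha$ forms a \emph{prism} along the $i$-th axis, and this prism may be translated along $e_i$ independently of the rest of the tiling while remaining a tiling. This lets one push all offsets to $0$ one prism at a time, converting any non-regular tiling into a $Z$-tiling that still contains a \emph{proper} independently translatable prism. The paper then proves a rigidity lemma (Lemma \ref{LL}): in the unique lattice tiling, a basis $\mathbf{v}_1,\dots,\mathbf{v}_n$ of $\ker\phi$ can be chosen so that each $K_{\mathbf{v}_i}$ has a cube centered on the $j$-th axis, whence shifting $K_O$ along that axis forces every cross to shift --- i.e.\ the unique $Z$-tiling admits no proper prism. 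The $Z$-tiling produced from a non-regular tiling therefore cannot be congruent to it, contradicting the uniqueness hypothesis directly, with no appeal to R\'edei at all. Your proposal would need to be rebuilt around these two facts to go through.
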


\noindent Combining Theorem \ref{1} with \ref{3} we get:

\begin{corollary}
There is a unique, up to a congruence, tiling of $R^{5}$ by crosses, and this
tiling is a $Z$-tiling.
\end{corollary}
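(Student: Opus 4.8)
\noindent The plan is to obtain the statement as a purely formal consequence of Theorems \ref{1} and \ref{3}, both of which are at our disposal. First I would specialize to the dimension at hand: for $n=5$ we have $2n+1=11$, which is prime, so the hypothesis ``$2n+1$ is a prime'' of Theorem \ref{3} is met. Theorem \ref{1} then supplies the remaining hypothesis of Theorem \ref{3}, namely that there is, up to congruence, a unique $Z$-tiling of $R^{5}$ by crosses. So the two inputs needed to run Theorem \ref{3} are exactly the two facts already established for $R^{5}$.

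\noindent With both hypotheses in place, Theorem \ref{3} yields that there is no non-regular tiling of $R^{5}$ by crosses. The next step is the definitional bookkeeping recalled in the introduction: every tiling of $R^{5}$ by crosses is either regular, i.e.\ congruent to a $Z$-tiling, or non-regular. Since the non-regular alternative has just been excluded, every tiling of $R^{5}$ by crosses is congruent to some $Z$-tiling. Invoking Theorem \ref{1} once more, all $Z$-tilings of $R^{5}$ by crosses are mutually congruent, and as congruence is an equivalence relation they form a single congruence class. Combining the two observations, every tiling of $R^{5}$ by crosses lies in this one class, which is precisely the assertion that the tiling is unique up to congruence and is a $Z$-tiling.

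\noindent In this form the statement carries no genuine obstacle of its own; the entire difficulty has been pushed upstream into Theorems \ref{1} and \ref{3}, which I am allowed to assume. The only points that require care are the logical ones just made: that the dichotomy regular/non-regular is exhaustive, that ``regular'' is \emph{defined} as congruence to a $Z$-tiling, and that ``unique up to congruence'' is to be read at the level of congruence classes. If instead one wanted a self-contained argument, the real work would be the proof of Theorem \ref{3}, whose crux is a rigidity statement: for prime $2n+1$ one must show that permitting the crosses to sit at arbitrary, non-integer positions cannot create a tiling that fails to be congruent to the lattice one, so that uniqueness in the lattice setting propagates to all tilings. That reduction from arbitrary real placements of the crosses to the lattice case --- where the situation is already controlled by the Rédei-type result cited above together with the uniqueness hypothesis --- is where I would expect the principal effort to lie.
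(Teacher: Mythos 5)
Your argument is correct and is exactly how the paper obtains this corollary: it is stated immediately after Theorem \ref{3} with the single line ``Combining Theorem \ref{1} with \ref{3} we get,'' and your careful unpacking (primality of $11$, the exhaustive regular/non-regular dichotomy, and congruence as an equivalence relation) fills in precisely the routine details the authors leave implicit.
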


\noindent Thus, there is a unique tiling of $R^{3}$ by crosses, there are
$2^{\aleph_{0}\text{ }}$ pair-wise non-congruent $Z$-tilings of $R^{4\text{ }%
}$by crosses, but for $R^{5}$ there is again a unique tiling by
crosses.\bigskip

\noindent Also, by means of Theorem \ref{3}, it is straightforward that
Conjecture \ref{2} is equivalent to

\begin{conjecture}
If $2n+1$ is a prime then there exists, up to a congruence, a unique tiling of
$R^{n}$ by crosses, and this tiling is a lattice $Z$-tiling.
\end{conjecture}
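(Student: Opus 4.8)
Since the statement fixes a prime $2n+1$ and assumes the $Z$-tiling of $R^{n}$ by crosses is unique up to congruence, the goal is to show directly that an arbitrary tiling $\mathcal{T}$ of $R^{n}$ by crosses must be regular. I would first exploit the definition itself: a tiling is regular exactly when it is face-to-face, and a face-to-face cross tiling has all of its constituent unit cubes in one translate of $Z^{n}$, hence is a $Z$-tiling up to a congruence. So it suffices to prove that $\mathcal{T}$ is face-to-face. To get a grip on $\mathcal{T}$ I would impose the equivalence relation ``connected through shared $(n-1)$-dimensional faces'' on the unit cubes of $\mathcal{T}$ and call its classes \emph{blocks}. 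A cross is itself face-to-face and connected, so each cross lies in a single block; every block is therefore a union of whole crosses, and inside one block all cube-centers lie in a common coset of $Z^{n}$. Consequently $\mathcal{T}$ is regular precisely when it consists of a single block, while a non-regular $\mathcal{T}$ has at least two blocks meeting along a \emph{fault}.

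The next step is to describe faults precisely. Using only the geometry of unit cubes, I would show that the interface between two adjacent blocks lies on axis-parallel hyperplanes, and that crossing such a fault shifts the underlying integer grid by a vector lying in the fault hyperplane; the component normal to the fault must be integral because the cubes stack in unit steps, so the whole defect is a tangential, fractional offset. The length-one arms of the crosses that straddle a fault then have to mesh with the offset cubes on the far side, and these meshing conditions are very restrictive. The heart of the argument is to turn this block-and-fault data into a finite problem: I would argue that the local meshing constraints propagate so that the pattern of blocks and offsets is periodic, whence $\mathcal{T}$ descends to a tiling by crosses of a finite quotient $Z_{m}^{n}$, equivalently to a factorization of a finite abelian group $G$ in which the cross appears as the factor $\{0,\pm g_{1},\dots,\pm g_{n}\}$ of cardinality $2n+1$.

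At this point primality does the work. Because the cross-factor has prime cardinality $2n+1$, a factorization argument of R\'{e}dei type (the same circle of ideas behind the quoted fact that there is no non-regular lattice tiling when $2n+1$ is prime) forces the complementary factor, the code, to be a subgroup, so the reduced tiling is a lattice tiling and hence regular. Feeding this back, the discretization of $\mathcal{T}$ is a genuine $Z$-tiling, and the hypothesis that the $Z$-tiling is unique up to congruence identifies it with the standard, face-to-face one. Tracing the congruence back through the reduction shows that $\mathcal{T}$ admits no nontrivial fault, i.e. it is a single block and therefore regular; this is exactly where uniqueness, rather than primality alone, is used, since it excludes the ``twisted'' gluings that the group-theoretic step does not by itself rule out.

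I expect the main obstacle to be the periodicity/finiteness reduction in the second paragraph. A priori the faults of a non-regular tiling could be arranged completely aperiodically, with no translational symmetry at all, so one cannot simply pass to a quotient torus; the real work is to prove that the purely local meshing constraints imposed by the unit-length arms of the crosses propagate globally and exclude aperiodic fault arrangements, thereby licensing the passage to a finite abelian group where the R\'{e}dei consequence and the uniqueness of the $Z$-tiling can be applied. Establishing this rigidity by a careful local analysis of how two crosses may sit on opposite sides of a fault is, I believe, the crux of the proof.
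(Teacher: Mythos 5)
First, a point about what is actually being proved: the statement you are addressing is stated in the paper as a \emph{conjecture}, and the paper does not prove it unconditionally; it only observes that, by Theorem \ref{3}, it is equivalent to Conjecture \ref{2} (uniqueness of the $Z$-tiling when $2n+1$ is prime). Your proposal, which takes that uniqueness as a hypothesis and tries to show that every tiling of $R^{n}$ by crosses is then regular, is therefore really an attempt at Theorem \ref{3}. The paper's route to Theorem \ref{3} is much more direct than yours: it uses the known fact (from Stein--Szab\'{o}) that the crosses whose centers share a fixed fractional part $\alpha$ in a fixed coordinate form a prism along that axis, shifts each such prism by $(m-\alpha)\mathbf{e}_{i}$ to ``straighten'' the non-regular tiling into a $Z$-tiling $\mathcal{L}^{\ast}$ that still contains a \emph{proper} prism, and then proves (Lemma \ref{LL}, via an explicit basis of $\ker\phi$ for $\phi(\mathbf{e}_{i})=i$ in $Z_{2n+1}$) that the unique lattice tiling admits no proper prism along any axis. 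This contradicts uniqueness. No periodicity reduction and no R\'{e}dei-type factorization theorem are needed.

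The genuine gap in your proposal is exactly the step you flag yourself: the claim that the local meshing constraints along faults force the block-and-offset pattern to be periodic, so that the tiling descends to a finite quotient $Z_{m}^{n}$. This does not follow from local constraints alone, and no mechanism is offered for how primality or the uniqueness hypothesis would enter at that stage. Indeed, for $n=4$ (where $2n+1=9$) there are $2^{\aleph_{0}}$ pairwise non-congruent $Z$-tilings of $R^{4}$ by crosses, almost all of which are non-periodic, while the local meshing constraints there are identical in nature to those for $n=5$; so any argument deriving periodicity purely from how arms of crosses mesh across a fault must fail. Without the finiteness reduction, the R\'{e}dei step has nothing to act on, and the closing claim that ``tracing the congruence back through the reduction shows that $\mathcal{T}$ admits no nontrivial fault'' is an assertion rather than an argument. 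The missing idea is precisely the one the paper supplies: a structural statement about the unique lattice tiling itself (that it contains no proper prism, Lemma \ref{LL}), which converts the existence of a fault directly into a $Z$-tiling not congruent to the lattice one.
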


\noindent In the next section we introduce needed notation, definitions and
state some auxiliary results. Theorem \ref{3} will be proved in Section 2, while Theorem \ref{1} will be proved in Section 3.

\section{Preliminaries}

\noindent In this section we recall some notations, notions, and results
which will turn out to be useful in proving both main results of the paper,
Theorem \ref{1} and Theorem \ref{3}.\bigskip 

\noindent Since the problem of tilings by crosses comes originally from the
area of error-correcting codes we will stick to some of its terminology. Let
$\mathcal{L}$ be a $Z$-tiling of $R^{n}$ by crosses. We will denote by
$\mathcal{T}_{\mathcal{L}}\mathcal{\subset}Z^{n}$ the set of centers of
crosses in $\mathcal{L}.$ The elements of $Z^{n}$ will be called words while
the words in $\mathcal{T}_{\mathcal{L}}$ will be called codewords. We will
also say that a codeword $W$ covers a word $V$ if $\rho_{M}(V,W)\leq1.$ As
usual $\rho_{M}$ stands for the Manhattan distance of $V=(v_{1},...,v_{n})$
and $W=(w_{1},...,w_{n})$ given by
\[
\rho_{M}(V,W)=%
{\displaystyle\sum\limits_{i=1}^{n}}
\left\vert v_{i}-w_{i}\right\vert .
\]
The weight $\left\vert V\right\vert _{M}$ of $V\in Z^{n}$ is given by
$\left\vert V\right\vert _{M}:=$ $%
{\displaystyle\sum\limits_{i=1}^{n}}
v_{i}=\rho_{M}(V,O),$ where $O=(0,...,0).$ \ The following simple observation
will be used several times:

\begin{claim}
\label{0} Let $\mathcal{L}$ be a tiling of $R^{n}$ by crosses. Then permuting
the order of coordinates of each codeword in $\mathcal{T}_{\mathcal{L}}$
and/or changing a sign of a coordinate for each codeword in $T_{\mathcal{L}}$
and/or adding $\ $a word $V\in R^{n}$ to each codeword results in a set
$\mathcal{T}^{\prime}$ which \ induces a tiling of $R^{n}$ by crosses
congruent to $\mathcal{L}$.
\end{claim}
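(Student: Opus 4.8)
The plan is to observe that each of the three listed operations is realized by an isometry of $R^n$ which carries crosses to crosses, so that applying it to every cross of $\mathcal{L}$ produces another tiling by crosses that is congruent to $\mathcal{L}$. Write $C$ for the standard cross centered at the origin, i.e. the union of the $2n+1$ unit cubes whose centers are $O$ and $\pm e_1,\dots,\pm e_n$, where $e_1,\dots,e_n$ is the standard basis of $R^n$. Every cross in $\mathcal{L}$ then has the form $c+C$, where $c$ is its center, since a cross is by definition a translate of $C$.

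First I would record that permuting the order of coordinates and changing the sign of a coordinate are both given by left multiplication by a signed permutation matrix $A$, i.e. an orthogonal matrix with exactly one nonzero entry, equal to $\pm 1$, in each row and each column; adding a fixed word $V$ is the translation $x\mapsto x+V$. Thus any composition of the allowed operations is an affine map $\phi(x)=Ax+V$ with $A$ a signed permutation matrix. The decisive point is that the center set $\{O,\pm e_1,\dots,\pm e_n\}$ of $C$ is invariant under every signed permutation matrix: permuting coordinates permutes the $\pm e_i$ among themselves and fixes $O$, while flipping the sign of a coordinate interchanges $e_i$ with $-e_i$. Hence $AC=C$, and consequently
\[
\phi(c+C)=A(c+C)+V=(Ac+V)+AC=\phi(c)+C,
\]
so $\phi$ maps the cross centered at $c$ onto the cross centered at $\phi(c)$.

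Next I would use that $\phi$ is a bijection of $R^n$. Since the crosses of $\mathcal{L}$ have pairwise disjoint interiors and cover $R^n$, their images under $\phi$ again have pairwise disjoint interiors and cover $R^n$; by the previous step these images are precisely crosses, one centered at $\phi(c)$ for each center $c$ occurring in $\mathcal{L}$. Therefore $\mathcal{L}'\doteq\phi(\mathcal{L})$ is a tiling of $R^n$ by crosses. Finally $\phi$ is distance preserving, being a composition of orthogonal maps and a translation, so $\mathcal{L}'$ is congruent to $\mathcal{L}$, as claimed.

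There is essentially no obstacle here; the statement is the ``simple observation'' it is advertised to be. The only point deserving a word of care is the invariance $AC=C$ under the group of signed permutations, which is immediate from the explicit center set of the cross, together with the remark that a translation is distance preserving even though it is not linear in the strict sense. The latter is harmless under the intended (affine) notion of congruence employed throughout the paper.
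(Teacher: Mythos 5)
Your proof is correct. The paper offers no proof of this claim at all --- it is stated as a ``simple observation'' and used without justification --- and your argument (signed permutation matrices fix the standard cross, so composing with a translation gives an isometry carrying the tiling to a congruent tiling by crosses) supplies exactly the details the authors left implicit, including the appropriate caveat that the paper's definition of congruence via a \emph{linear} distance-preserving bijection must be read in the affine sense to accommodate the translation by $V$.
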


\noindent If $\mathcal{L}$ is a tiling of $R^{n}$ by crosses then for each
word $V$ in $Z^{n}$ there is a unique codeword $W$ in $\mathcal{T}_{\mathcal{%
L}}$ so that $\rho _{M}(V,W)\leq 1.$ Therefore $\mathcal{T}_{\mathcal{L}}$
can be also seen as a decomposition (tiling) of $Z^{n}$ by Lee spheres $%
S_{n,1}$ of radius $1$ centered at $O$, where $S_{n,1}=\{V\in Z^{n},\rho
_{M}(V,O)\leq 1\}=\{O\}\cup \{e_{i},i=1,...,n\};$ and vice versa, each
tiling of $Z^{n}$ by spheres $S_{n,1}$ induces a tiling of $R^{n}$ by
crosses. As usual, $e_{i\text{ }}=(0,...,0,1,0,...,0)$ where the $i$-th
coordinate equal to $1.$\bigskip 

\noindent In general, if $S$ is a subset of $R^{n}$ $(Z^{n}),$ a tiling $%
\mathcal{L}$ of $R^{n}$ $(Z^{n})$ by translations of $S$ can be described in
the form $\{S+\mathbf{u,u}\in \mathcal{U}\},$ where $\mathbf{u}$ is a
vector. Then $\mathcal{L}$ is a lattice tiling if $\mathcal{U}$ is a
lattice.\ For the sake of simplicity we will abuse slightly the language and
a subset $\mathcal{U}$ of $R^{n}$ $(Z^{n})$ will be understood sometimes as
a set of vectors with the obvious $U\in \mathcal{U}$ meaning that the vector 
$\mathbf{u=}U-O$ is in $\mathcal{U}.$ The following theorem stated in \cite%
{H1} turns out to be useful when proving both main results of the paper.

\begin{theorem}
\label{14} Let $S$ be a subset of $Z^{n}.$ Then there is a lattice tiling of 
$Z^{n}$ by translations of $S$ if and only if there is an abelian group $G$
of order $\left\vert S\right\vert $ and a homomorphism $\phi
:Z^{n}\rightarrow G$ so that the restriction of $\phi $ to $S$ is a
bijection. In addition, if $\phi $ satisfies this condition, then the
lattice tiling of $Z^{n}$ by translations of $S$ is given by $\{S+\mathbf{u,u%
}\in \ker (\phi )\}$.
\end{theorem}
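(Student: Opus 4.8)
The plan is to prove the two implications separately, using the quotient group $G = Z^n/\mathcal{L}$ as the canonical object linking a lattice tiling to a homomorphism. The key observation to set up first is that, by definition, a lattice tiling of $Z^n$ by translates of $V$ is exactly a partition of $Z^n$ into the translated copies $V+l$, $l\in\mathcal{L}$, where $\mathcal{L}$ is an additive subgroup of $Z^n$; equivalently, the addition map $V\times\mathcal{L}\to Z^n$, $(v,l)\mapsto v+l$, is a bijection. Phrasing the hypothesis this way makes the ``exactly once'' condition explicit, which is what drives both directions.

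For the ($\Leftarrow$) direction, assume $G$ is an abelian group of order $|V|$ and $\phi:Z^n\to G$ is a homomorphism whose restriction to $V$ is a bijection. I would put $\mathcal{L}=\ker(\phi)$, which is a subgroup of $Z^n$ and hence a lattice, and check that $\{V+l:l\in\ker(\phi)\}$ tiles $Z^n$. For covering: given any word $w$, there is a unique $v\in V$ with $\phi(v)=\phi(w)$ (since $\phi|_V$ is a bijection onto $G$), so $w-v\in\ker(\phi)$ and $w\in V+(w-v)$. For disjointness: if $v_1+l_1=v_2+l_2$ with $v_i\in V$ and $l_i\in\ker(\phi)$, then applying $\phi$ and using $\phi(l_i)=0$ gives $\phi(v_1)=\phi(v_2)$, whence $v_1=v_2$ by injectivity and then $l_1=l_2$. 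This simultaneously establishes the tiling and verifies the asserted formula $\{V+l:l\in\ker(\phi)\}$.

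For the ($\Rightarrow$) direction, I would start from a lattice tiling with translation lattice $\mathcal{L}$ and take $G=Z^n/\mathcal{L}$ with $\phi$ the quotient homomorphism, then show $\phi|_V$ is a bijection directly from the partition property. Surjectivity: any coset $w+\mathcal{L}$ has a representative that can be written $w=v+l$ with $v\in V$, $l\in\mathcal{L}$, and then $\phi(v)=w+\mathcal{L}$. Injectivity: if $\phi(v_1)=\phi(v_2)$ then $v_1-v_2\in\mathcal{L}$, so $v_1$ has the two decompositions $v_1=v_1+0=v_2+(v_1-v_2)$ into a $V$-part plus an $\mathcal{L}$-part; uniqueness of the partition representation forces $v_1=v_2$. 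Bijectivity of $\phi|_V$ then gives $|G|=|V|$, so $G$ is a finite abelian group of the required order.

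The argument is essentially routine once the object $Z^n/\mathcal{L}$ is identified, so there is no deep obstacle; the point requiring the most care is the bookkeeping of the ``exactly once'' condition in both directions, since it is this uniqueness, and not mere covering, that forces injectivity of $\phi|_V$ and hence the equality $|G|=|V|$. A secondary point worth stating cleanly is that finiteness of $V$ guarantees $\mathcal{L}$ has finite index $|V|$ in $Z^n$, so that $G$ is genuinely a finite group rather than merely an abelian quotient.
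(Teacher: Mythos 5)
Your argument is correct and complete. Note that the paper itself gives no proof of this theorem: it is quoted from the reference on diameter perfect Lee codes and used as a black box, so there is nothing to compare against line by line. Your proof is the standard one, and it is the argument one would expect behind the cited result: identify $G$ with the quotient $Z^{n}/\mathcal{L}$ in the forward direction, and take $\mathcal{L}=\ker(\phi)$ in the reverse direction, with the ``exactly once'' condition of the tiling translating precisely into bijectivity of $\phi$ restricted to $V$. You handle the two directions cleanly, and you are right to flag that it is the uniqueness of the decomposition $w=v+l$ (not mere covering) that forces injectivity of $\phi|_{V}$ and hence $\left\vert G\right\vert =\left\vert V\right\vert$. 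The only cosmetic remark is that surjectivity of $\phi$ in the ($\Leftarrow$) direction is automatic from the assumption that $\phi|_{V}$ is a bijection onto $G$, which you implicitly use in the covering step; stating that explicitly would make the write-up airtight.
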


\noindent As an immediate consequence we get:

\begin{corollary}
\label{20} Let $\phi :Z^{n}\rightarrow Z_{2n+1}$, the cyclic group of order $%
2n+1,$ be a homomorphism so that, for all $1\leq i<j\leq n,$ $\phi (e_{i})$
is not an inverse element to $\phi (e_{j}),$ that is $\phi (e_{i})\neq -\phi
(e_{j}).$ Then $\{S_{n,1}+\mathbf{u,u}\in \ker \phi \}$ is a lattice tiling
of $Z^{n}$ by $S_{n,1}.$
\end{corollary}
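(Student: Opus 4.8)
The plan is to derive this directly from Theorem \ref{14} by taking $V=S_{n,1}$ and $G=Z_{2n+1}$. Since a cross consists of $2n+1$ cubes, $S_{n,1}=\{O\}\cup\{\pm e_{i}:1\leq i\leq n\}$ has exactly $2n+1=|Z_{2n+1}|$ elements, so the order requirement of Theorem \ref{14} (an abelian group $G$ of order $|V|$) is met automatically, and $Z_{2n+1}$ is of course abelian. Thus the entire task reduces to checking that the restriction $\phi|_{S_{n,1}}$ is a bijection; once this is verified, Theorem \ref{14} immediately yields that $\{S_{n,1}+l:l\in\ker\phi\}$ is a lattice tiling of $Z^{n}$ by $S_{n,1}$, which is precisely the asserted conclusion, with no further work.

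Because the domain $S_{n,1}$ and the codomain $Z_{2n+1}$ are finite sets of the same cardinality, it suffices to prove that $\phi|_{S_{n,1}}$ is injective, i.e. that the $2n+1$ images $\phi(O)=0$ and $\phi(\pm e_{i})=\pm\phi(e_{i})$ are pairwise distinct. Here I would exploit the structure of $Z_{2n+1}$: since $2n+1$ is odd, $0$ is the only element equal to its own inverse, so the $2n$ nonzero elements split into exactly $n$ inverse-pairs $\{x,-x\}$. The displayed condition $\phi(e_{i})\neq-\phi(e_{j})$ for $i<j$ is exactly what prevents the two distinct words $e_{i}$ and $-e_{j}$ from receiving the same image (the cross-term collision the hypothesis is designed to rule out), and read together with the companion injectivity conditions $\phi(e_{i})\neq 0$ and $\phi(e_{i})\neq\phi(e_{j})$ for $i\neq j$ it guarantees that each $\{\phi(e_{i}),-\phi(e_{i})\}$ is a distinct nonzero inverse-pair. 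Consequently $\phi$ maps $\{\pm e_{i}\}$ bijectively onto the $n$ inverse-pairs, hence onto $Z_{2n+1}\setminus\{0\}$, while $O\mapsto 0$; so $\phi|_{S_{n,1}}$ is a bijection.

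I do not expect a genuine obstacle, and indeed the statement is flagged as an immediate consequence: the only real content is the bijection check above. The single point deserving care is to record the full list of inequalities that injectivity requires, since the hypothesis written out is precisely the non-inverse (cross-term) condition and should be understood alongside the requirement that the $\phi(e_{i})$ be nonzero and mutually distinct, equivalently that the $\phi(e_{i})$ form a transversal of the $n$ inverse-pairs of $Z_{2n+1}$. Granting this, the corollary follows by invoking Theorem \ref{14}, and the explicit tiling $\{S_{n,1}+l:l\in\ker\phi\}$ is read off directly from its ``in addition'' clause.
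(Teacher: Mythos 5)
Your proposal is correct and is exactly the derivation the paper intends: the paper offers no written proof (the corollary is stated as an immediate consequence of Theorem \ref{14}), and the only content is the check that $\phi$ restricted to $S_{n,1}=\{O\}\cup\{\pm e_{i}\}$ is a bijection onto $Z_{2n+1}$, which you carry out correctly via the inverse-pair decomposition of the $2n$ nonzero elements. You also rightly flag that the hypothesis as literally printed (only $\phi(e_{i})\neq-\phi(e_{j})$ for $i<j$) must be read together with $\phi(e_{i})\neq0$ and $\phi(e_{i})\neq\phi(e_{j})$ to force injectivity --- a genuine imprecision in the paper's statement that your write-up handles appropriately.
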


\noindent We note that tiling of $R^{n}$ by crosses given in \cite{H1} and
other papers is a lattice tilling. Therefore these tilings can be seen as
obtained by Corollary \ref{20}.\bigskip

\noindent Let $\mathcal{L}$ be a collection of crosses that tile $R^{n}.$ We
will always assume wlog that the cross $K_{O}$ centered at the origin
belongs to $\mathcal{L}$. Then each cross $K\in \mathcal{L}$ can be seen as
a translation of $K_{O}$ by a vector $\mathbf{u}$. So $\mathcal{L}=\{K_{O}+%
\mathbf{u,u}\in \mathcal{T}_{\mathcal{L}}\}.$ For the sake of brevity we
will use $K_{\mathbf{u}}$ for a cross centered at a point $U=O+\mathbf{u}.$

\section{Proof of Theorem \protect\ref{3}}

\noindent In this section we provide a proof of Theorem \ref{3}. The
following lemma will be the key ingredient of the proof. We recall that by
Theorem \ref{M} there is a unique lattice tiling of $R^{n}$ by crosses when $%
2n+1$ is a prime.

\begin{lemma}
\label{LL}Let $2n+1$ be a prime, and let $\mathcal{D}$ be a unique lattice
tiling of $R^{n}$ by crosses. If $\mathcal{K}$ is a cross in $\mathcal{D}$,
then shifting $\mathcal{K}$ along any axis will cause that all crosses of $%
\mathcal{D}$ will be shifted as well.
\end{lemma}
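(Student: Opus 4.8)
The plan is to fix the axis to be $e_{j}$ (any axis serving, by the coordinate symmetry of Claim \ref{0}) and to realize $\mathcal{D}$, via Corollary \ref{20}, as the lattice tiling whose cross–centres form $\ker\phi$ for a homomorphism $\phi:Z^{n}\rightarrow Z_{2n+1}$ whose restriction to $S_{n,1}$ is a bijection; in particular every $\phi(e_{i})\neq 0$, and since $2n+1$ is prime every $\phi(e_{i})$ generates $Z_{2n+1}$. I would encode a simultaneous shift of the crosses along $e_{j}$ by a function $\sigma$ assigning to each cross $K$ the real number $\sigma(K)$ by which it is translated, so that $K\mapsto K+\sigma(K)e_{j}$, and reformulate the lemma as: if the translated crosses still tile $R^{n}$, then $\sigma$ is constant. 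Shifting $\mathcal{K}$ by a nonzero amount then drags every cross by that same amount.

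Step 1 (rigidity along the shift axis). First I would establish the one local forcing that is truly safe: if two unit cubes $C_{p}$ and $C_{p+e_{j}}$, adjacent in the $e_{j}$–direction, belong to crosses $K_{a}$ and $K_{b}$, then $\sigma(K_{a})=\sigma(K_{b})$. After the shift these two cubes occupy $e_{j}$–intervals sharing the transverse footprint $\{\,|x_{i}-p_{i}|\le 1/2:\ i\neq j\,\}$; any inequality $\sigma(K_{a})\neq\sigma(K_{b})$ leaves either an overlap or an uncovered slab there, and no other cube can intrude, since each cube moves only along $e_{j}$ and so never leaves its own transverse footprint. Because each line $\ell_{w}=\{\,w+te_{j}\,\}$ is connected by $e_{j}$–adjacencies, all crosses meeting a fixed column $\ell_{w}$ receive a common value; call it $\tau(w)$, a function of the column $w\in Z^{n-1}$ (the coordinates other than $j$).

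Step 2 (linking the columns). The remaining, and main, difficulty is that lateral adjacencies do not force equality on their own: a gap opened on a side face can, a priori, be filled by the cube stacked above it. I would sidestep this by using whole crosses, rather than single faces, as connectors. Writing $\pi:Z^{n}\to Z^{n-1}$ for the projection deleting the $j$–th coordinate and $\bar x$ for $\pi(x)$, a cross centred at $c$ meets exactly the columns $\{\bar c\}\cup\{\bar c\pm\bar e_{i}:\ i\neq j\}$, an $(n-1)$–dimensional cross around $\bar c$; since the whole cross carries a single shift, $\tau$ is constant on all of them, so $\tau(w)=\tau(w\pm\bar e_{i})$ whenever some cross projects to $w$. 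Here primality enters decisively: because $\phi(e_{j})$ generates $Z_{2n+1}$, each $w\in Z^{n-1}$ lifts to an element of $\ker\phi$ after adjusting its $j$–coordinate by a suitable multiple of $e_{j}$, whence $\pi(\ker\phi)=Z^{n-1}$ and every point of $Z^{n-1}$ is a projected centre.

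Step 3 (conclusion). It then follows that $\tau(w)=\tau(w+\bar e_{i})$ for every $w\in Z^{n-1}$ and every $i\neq j$; as the steps $\bar e_{i}$ generate $Z^{n-1}$, the function $\tau$, and therefore $\sigma$, is constant. Hence no cross can be shifted along $e_{j}$ in isolation, which is the assertion of the lemma. I expect Step 2 to be the crux, both in pinning down exactly which columns a cross meets and in the verification that primality is precisely what upgrades the per–column rigidity of Step 1 to global rigidity; the composite case is exactly where $\pi(\ker\phi)$ can miss points of $Z^{n-1}$ and the argument is expected to fail, in agreement with the existence of non-regular tilings when $2n+1$ is not prime.
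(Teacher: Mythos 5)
Your proof is correct, but it reaches the conclusion by a genuinely different route than the paper. The paper also realizes $\mathcal{D}$ as $\{S_{n,1}+\mathbf{u},\ \mathbf{u}\in\ker\phi\}$, but it then constructs an explicit basis $\mathbf{v}_{i}=\mathbf{e}_{i}-ij^{-1}\mathbf{e}_{j}$ ($i\neq j$), $\mathbf{v}_{j}=(2n+1)\mathbf{e}_{j}$ of $\ker\phi$, chosen so that each cross $K_{\mathbf{v}_{i}}$ owns a unit cube lying in the same column (line parallel to $\mathbf{e}_{j}$) as the central cube of $K_{O}$; column rigidity then forces all $n$ basis crosses to move with $K_{O}$, and the claim propagates to an arbitrary $K_{\mathbf{u}}$ by translation invariance of the lattice and induction over the integer combination $\mathbf{u}=\sum\alpha_{i}\mathbf{v}_{i}$. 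You instead quotient out the $j$-th coordinate: column rigidity produces a function $\tau$ on $Z^{n-1}$, each cross forces $\tau$ to be constant on an $(n-1)$-dimensional cross of columns, and surjectivity of the projection $\pi(\ker\phi)=Z^{n-1}$ makes $\tau$ globally constant. Primality enters in both arguments, but at different points: the paper needs $j$ invertible in $Z_{2n+1}^{\ast}$ to build its basis, while you need $\phi(e_{j})$ to generate $Z_{2n+1}$ so that every column contains a centre. Your version is arguably more transparent about \emph{why} the lemma fails for composite $2n+1$ --- the projection of the kernel is then a proper subgroup, leaving whole classes of columns unlinked, which is exactly how Szab\'{o}'s prism constructions arise --- and your Step 1 makes explicit the column-rigidity fact that the paper uses silently (that shifting one cube in a column drags the whole column). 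One shared caveat, present in the paper's proof as well as yours: column rigidity is only valid for a \emph{continuous} slide (for a discrete relocation two unit intervals in a column could in principle swap), so both arguments implicitly read ``shifting'' as a continuous deformation, which is indeed how the lemma is applied in the proof of Theorem \ref{3}.
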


\begin{proof}
As $\mathcal{D}$ is a lattice tiling it suffices to prove the statement for
the cross $K_{O}.$\bigskip

\noindent Consider the homomorphism $\phi :Z^{n}\rightarrow Z_{2n+1}$ given
by $\phi (\mathbf{e}_{i})=i$ for all $i=1,...,n.$ Then, by Corollary \ref{20}%
, $\phi $ induces a lattice tiling $\mathcal{D}=\mathcal{\{}S_{n,1}+\mathbf{u%
}$\textbf{$,$}$\mathbf{u}\in \ker (\phi )\}$ of $R^{n}$ by crosses.
\noindent Let $j$, $1\leq j\leq n,$ be fixed. We will prove that shifting
the cross $K_{O}$ along the $j$-th axis would shift all crosses in $\mathcal{%
D}$. We start with describing vectors $\mathbf{v}_{1}\mathbf{,...,v}_{n}$
that form a basis of the lattice $\ker (\phi ).$ Let $j^{-1}$ be the element
inverse to $j$ in the multiplicative abelian group $Z_{2n+1}^{\ast }.$ For
each $i=1,...,n,$ $i\neq j,$ we set $\mathbf{v}_{i}=\mathbf{e}_{i}-ij^{-1}%
\mathbf{e}_{j},$ and $\mathbf{v}_{j}=(2n+1)\mathbf{e}_{j}.$ Clearly, $\phi (%
\mathbf{v}_{i})=0,$ that is, $\mathbf{v}_{i}\in \ker (\phi ).$ Indeed, for $%
i\neq j,$ $\phi (\mathbf{v}_{i})=\phi (\mathbf{e}_{i}-ij^{-1}\mathbf{e}%
_{j})=\phi (\mathbf{e}_{i})-ij^{-1}\phi (\mathbf{e}_{j})=(i-ij^{-1}j)\text{mod}
(2n+1)=0,$ and $\phi (\mathbf{v}_{j})=\phi ((2n+1)\mathbf{e}_{j})=(2n+1)j%
\text{mod}(2n+1)=0.$ Let $A$ be the matrix whose rows are vectors $\mathbf{v}%
_{1}\mathbf{,...,v}_{n}.$ It is easy to calculate $\det A$ as the rows and
columns of $A$ can be permuted such that the resulting matrix is a lower
triangular having $(2n+1,1,1,...,1)$ as its diagonal entries. Therefore, $%
\det A=2n+1,$ which in turn implies that $\mathbf{v}_{1}\mathbf{,...,v}_{n}$
form a basis of the lattice $\ker (\phi ).$\bigskip\ 

\noindent Assume that the cross $K_{O\text{ }}$has been shifted along the $j$%
-th axis. Then this will cause that the cross $K_{\mathbf{v}_{i}}$, $%
i=1,...,n,$ will be shifted as well. Indeed, for $i\neq j,$ the cross $K_{%
\mathbf{v}_{i}}$ contains the unit cube $C_{i}$ centered at $\mathbf{v}_{i}-%
\mathbf{e}_{i}=-ij^{-1}\mathbf{e}_{j}$ (centered at $2n\mathbf{e}_{j}$ for $%
i=j$); that is, the center of $C_{i}$ lies on $j$-th axis. Further, the
cross $K_{O}$ contains the cube $C_{O}$ centered at $O.$ Thus, when shifting 
$K_{O}$ along the $j$-th axis we shift the cube $C_{O}$ along this axis, and
this will cause the cube $C_{i}$ to get shifted; i.e., the cross $K_{\mathbf{%
v}_{i}}$ will be shifted along the $j$-th axis for all $i=1,...,n$. Consider
now a cross $K_{\mathbf{u}}$ in $\mathcal{D}.$ As $\mathcal{D}$ is a lattice
tiling, the above proved statement is true for any cross $K_{\mathbf{u}}.$
Hence:\bigskip

\noindent \textbf{Claim A.} Shifting the cross $K_{\mathbf{u}}$ along the $j$%
-th axis will cause shifting the cross $K_{\mathbf{u+v}_{i}}$ for all $%
i,1\leq i\leq n.$\bigskip

\noindent With this claim in hand it is easy to provide the closing argument
of our proof. Let $K_{\mathbf{u}}\in \mathcal{D}$. We will prove that
shifting the cross $K_{O}$ along the $j$-th axis will cause that the cross $%
K_{\mathbf{u}}$ will be shifted as well. Since $K_{\mathbf{u}}\in \mathcal{D}
$, it is $\mathbf{u}\in \ker (\phi ),$ and because $\mathbf{v}_{1}\mathbf{%
,...,v}_{n}$ form a basis of $\ker (\phi ),$ $\mathbf{u}$ can be written as
a linear combination $\mathbf{u}=\alpha _{1}\mathbf{v}_{1}+...+\alpha _{n}%
\mathbf{v}_{n},$ where $\alpha _{i}\in Z$ for all $i.$ So to finish the
proof it suffices to apply repeatedly Claim A.\bigskip
\end{proof}

\noindent Now we are ready to prove Theorem \ref{3}.\bigskip

\begin{proof}
\textbf{of Theorem }\ref{3}. Let $\mathcal{L}$ $=\{K_{O}+\mathbf{u}$\textbf{$%
,$}$\mathbf{u}\in \mathcal{U}\}$ be a non-regular tiling of $R^{n}$. Then
there is $i,1\leq i\leq n,$ and a vector $\mathbf{u}=(u_{1},...,u_{n})\in 
\mathcal{U}$ such that $u_{i}$ is not an integer. Let $\alpha \in (0,1)$ be
the fractional part of $u_{i}$. \ Denote by $\mathcal{U}_{\alpha }^{i}$ the
set of all vectors $\mathbf{v}=(v_{1},...,v_{n})$ in $\mathcal{U}$ such that 
$v_{i}-\left\lfloor v_{i}\right\rfloor =\alpha .$ It is known, see e.g. \cite%
{Stein}, that the collection of crosses $K_{\mathbf{u}}\mathbf{,u\in }%
\mathcal{U}_{\alpha }^{i}$ forms a prism $\mathcal{P}$ along the $i$-th
axis; i.e., if a point $X\in \mathcal{P}$ then, for all $c\in R,$ also the
point $X+c\mathbf{e}_{i}\in \mathcal{P}.$ Hence, shifting all crosses $K_{%
\mathbf{v}}\mathbf{,v\in }\mathcal{U}_{\alpha }^{i}$ by any vector $\mathbf{w%
}$ parallel to $\mathbf{e}_{i},$ independently on other crosses in $\mathcal{%
L}$, results in a new tiling of $R^{n}$ by crosses, see e.g. \cite{Stein} or 
\cite{S}. Moreover, if $\mathbf{w}=(m-\alpha )\mathbf{e}_{i},m\in Z,$ then
the shift results in a tiling where all crosses $K_{\mathbf{v}}\mathbf{,v\in 
}\mathcal{U}_{\alpha }^{i}$ are now centered at points with the $i$-th
coordinate being an integer. Repeatedly applying this procedure to other
crosses that have a non-integer coordinate, we arrive at a $Z$-tiling $%
\mathcal{L}^{\ast }$ of $R^{n}$ by crosses. Since we have started with a
non-regular tiling $\mathcal{L}$, there is a proper subset $\mathcal{C\ }$of 
$\mathcal{L}^{\ast }$ of crosses so that $\mathcal{C}$ comprises a prism
along one of the axis.\bigskip

\noindent By Lemma \ref{LL}, if the lattice tiling $\mathcal{D}$ contains a
prism along any axis, this prism constitutes all crosses in $\mathcal{D}$.
Therefore the above tiling $\mathcal{L}^{\ast }$ is not congruent to the
tiling $\mathcal{D}$. However, this contradicts our assumption that there is
a unique $Z$-tiling of $R^{n}$ by crosses. The proof of Theorem \ref{3} is
complete.
\end{proof}

\section{Proof of Theorem \ref{1}}

\noindent Let $\mathcal{L}$ be a $Z$-tiling of $R^{n}$ by crosses, and let
$\mathcal{T}_{\mathcal{L}}\mathcal{\subset}Z^{n}$ be the set of centers of
crosses in $\mathcal{L}$. Since we will deal only with $Z$-tilings by crosses
most of the time we will drop $Z$- and refer to $\mathcal{L}$ as a tiling of
$R^{n}$ by crosses. We use the terminology of coding theory; that is, the
elements of $Z^{n}$ will be called words and the elements of $\mathcal{T}%
_{\mathcal{L}}$ will be called codewords. In
this section we provide a complete proof of Theorem \ref{1}.\bigskip

\noindent As mentioned in the introduction Molnar \cite{Mol} proved that the
number of non-congruent lattice tilings of $R^{n}$ by crosses equals the
number of non-isomorphic abelian groups of order $2n+1.$ As $2n+1$ is a
prime for $n=5,$ there is only one abelian group of order $11,$ and thus
there is a unique, up to congruence, lattice tiling of $R^{5}$ by crosses.
Thus, to prove the main result it suffices to show:

\begin{theorem}
\label{33}Let $\mathcal{L}$ be a tiling of $R^{5}$ by crosses. Then $\mathcal{%
L}$ is a lattice tiling.
\end{theorem}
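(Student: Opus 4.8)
The plan is to show that the codeword set $\mathcal{T}$, normalized so that $O\in\mathcal{T}$, is in fact a subgroup of $Z^{5}$ of index $11$; equivalently, to produce a homomorphism $\phi\colon Z^{5}\rightarrow Z_{11}$ whose restriction to the $11$ cells of $S_{5,1}$ is a bijection and for which $\mathcal{T}=\ker\phi$. By Theorem \ref{14} and Corollary \ref{20} such a $\phi$ is exactly a lattice tiling, and by Theorem \ref{M} (Moln\'{a}r) there is only one of these up to congruence, since $11$ is prime. So the whole content is the rigidity statement: an arbitrary $Z$-tiling cannot avoid this lattice structure. By Claim \ref{0} I may translate so that $K_{O}\in\mathcal{T}$, and freely permute coordinates and flip signs to normalize the first few forced crosses.

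First I would set up the forcing engine. Because $\mathcal{T}$ is an exact cover of $Z^{5}$ by translates of the $11$-cell sphere $S_{5,1}$, every word lies in a unique cross, and a cross is determined by its center. Starting from $K_{O}$ (which covers $O$ and the ten arms $\pm e_{i}$), I would process the uncovered words in a fixed order --- say by Manhattan distance from $O$ and then lexicographically --- and, for each, enumerate the admissible centers of the cross that must cover it. Covering the "second shell" words $2e_{i}$, $e_{i}+e_{j}$, $e_{i}-e_{j}$ already imposes strong constraints, since an arm of the covering cross cannot reuse a cell of an already-placed cross. Tracking these incompatibilities is what cuts the branching down; this local analysis is the part that is insensitive to the exact value of $n$ and should survive for all $n\equiv 2\pmod{3}$.

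The arithmetic input enters when I translate the local incidence conditions into $Z_{11}$. If the tiling were lattice with multiplier vector $(a_{1},\dots,a_{5})=(\phi(e_{1}),\dots,\phi(e_{5}))$, then $\{\pm a_{i}\}$ would be a splitting of $Z_{11}^{*}$; the forcing argument should show that the residues attached to neighbouring crosses are forced to obey exactly the relations $\phi(e_{i})=i$ (after normalization), and it is here that $11$ being prime --- so that inverses are available in $Z_{2n+1}^{*}$, as used in Lemma \ref{LL} --- together with $2n+1\equiv 2\pmod{3}$ are what make the count of admissible completions collapse to a single one. I would then upgrade the forced local pattern to a global one by establishing periodicity: show the forced configuration repeats, so that a bounded fundamental region determines all of $\mathcal{T}$ (consistent with the $11$-periodicity noted after Theorem \ref{1}), and conclude $\mathcal{T}=\mathcal{D}$, the lattice tiling.

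The main obstacle, and where essentially all the length of the argument will go, is the case analysis controlling how the arms of adjacent crosses interlock. The real danger is a "sheared" continuation: a locally legal placement that agrees with the lattice tiling on a slab but then drifts, producing an aperiodic or otherwise non-lattice cover. Ruling these out requires propagating the constraints far enough to force a contradiction and organizing the branches so that the analysis is finite; phrased contrapositively (the paper's "backward" viewpoint) one shows a hypothetical non-lattice completion is inconsistent with the cover condition. For $n=5$ this is tractable by hand, aided by the collapse to one residue pattern, but it is precisely the step that does not yet generalize beyond the uniform second-shell phase --- which is why the theorem is proved only for $n=5$ while the early phase holds for all $n\equiv 2\pmod{3}$.
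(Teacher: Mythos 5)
Your proposal is a strategy outline rather than a proof: every step that carries actual mathematical weight is deferred with phrases like ``the forcing argument should show,'' ``I would then upgrade,'' and ``this is tractable by hand.'' The central difficulty --- why the shell-by-shell branching collapses to a single continuation, and why no ``sheared'' locally-legal-but-globally-non-lattice cover exists --- is exactly the content of the theorem, and you have given no mechanism for controlling it. A naive enumeration of admissible centers for the second-shell words does not obviously terminate in a manageable case tree: even the first nontrivial fact, that no codeword of type $[3^{1}]$ occurs (i.e.\ that $2e_{i}$ is never covered by a cross centered at $3e_{i}$), is not a local forcing consequence; the paper obtains it only from a system of global and local double-counting identities combined with parity and mod-$3$ arguments (this is where $n\equiv 2\pmod 3$ enters), and then needs the full count of codewords of every type in the $3$-, $4$- and $5$-neighborhoods (Theorems \ref{6}--\ref{11}) before the structural classification of the $5$-neighborhood (Theorem \ref{12}, which itself rests on the nonexistence of triangle decompositions of the complements of certain cubic graphs) becomes finite. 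The paper then closes not by exhibiting periodicity but by showing all $5$-neighborhoods are \emph{equal} and symmetric (Theorem \ref{13} and Phase D), which yields closure of $\mathcal{T}$ under differences and hence the lattice property directly. None of this machinery, or a substitute for it, appears in your proposal.

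There is also a conceptual misstep in how you propose to bring in the arithmetic. You speak of ``the residues attached to neighbouring crosses'' obeying forced relations in $Z_{11}$, but a residue labelling of crosses only exists once the homomorphism $\phi:Z^{5}\rightarrow Z_{11}$ --- equivalently the lattice structure --- has been established; that is the conclusion, not an available tool, so this part of the plan is circular as stated. Likewise, the primality of $11$ and the invertibility in $Z_{2n+1}^{*}$ used in Lemma \ref{LL} belong to the proof of Theorem \ref{3} (nonexistence of non-regular tilings) and to Moln\'{a}r's uniqueness count; they play no role in the rigidity argument for Theorem \ref{33} itself, whose arithmetic input is entirely the congruence $n\equiv 2\pmod 3$. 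To turn your outline into a proof you would need to either reconstruct the counting lemmas of Phases A and B (or an equivalent) to prune the search, and then actually carry out and verify the finite case analysis you are deferring.
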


\noindent Let $W$ be a codeword in
$\mathcal{T}_{\mathcal{L}}$. Then $N_{k}(W)$, the $k$-neighborhood of $W,$
will be the set of codewords $V$ in $\mathcal{T}_{\mathcal{L}}$ at the
distance at most $k$ from $W,$ that is, $N_{k}(W)=\{V\in\mathcal{T}%
_{\mathcal{L}},$ $\rho_{M}(W,V)\leq k\}.$ In the case of $W = O$, we will write $N_{k}$ instead of $N_{k}(O)$. We will say that two $k$%
-neighborhoods $N_{k}(W)$ and $N_{k}(W^{\prime})$ are equal if $\{V-W,V\in
N_{k}(W)\}=\{V-W^{\prime},V\in N_{k}(W^{\prime})\};$ and we will say that
$N_{k}(W)$ and $N_{k}(W^{\prime})$ are congruent if there is a linear distance
preserving transformation mapping $N_{k}(W)$ on $N_{k}(W^{\prime})$. Clearly,
for each codeword $W$, the neighborhoods $N_{1}(W),$ and $N_{2}(W) $ are empty
sets. \bigskip

\noindent The proof of Theorem \ref{33} will be based on:

\begin{theorem}
\label{4}Let $\mathcal{L}$ be a tiling of $R^{5}$ by crosses. Then, for each
codeword $W$ in $\mathcal{T}_\mathcal{L}$, \ the neighborhood $N_{3}(W)$ and $N_{3}(O)$
are equal, and $N_{3}(O)$ is symmetric; that is, if $W\in N_{3}(O)$ then $%
-W\in N_{3}(O)$ as well.
\end{theorem}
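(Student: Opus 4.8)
The plan is to reduce the statement to the rigidity of how the tiling covers the words at Manhattan distance exactly $2$ from a codeword, and then to show that this local configuration is forced. First I would record the standing fact that any two distinct codewords are at distance at least $3$: if $\rho_M(W,W')\le 2$ their crosses would share a cube, contradicting that $\mathcal{T}$ tiles. Hence $N_1(W)$ and $N_2(W)$ are empty, and, for a word $V$ with $\rho_M(O,V)=2$, any codeword $C$ covering $V$ satisfies $\rho_M(O,C)\le\rho_M(O,V)+\rho_M(V,C)\le 3$ together with $\rho_M(O,C)\ge 3$, so $\rho_M(O,C)=3$. Thus the nontrivial part of $N_3(O)$ is exactly the set of codewords that cover the $50$ words at distance $2$ from $O$.

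Next I would classify these $50$ words as the $10$ \emph{axial} words $\pm 2\mathbf{e}_i$ and the $40$ \emph{diagonal} words $\pm\mathbf{e}_i\pm\mathbf{e}_j$, and classify the distance-$3$ codewords by type: type I $(\pm 3\mathbf{e}_i)$, type II $(\pm 2\mathbf{e}_i\pm\mathbf{e}_j)$, and type III $(\pm\mathbf{e}_i\pm\mathbf{e}_j\pm\mathbf{e}_k)$. A direct check shows that a type-I codeword covers exactly one axial and no diagonal word, a type-II codeword covers exactly one axial and one diagonal word, and a type-III codeword covers no axial and exactly three diagonal words. Writing $a,b,c$ for the number of codewords of each type in $N_3(O)$, disjointness of the covering gives
\[
a+b=10,\qquad b+3c=40,
\]
whose nonnegative integer solutions are $(a,b,c)\in\{(0,10,10),(3,7,11),(6,4,12),(9,1,13)\}$, since $c=10+a/3$ forces $3\mid a$.

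The decisive and hardest step is to rule out type-I codewords, i.e. to prove $a=0$, which pins the counts to $(0,10,10)$. For the unique lattice tiling this is the arithmetic fact that $3i\not\equiv 0\pmod{11}$ for $1\le i\le 5$ — precisely where the hypothesis $n\equiv 2\pmod 3$ (equivalently $3\nmid 2n+1$) enters — but for an arbitrary tiling it must be argued combinatorially: assuming $\pm 3\mathbf{e}_i$ is a codeword, I would track the forced covering of the words $2\mathbf{e}_i\pm\mathbf{e}_k$ and $4\mathbf{e}_i\pm\mathbf{e}_k$ surrounding it and derive a clash with the covering of the remaining axial and diagonal words. With $a=0$ in hand, a case analysis on which type-II codeword $2\mathbf{e}_i\pm\mathbf{e}_k$ covers each axial word, together with the distribution of the diagonal words between types II and III, should force the shell uniquely up to permutation of coordinates and change of signs.

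Finally, symmetry and equality are soft consequences of uniqueness. The covering constraints are invariant under $V\mapsto -V$, since Lee spheres and $\rho_M$ are symmetric; hence if $S$ is a possible centered shell then so is $-S$, and uniqueness forces $S=-S$, giving the symmetry of $N_3(O)$. The same forcing applies verbatim after translating any codeword $W$ to the origin, so $\{V-W: V\in N_3(W)\}$ satisfies the identical constraints; to upgrade \emph{congruent} to the literal \emph{equal} demanded by the statement I would use a propagation argument, noting that codewords whose shells overlap must label their shared coordinate directions consistently, so the one centered set is constant over the connected tiling. The main obstacle I anticipate is the combinatorial core of the third paragraph — eliminating type-I codewords and pinning the type-II/III arrangement without recourse to the lattice/homomorphism machinery, which is unavailable for an a priori arbitrary tiling $\mathcal{T}$.
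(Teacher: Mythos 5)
Your opening reduction is sound and coincides with the paper's starting point: your two identities $a+b=10$ and $b+3c=40$ are exactly equations (\ref{p}) and (\ref{pp}) specialized to $n=5$, and the classification of the $50$ distance-$2$ words and of the three codeword types is correct. But both of the steps you yourself flag as decisive are genuine gaps, not just deferred computations. First, you never actually eliminate the type-I codewords. In the paper this is Theorem \ref{6}, and it is not a short clash argument confined to the distance-$2$ shell: it needs the local (per signed coordinate) forms of the counting identities, the analogous covering identities for the words of absolute value $3$ (which bring in the codeword types $[4^1]$, $[3^1,1^1]$, $[2^2]$, $[2^1,1^2]$, $[1^4]$ of the $4$-neighborhood), and a mod-$3$/double-counting comparison of $|[3^1,1^1]_i^{(1)}|$ with $|[3^1,1^1]_i^{(3)}|$ summed over all signed coordinates. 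This is exactly where $n\equiv 2\pmod 3$ enters combinatorially. Your sketch ("track the forced covering of $2\mathbf{e}_i\pm\mathbf{e}_k$ and $4\mathbf{e}_i\pm\mathbf{e}_k$ and derive a clash") gives no indication of how the contradiction arises, and the paper's own argument shows it does not arise within the $3$-shell alone.

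Second, and more fundamentally, the claim that with $a=0$ a case analysis on the distance-$3$ shell "forces the shell uniquely up to permutation of coordinates and change of signs" is false if the analysis stays inside $N_3$. In the paper's Theorem \ref{12}, the ten words of type $[1^2]$ covered by the type-II codewords form a $2$-factor of $K_{10}$ minus a perfect matching, consisting of two $5$-cycles; after normalizing one cycle there are four non-isomorphic completions (prism, Petersen, and two further cubic graphs). Two of these do die inside $N_3$ (the complement admits no triangle decomposition, so the type-III codewords cannot be placed), but the Petersen configuration supports a fully consistent $3$-neighborhood and is only excluded by showing that the codewords of types $[1^4]$, $[2^1,1^2]$ and $[2^1,1^3]$ --- i.e.\ the $4$- and $5$-neighborhoods --- cannot be completed around it. This is precisely why the paper proves the stronger Theorem \ref{00} about $5$-neighborhoods and deduces Theorem \ref{4} from it; a proof that never leaves the distance-$3$ shell cannot close. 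Your final propagation step (upgrading congruence of shells to literal equality) is likewise only a hope as stated: the paper needs the additivity and symmetry of the canonical neighborhood (Theorem \ref{13}, obtained by embedding it in the lattice tiling) together with the explicit verifications of Claim D to show that adjacent codewords share the same set of type-$[2^1,1^1]$ differences, and only then does uniqueness of the $5$-neighborhood given those differences let the equality propagate.
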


\noindent Now we show that the above theorem implies Theorem \ref{33}.\bigskip

\begin{proof}
\textbf{of Theorem \ref{33}}. To show that $\mathcal{L}$ is a lattice tiling
it suffices to prove that, for all codewords $W,Z\in \mathcal{T}_\mathcal{L}$, $W-Z\in 
\mathcal{T}_\mathcal{L}$ as well. As $\mathcal{L}$ is a tiling by crosses, it is not
difficult to see that, for each codeword $Z\in \mathcal{T}_\mathcal{L}$, there is a
sequence $Z_{0}=O,Z_{1},...,Z_{m-1,}Z_{m}=Z$ of codewords in $\mathcal{T}_\mathcal{L}$
such that $\rho _{M}(Z_{i-1},Z_{i})=3,i=1,...,m.$ Then $Z_{i}\in
N_{3}(Z_{i-1})$ and because, by Theorem \ref{4}, the -neighborhoods $%
N_{3}(Z_{i-1})$ and $N_{3}(O)$ are equal, which in turn implies, again by
Theorem \ref{4}, that $-U_{i}\in \mathcal{T}_\mathcal{L}$ as well for all $i=1,...,m.$
Repeatedly applying Theorem \ref{4} we get that $W-U_{1},W-U_{1}-U_{2},$ $%
...,W-U_{1}-U_{2}-...-U_{m}=$ $%
W-(Z_{1}-O)-(Z_{2}-Z_{1})-...-(Z_{m-1}-Z_{m-2})-(Z_{m}-Z_{m-1})=W-Z_{m}=W-Z$
is a codeword. The proof of Theorem \ref{33} is complete.\bigskip
\end{proof}

\noindent Hence, to prove the main result it suffices to prove Theorem \ref%
{4}. It turns out that in order to be able to do so one needs to look at
"wider" neighbourhoods. In fact, to be able to prove Theorem \ref{4} we will
have to prove the same type of a theorem for $5$-neighbourhoods. This
recalls a situation when one wants to prove a statement $P$ by using
mathematical induction, but to be able to prove the inductive step a
statement stronger than $P$ has to be proved.

\begin{theorem}
\label{00}Let $\mathcal{L}$ be a tiling of $R^{5}$ by crosses. Then, for
each $W$ in $\mathcal{T}_\mathcal{L}$, \ the neighborhood $N_{5}(W)$ and $N_{5}(O)$ are
equal, and $N_{5}(O)$ is symmetric.
\end{theorem}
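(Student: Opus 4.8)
The plan is to establish the stronger statement about $5$-neighborhoods directly, since Theorem~\ref{4} will follow by restriction once Theorem~\ref{00} is in hand. The underlying difficulty is that a tiling of $R^5$ by crosses is a purely combinatorial object a priori carrying no lattice structure, so I cannot assume periodicity; instead I must force the local structure around an arbitrary codeword to coincide with the local structure around $O$. The natural strategy is a rigidity argument: show that the packing condition (no word is covered twice, every word is covered once) is so restrictive in the neighborhood of any codeword that the configuration of nearby codewords is essentially determined. Concretely, I would fix a codeword $W$, translate so that $W=O$ (legitimate by Claim~\ref{0}), and attempt to reconstruct $N_5(O)$ step by step from the covering constraints.

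First I would record the elementary counting facts. A cross $S_{5,1}$ has $11$ cubes, and the Lee sphere of radius $k$ in $Z^5$ has a known cardinality; comparing the volume of the ball with the number of crosses that can meet it gives strong parity and counting restrictions on how many codewords lie at each distance $3,4,5$ from $O$. Since $N_1$ and $N_2$ are empty, the first codewords appear at distance exactly $3$, and each such codeword $V$ with $\rho_M(O,V)=3$ must have its cross interlock with $K_O$ in a forced way. I would classify the possible ``shapes'' of a distance-$3$ codeword (up to the sign changes and coordinate permutations of Claim~\ref{0}) and show which shapes are compatible with covering all the words adjacent to $K_O$. This is the combinatorial core, and I expect it to require a careful but finite case analysis on the coordinate patterns of the vectors in $N_3$, $N_4$, $N_5$.

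The main engine would then be to show that the words lying just outside $K_O$ (at distance $2$ from $O$, hence at distance $1$ from the tips of the cross) must be covered, and the only codewords close enough to cover them are precisely those at distance $3$; counting how many such covering codewords are needed against how many are geometrically possible pins down $N_3(O)$ uniquely up to congruence. Having fixed $N_3(O)$, I would propagate outward: the words at distances $4$ and $5$ impose further covering demands, and because the inner shell is already rigid, the outer codewords are forced in turn. The symmetry assertion ($W\in N_5(O)\Rightarrow -W\in N_5(O)$) I expect to drop out of this reconstruction, since the forced configuration is invariant under $V\mapsto -V$; alternatively it follows because the covering constraints are symmetric under the central inversion, which is itself a congruence by Claim~\ref{0}.

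The hard part will be the claim that $N_5(W)$ equals $N_5(O)$ for an \emph{arbitrary} codeword $W$, rather than merely that each is congruent to some fixed model. The congruences available in Claim~\ref{0} (permutations, sign flips, translations) do not obviously compose to carry $N_5(W)$ onto $N_5(O)$ as an identical set, so I anticipate needing to show that the reconstructed neighborhood is \emph{unique on the nose}---i.e., that the case analysis leaves no free sign or permutation once the orientation of $K_O$ is fixed---and that the same analysis applies verbatim after translating any $W$ to the origin. I expect this uniqueness, and in particular ruling out the spurious configurations that would exist in higher composite dimensions, to be where the primality of $2n+1=11$ (equivalently $n=5$, with $n\equiv 2 \pmod 3$) enters decisively, and to be the step demanding the most delicate bookkeeping.
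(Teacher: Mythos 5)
Your proposal correctly identifies the general shape of the argument --- translate $W$ to the origin, exploit the exact-covering condition to force the local structure, and propagate outward --- but it is a plan rather than a proof, and several of its load-bearing expectations are wrong. Most importantly, the claim that covering the words at distance $2$ from $O$ ``pins down $N_3(O)$ uniquely up to congruence'' does not hold: those covering demands only yield linear relations such as $\left\vert [3^{1}]\right\vert +\left\vert [2^{1},1^{1}]\right\vert =2n$ and $\left\vert [2^{1},1^{1}]\right\vert +3\left\vert [1^{3}]\right\vert =4\binom{n}{2}$, which have many nonnegative integer solutions. Even determining the \emph{type counts} in $N_3(O)$ requires the paper's full system of local and double-local covering equations reaching out to words of absolute value $5$, together with mod-$3$ arithmetic exploiting $n\equiv 2\pmod 3$ (this is precisely why the paper must prove the $5$-neighborhood statement in order to get the $3$-neighborhood one). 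Determining the actual \emph{structure}, not just the counts, requires a further genuinely nontrivial step that your sketch does not anticipate: encoding the codewords of type $[2^{1},1^{1}]$ and $[1^{3}]$ as a $2$-factor and a triangle decomposition in $K_{10}$ minus a perfect matching, and invoking the classification of which cubic leaves admit triangle-decomposable complements to eliminate all but one configuration (prism vs.\ Petersen graph vs.\ two other cubic graphs). Nothing in your outline would discover or replace this.

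The second genuine gap is the upgrade from congruence to equality of neighborhoods. Your hope that ``the case analysis leaves no free sign or permutation once the orientation of $K_O$ is fixed'' is not realizable: the local analysis only determines $N_5(W)$ up to a congruence, and a priori the neighborhood of a codeword adjacent to $O$ could be a rotated or reflected copy of $N_5(O)$. The paper closes this gap with a separate propagation argument (Claim D): it first proves that the canonical neighborhood sits inside the kernel of an explicit homomorphism $Z^5\to Z_{11}$, deduces the closure properties $U\mapsto -U$ and $(U,Z)\mapsto U+Z$ within the neighborhood, and then uses explicit tables of triples $X,Y,Z$ with $Y+Z-X=U+V$ to show that every codeword adjacent to $W$ has the \emph{same} set of type-$[2^{1},1^{1}]$ codewords, which by the uniqueness statement of the reconstruction forces equal (not merely congruent) $5$-neighborhoods; an induction along chains of adjacent codewords finishes the argument. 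Similarly, the symmetry of $N_5(O)$ does not ``drop out'' of invariance of the constraints under $V\mapsto -V$ (central inversion maps the tiling to \emph{a} tiling, not necessarily to itself); it is obtained from the kernel computation. As written, your proposal would need all of this supplied before it could be called a proof.
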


\noindent We will do it in four steps. To
facilitate our discussion we introduce more notation and terminology. By a
word of type $[m_{1}^{\alpha_{1}},...,m_{s}^{\alpha_{s}}]$ we mean a word
having $\alpha_{1}$ coordinates equal to $\pm m_{1},$ ..., $\alpha_{s}$
coordinates equal to $\pm m_{s},$ the other coordinates equal to 0. E.g.,
both words $(-2,-2,-1,-2,0,0)$ and $(1,0,2,0,-2,2)$ are of type $\left[
2^{3},1^{1}\right]  .$ There are three types of words $V$ with its
weight$\left\vert V\right\vert _{M}=3;$ either $V$ is of type $\left[
3^{1}\right]  , $ or of type $\left[  2^{1},1^{1}\right]  ,$ or of type
$\left[  1^{3}\right]  .$ Let $Z\in N_{k}(W).$ Then $Z$ will be called a
codeword of a type with respect to $W$ if $Z-W$ is of the given type; the
number of codewords of type $[m_{1}^{\alpha_{1}},...,m_{s}^{\alpha_{s}}]$ in
$N_{k}(W)$ will be denoted $\left\vert [m_{1}^{\alpha_{1}},...,m_{s}%
^{\alpha_{s}}]\right\vert _{W}$. If the codeword $W$ will be clear from the
context, we will drop the subscript $W.$ Similarly, each word $V,\left\vert
V\right\vert _{M}=4,$ is either of type $\left[  4^{1}\right]  ,$ or $\left[
3^{1},1^{1}\right]  ,$ or $\left[  2^{2}\right]  ,$ or $\left[  2^{1}%
,1^{2}\right]  ,$ or $\left[  1^{4}\right]  .$ \bigskip

\noindent Now we are ready to describe the four phases of proving Theorem %
\ref{4}.\bigskip

(A) Let $\mathcal{L}$ $\ $be a tiling of $Z^{n}$ by crosses. First we prove a quantitative statement, which will be proved not only for $n=5\,\ $but for all \mbox{$n=2(\operatorname{mod}3\dot{)}.$} We believe that this
statement might turn to be very useful when proving Conjecture \ref{2} for
other values of $n$, where $2n+1$ is a prime. Let $W$ be a codeword. The
statement claims that the number of codewords of type $[m_{1}^{\alpha_{1}%
},...,m_{s}^{\alpha_{s}}],$ where $%
{\displaystyle\sum\limits_{i=1}^{s}}
\alpha_{i}m_{i}\leq4,$ with respect to $W$ depends only on $n$ and does not
depend on $\mathcal{L}.$

\begin{theorem}
\label{5}Let $\mathcal{L}$ be a tiling of $R^{n}$ by crosses where
$n=2(\operatorname{mod}3\dot{)}$ and $W$ be a codeword$.$ Then the number of
codewords of given type with respect to $W$ is: $\left\vert [3^{1}]\right\vert
_{W}=0,$ $\left\vert [2^{1},1^{1}]\right\vert _{W}=2n,$ and $\left\vert
[1^{3}]\right\vert _{W}=\frac{2n(n-2)}{3}.$ Further, $\left\vert
[4^{1}]\right\vert _{W}=\left\vert [2^{2}]\right\vert _{W}=0,\left\vert
[3^{1},1^{1}]\right\vert _{W}=2n,\left\vert [2^{1},1^{2}]\right\vert
_{W}=2n(n-2),$ and $\left\vert [1^{4}]\right\vert _{W}=\frac{n(n-2)(n-3)}{3}.
$
\end{theorem}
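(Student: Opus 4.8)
The plan is to fix a codeword and count, in several independent ways, how the crosses of $\mathcal{T}$ cover the successive ``shells'' of words at a given Manhattan distance from it. By Claim \ref{0} we may translate so that $O\in\mathcal{T}$, and it suffices to produce the eight numbers for $W=O$; write $a=|[3^1]|$, $b=|[2^1,1^1]|$, $c=|[1^3]|$ for the distance-$3$ counts and $d,e,f,g,h$ for $|[4^1]|$, $|[3^1,1^1]|$, $|[2^2]|$, $|[2^1,1^2]|$, $|[1^4]|$. Since the Lee spheres $S_{n,1}$ of distinct crosses are disjoint, there is no codeword within distance $2$ of $O$ other than $O$ (equivalently $N_1(O)$ and $N_2(O)$ are empty), so every word at distance $2$ from $O$ is covered by a codeword at distance exactly $3$, and every word at distance $3$ is either itself a codeword or is covered by a codeword at distance $4$.

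First I would count the distance-$2$ shell. Inspecting which arm of a covering cross reaches a given word shows that a cross of type $[3^1]$ covers exactly one word of type $[2^1]$ and none of type $[1^2]$; a cross of type $[2^1,1^1]$ covers one of each; and a cross of type $[1^3]$ covers three words of type $[1^2]$ and none of type $[2^1]$. As there are $2n$ words of type $[2^1]$ and $2n(n-1)$ of type $[1^2]$, each covered exactly once, this yields $a+b=2n$ and $b+3c=2n(n-1)$. (Counting the shared facets of the cross at $O$, with contributions $1,3,6$ from the three types, only gives the combination $a+3b+6c=2n(2n-1)$, which is $\text{(first)}+2\cdot\text{(second)}$, hence nothing new.) Pushing the bookkeeping one shell outward—each distance-$3$ word is covered either by itself or by a distance-$4$ cross, and one checks how many words of each distance-$3$ type a given distance-$4$ cross reaches—produces three further relations: $a+d+e=2n$, $b+e+2f+2g=4n(n-1)$, and $c+g+4h=\tfrac{4n(n-1)(n-2)}{3}$.

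These five equations leave a three-parameter family of nonnegative integer solutions, parametrized by $(a,d,f)$; so the heart of the theorem is the vanishing $a=d=f=0$, i.e. that no two crosses of $\mathcal{T}$ differ by a vector of type $[3^1]$, $[4^1]$, or $[2^2]$. I expect this to be the main obstacle, and the only place the hypothesis $n\equiv 2\ (\mathrm{mod}\ 3)$ is genuinely used (it is already needed for $\tfrac{2n(n-2)}{3}$ and $\tfrac{n(n-2)(n-3)}{3}$ to be integers, since $3\mid(n-2)$). Note that these configurations are not locally forbidden: restricted to a single axis line a cross of type $[3^1]$ merely means two abutting length-$3$ segments, and a cross of type $[4^1]$ a single-cell gap, both of which tile a line, so the argument must be genuinely multi-dimensional. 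The approach I would take is to assume, say, that $3\mathbf{e}_1$ is a codeword and to follow the forced coverage of the cells flanking the $\mathbf{e}_1$-axis: each off-axis neighbour of the occupied cells must be assigned to a cross drawn from a short list of admissible types, and tracking these assignments around a small block should yield a counting identity that closes up only when a certain tally is a multiple of $3$. I expect the hypothesis to enter precisely here, through $3\mid(n-2)$, denying the needed slack and forcing the contradiction; the $[4^1]$ (axis gap) and $[2^2]$ (diagonal pair) cases should succumb to the same scheme with their own local patterns.

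Finally, substituting $a=d=f=0$ into the five relations solves the system uniquely: $b=2n$ and $c=\tfrac{2n(n-2)}{3}$ from the first two, then $e=2n$, $g=2n(n-2)$, and $h=\tfrac{n(n-2)(n-3)}{3}$ from the rest, which are exactly the claimed values. I would therefore present the two shell-counting steps first, isolate the three vanishing statements as a single lemma (the crux), and close with the short linear-algebra computation.
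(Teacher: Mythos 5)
Your shell-counting setup is sound and agrees with the paper's equations (\ref{p}), (\ref{pp}), (\ref{a}), (\ref{b}), (\ref{c}), and your linear algebra at the end is correct: once $\left\vert [3^{1}]\right\vert =\left\vert [4^{1}]\right\vert =\left\vert [2^{2}]\right\vert =0$ is known, the five relations force all the stated values. You also correctly identify that the entire content of the theorem is the vanishing of these three quantities. But that crux is exactly what you do not prove: your paragraph about "following the forced coverage of the cells flanking the $\mathbf{e}_{1}$-axis" and expecting "a counting identity that closes up only when a certain tally is a multiple of $3$" is a hope, not an argument, and the paper's actual proof shows why it is not easily realized. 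For $\left\vert [3^{1}]\right\vert =0$ the paper does not chase a local block; it introduces \emph{local} versions of your equations, indexed by signed coordinates $i$ (equations (\ref{p1}), (\ref{pp1}), (\ref{a1}), (\ref{b1}), (\ref{b2}), (\ref{c1})), establishes that $\left\vert [2^{1},1^{1}]_{i}\right\vert \in \{0,2\}$ for every $i$, and then shows via mod-$3$ arithmetic that any $i$ with $\left\vert [3^{1}]_{i}\right\vert =1$ would force the strict inequality $\left\vert [3^{1},1^{1}]_{i}^{(1)}\right\vert >\left\vert [3^{1},1^{1}]_{i}^{(3)}\right\vert$, contradicting the global identity $\sum_{i}\left\vert [3^{1},1^{1}]_{i}^{(1)}\right\vert =\sum_{i}\left\vert [3^{1},1^{1}]_{i}^{(3)}\right\vert$. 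This global double-count over all signed coordinates is the mechanism that converts the local mod-$3$ information into the vanishing statement, and nothing in your sketch supplies a substitute for it.

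The gap is even wider for $\left\vert [4^{1}]\right\vert =\left\vert [2^{2}]\right\vert =0$. The paper cannot settle these within the $3$- and $4$-shells at all: it must first prove $\left\vert [2^{2}]_{i}\right\vert \leq 1$ by a separate geometric collision argument (Lemma \ref{L2}), then set up a whole new family of covering equations for words of absolute value $4$ (equations (\ref{d})--(\ref{h1})), prove Lemmas \ref{L3} and \ref{L5} relating $\left\vert [4^{1}]_{i}\right\vert$, $\left\vert [5^{1}]_{i}\right\vert$, $\left\vert [4^{1},1^{1}]_{i}^{(4)}\right\vert$ to $\left\vert [3^{1},2^{1}]_{i}\right\vert$ and $\left\vert [2^{2}]_{i}\right\vert$, and only then run another parity-plus-global-summation argument (equation (\ref{beta})) to conclude $\left\vert [4^{1}]_{i}\right\vert =0$ for all $i$. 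In other words, the proof of your quantities $d$ and $f$ vanishing genuinely requires information about codewords at distance $5$ from $W$, which your two-shell bookkeeping never touches. Your intuition that a cross of type $[4^{1}]$ or $[2^{2}]$ is "not locally forbidden" is right, and that is precisely why the paper needs the longer reach; asserting that these cases "should succumb to the same scheme" leaves the main theorem unproved.
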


(B) We prove an analogue of Theorem \ref{5} for the number of codewords of
type $[m_{1}^{\alpha_{1}},...,m_{s}^{\alpha_{s}}],$ where $%
{\displaystyle\sum\limits_{i=1}^{s}}
\alpha_{i}m_{i}\leq5.$ However, we get the explicit values for the number of
codewords of individual types only for $n=5,$ while for
$n=2(\operatorname{mod}3)$ we get those values only as a function of the
number of codewords of type $\left[  5^{1}\right]  .$ We point out, that this
is not because the methods used are not satisfactory but for some values
$n=2(\operatorname{mod}3)\,,$ say $n=62,$ there are two (lattice) tilings of
$Z^{n}$ by crosses with different number of codewords of type $\left[
5^{1}\right]  .$ We stress that for $n=62,$ the number $2n+1=125$ is not a
prime, hence it does not provide a counterexample to our conjecture.\bigskip

(C) In this phase we prove that for any two codewords in $\mathcal{T}%
_{\mathcal{L}}$ their $5$-neighborhoods are congruent.\bigskip\ 

(D) As the last step we show that for any two codewords in $\mathcal{T}%
_{\mathcal{L}}$ their $5$-neighborhoods are not only congruent but the two
$5$-neighborhoods equal, and this joint neighborhood is symmetric, so we prove
Theorem \ref{00}.

\subsection{Phase A}

\noindent In this subsection we prove Theorem \ref{5}. In fact we prove an
extended version of the statement. \bigskip

\noindent For any codeword $W$ in $\mathcal{T}_\mathcal{L}$ there are $2n$ words $V$ of
type $\left[  2^{1}\right]  $ with respect to $W$. (We recall that this means
that $V-W$ is of given type). Each of them is covered by a codeword of type
$\left[  3^{1}\right]  ,$ or by a codeword of type $\left[  2^{1},1^{1}\right]
,$ with respect to $W.$ On the other hand, each codeword of type $\left[ 3^{1}\right]  $ and of type $\left[  2^{1},1^{1}\right]  ,$ with respect to $W,$
covers exactly one word of type $\left[  2^{1}\right]  $ with respect to $W.$
Thus we get, for each codeword $W,$%

\begin{equation}
\left\vert \lbrack 3^{1}]\right\vert +\left\vert [2^{1},1^{1}]\right\vert =2n
\label{p}
\end{equation}

\noindent The above and the following equalities are valid for each codeword
$W,$ therefore in what follows we drop the index $_{W}.$ Also we will not
repeat any longer that all codewords of given type are meant with respect to
$W$.\bigskip

\noindent In $Z^{n}$ there are $2^{2}\binom{n}{2}$ words $V$ of type $\left[
1^{2}\right]  $. Each of them is covered either by a codeword of type $\left[
1^{3}\right]  ,$ or by a codeword of type $\left[  2^{1},1^{1}\right]  .$
Further, each codeword of type $\left[  1^{3}\right]  $ covers three of them
while a codeword of type $\left[  2^{1},1^{1}\right]  $ covers exactly one
codeword of type $\left[  1^{2}\right]  .$ Hence%

\begin{equation}
\left\vert \lbrack 2^{1},1^{1}]\right\vert +3\left\vert \left[ 1^{3}\right]
\right\vert =4\binom{n}{2}  \label{pp}
\end{equation}

\noindent Equation (\ref{p}) and (\ref{pp}) are "global" equations. To get
their "local" form we need to introduce some more notation. Often we will need
to express the number of words, or codewords, in a set $\mathcal{A}$ having
their $i$-th coordinate positive, or their $i$-th coordinate negative.
Therefore, to simplify the language, we will introduce the notion of the
$signed$ $coordinate$ in $Z^{n}.$ For the rest of the paper by the set of
signed coordinates we will understand the set $I=\{+1,...,+n,-1,...,-n\}.$ Let
$V=(v_{1},...,v_{n})$ be a word in $Z^{n}.$ Then the signed coordinates
$V_{i}$ of $V$ are given by: $V_{i}=\left\vert v_{i}\right\vert $ and
$V_{-i}=0$ for $v_{i}>0,$ $V_{i}=0$ and $V_{-i}=\left\vert v_{i}\right\vert $
for $v_{i}<0,$ and $V_{i}=V_{-i}=0$ for $v_{i}=0.$ E.g., if $V=(2,0,-5)$ then
$V_{1}=2,V_{-1}=0,V_{2}=V_{-2}=0,$ and$V_{3}=0,$ $V_{-3}=5.$ For a signed
coordinate $i\in I,$ by $\left\vert \mathcal{A}_{i}\right\vert $ we will
denote the number of words in $\mathcal{A}$ with a non-zero $i$-th coordinate.
That is, $\left\vert A_{1}\right\vert $ stands for the number of words in
$\mathcal{A}$ with the first coordinate being a positive number, while
$\left\vert \mathcal{A}_{-3}\right\vert $ represents the number of words in
$\mathcal{A}$ with the third coordinate being a negative number. If we need to
stress that the value of the $i$-th signed coordinate is $m,$ we will use
$\left\vert \mathcal{A}_{i}^{(m)}\right\vert $ for the number of words with
the $i$-th coordinate equal $m.$ Thus, for each $i\in I,$ $\left\vert \left[
2^{1},1^{1}\right]  _{i}\right\vert $ is the number of words of type $\left[
2^{1},1^{1}\right]  $ with the $i$-th signed coordinate being non-zero, while
$\left\vert \left[  2^{1},1^{1}\right]  _{i}^{(2)}\right\vert $ stands for the
set of codewords of type $\left[  2^{1},1^{1}\right]  $ with the $i$-th signed
coordinate equal to $2.$ \bigskip

\noindent Now we are ready to state the local form of (\ref{p}) and (\ref{pp}%
). As for each $i\in I$ there is in $Z^{n}$ one word $V$ of type $[2^{1}]$
with $V_{i}=2,$ and $2(n-1)$ words $U$ of type $[1^{2}]$ with $U_{i}=1$, we
get:

\begin{equation}
\left\vert \lbrack 3^{1}]_{i}\right\vert +\left\vert
[2^{1},1^{1}]_{i}^{(2)}\right\vert =1,  \label{p1}
\end{equation}

\noindent and%
\begin{equation}
\left\vert \lbrack 2^{1},1^{1}]_{i}\right\vert +2\left\vert
[1^{3}]_{i}\right\vert =2(n-1).  \label{pp1}
\end{equation}

\noindent Indeed, if $A$ is a codeword of type $[3^{1}]$ with $A_{i}=3$ (and
then $A_{j}=0$ for all $j\neq i,j\in I)$ then $A$ covers a word $V$ of type $%
[2^{1}]$ with $V_{i}=2.$ However, a codeword $B$ of type $[2^{1},1^{1}]$
covers $V$ only if $B_{i}=2,$ but does not cover it if $B_{i}=1.$ On the
other hand, $\ $a codeword $B$ with $B_{i}\neq 0$ covers one word $D$ of
type $[1^{2}]$ with $D_{i}=1$ regardless whether $B_{i}=2$ or $B_{i}=1.$
Clearly, a codeword $C$ of type $[1^{3}]$ with $C_{i}=1$ covers exactly two
words $D$ of type $[1^{2}]$ with $D_{i}=1.$\bigskip

\noindent Now we derive identities analogous to (\ref{p}) - (\ref{pp1}) for
words of weight equal to $3.$ As (\ref{p}) - (\ref{pp1}) have been derived in
great detail, and the same type of ideas are used to prove identities
(\ref{a}) - (\ref{c1}) we will leave a part of the proofs to the
reader.\bigskip\ \

\noindent In $Z^{n}$ there are $2n$ words of type $[3^{1}].$ Each of them is
covered by a codeword of type $[3^{1}]$ or $[4^{1}]$ or $[3^{1},1^{1}],$ and
each of those codewords covers exactly one word of type $[3^{1}].$ Therefore,%
\begin{equation}
\left\vert \lbrack 3^{1}]\right\vert +\left\vert [4^{1}]\right\vert
+\left\vert [3^{1},1^{1}]\right\vert =2n,  \label{a}
\end{equation}%
and, for each $i\in I,$ we have%
\begin{equation}
\left\vert \lbrack 3^{1}]_{i}\right\vert +\left\vert [4^{1}]_{i}\right\vert
+\left\vert [3^{1},1^{1}]_{i}^{(3)}\right\vert =1.  \label{a1}
\end{equation}%
Further, in $Z^{n}$ there are $2^{3}\binom{n}{2}$ words of type $%
[2^{1},1^{1}].$ They are covered by codewords of type $[2^{1},1^{1}],$ or $%
[3^{1},1^{1}],$ or $\left[ 2^{2}\right] ,$ or $\left[ 2^{1},1^{2}\right] $.
Each codeword of type $\left[ 2^{2}\right] ,$ or $\left[ 2^{1},1^{2}\right] $
covers two such words, while each codeword of type $[2^{1},1^{1}]$, or $%
[3^{1},1^{1}]$ covers one of them. Hence%
\begin{equation}
\left\vert \lbrack 2^{1},1^{1}]\right\vert +\left\vert
[3^{1},1^{1}]\right\vert +2\left\vert [2^{2}]\right\vert +2\left\vert \left[
2^{1},1^{2}\right] \right\vert =2^{3}\binom{n}{2}  \label{b}
\end{equation}%
\bigskip The above identity has two local forms. There are $2(n-1)$ words $U$
of type $[2^{1},1^{1}]$ with $U_{i}=2,$ and $2(n-1)$ words $U$ of type $%
[2^{1},1^{1}]$ with $U_{i}=1.$ For each $i\in I$ we get%
\begin{equation}
\left\vert \left[ 2^{1},1^{1}\right] _{i}^{(2)}\right\vert +\left\vert
[3^{1},1^{1}]_{i}^{(3)}\right\vert +\left\vert [2^{2}]_{i}\right\vert
+2\left\vert \left[ 2^{1},1^{2}\right] _{i}^{(2)}\right\vert =2(n-1),
\label{b1}
\end{equation}%
and%
\begin{equation}
\left\vert \left[ 2^{1},1^{1}\right] _{i}^{(1)}\right\vert +\left\vert
[3^{1},1^{1}]_{i}^{(1)}\right\vert +\left\vert [2^{2}]_{i}\right\vert
+\left\vert \left[ 2^{1},1^{2}\right] _{i}^{(1)}\right\vert =2(n-1).
\label{b2}
\end{equation}%
Further, in $Z^{n}$ there are $2^{3}\binom{n}{3}$ words of type $[1^{3}].$
They are covered by codewords of type $[1^{3}$], or $\left[ 2^{1},1^{2}%
\right] $ , or $[1^{4}]$. Each codeword of type $[1^{4}]$ covers four of
them. Hence,%
\begin{equation}
\left\vert \left[ 1^{3}\right] \right\vert +\left\vert \left[ 2^{1},1^{2}%
\right] \right\vert +4\left\vert \left[ 1^{4}\right] \right\vert =2^{3}%
\binom{n}{3}  \label{c}
\end{equation}%
\bigskip The local form of (\ref{c}) reads as follows:%
\begin{equation}
\left\vert \left[ 1^{3}\right] _{i}\right\vert +\left\vert \left[ 2^{1},1^{2}%
\right] _{i}\right\vert +3\left\vert \left[ 1^{4}\right] _{i}\right\vert
=2^{2}\binom{n-1}{2}  \label{c1}
\end{equation}%
as in $Z^{n}$ there are $2^{2}\binom{n-1}{2}$ words $U$ of type $[1^{3}]$
with $U_{i}=1,$ and each codeword $V$ of type $[1^{4}]$ with $V_{i}=1$
covers three of them.\bigskip 

\noindent Clearly, there are many solutions of (\ref{p}),...,(\ref{c1}) in
natural numbers. We will prove, that only one corresponds to a tiling of
$R^{n}$ by crosses.\bigskip

\noindent We will split Theorem \ref{5} into two statements but will determine
also the local values for individual types. We start with the number of
codewords of weight $3$.

\begin{theorem}
\label{6}Let $n=2(\operatorname{mod}3\dot{)},$ $\mathcal{L}$ be a tiling of
$R^{n}$ by crosses, and $W$ be a codeword. Then, the number of codewords of
given type with respect to $W$ is: $\left\vert [3^{1}]\right\vert =0,$
$\left\vert [2^{1},1^{1}]\right\vert =2n,$ and $\left\vert \left[
1^{3}\right]  \right\vert =\frac{2n(n-2)}{3}.$ As to the local values, for
each $i\in I,$ $\left\vert \left[  2^{1},1^{1}\right]  _{i}^{(2)}\right\vert
=\left\vert \left[  2^{1},1^{1}\right]  _{i}^{(1)}\right\vert =1,$ that is,
$\left\vert \left[  2^{1},1^{1}\right]  _{i}\right\vert =2,$ and $\left\vert
\left[  1^{3}\right]  _{i}\right\vert =n-2.$
\end{theorem}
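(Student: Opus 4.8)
The plan is to regard the covering identities (\ref{p})--(\ref{c1}) as a linear system in the type counts and to show that, although this system has many non-negative integer solutions, only the claimed one is realizable by a tiling. First I would extract the purely global consequences. Combining (\ref{p}) and (\ref{pp}) gives $|[2^{1},1^{1}]|=2n-|[3^{1}]|$ and $3|[1^{3}]|=2n(n-2)+|[3^{1}]|$. Since $n\equiv2\pmod 3$ we have $3\mid(n-2)$, so $2n(n-2)\equiv0\pmod3$ and the second relation is consistent with integrality of $|[1^{3}]|$ exactly when $3\mid|[3^{1}]|$. At this point the whole of Theorem \ref{6}, global and local, reduces to the single assertion $|[3^{1}]|=0$ together with the per-coordinate uniformity $|[1^{3}]_{i}|=n-2$: indeed $|[3^{1}]|=0$ yields $|[2^{1},1^{1}]|=2n$ and $|[1^{3}]|=\tfrac{2n(n-2)}{3}$ from the two displayed relations, forces $|[3^{1}]_{i}|=0$ for every $i$ (a sum of non-negative integers equal to $0$), so (\ref{p1}) gives $|[2^{1},1^{1}]_{i}^{(2)}|=1$; and then (\ref{pp1}) converts $|[1^{3}]_{i}|=n-2$ into $|[2^{1},1^{1}]_{i}^{(1)}|=1$.

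The core is therefore to rule out codewords of type $[3^{1}]$, which I would attack one signed coordinate $i$ at a time. Equation (\ref{p1}) gives a clean dichotomy: either $3e_{|i|}$ (taken with the sign of $i$) is a codeword, or there is exactly one codeword of type $[2^{1},1^{1}]$ with $i$-th coordinate $2$, and precisely one alternative holds; (\ref{a1}) couples the first alternative to the absence of $[4^{1}]$- and $[3^{1},1^{1}]$-codewords in that direction. My plan is to exclude the first alternative using the tiling structure directly rather than the counting identities: fixing the plane spanned by $e_{|i|}$ and one other axis, every cross meets this plane either in a two-dimensional cross (center in the plane) or in a single cube (center one unit off the plane), so the plane is tiled by plus-pentominoes and monominoes; a codeword $3e_{|i|}$ imposes a rigid local pattern, and I would track how this pattern is forced to propagate along the $|i|$-axis and across the perpendicular directions until it conflicts with the remaining identities and with $n\equiv2\pmod3$. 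Summing the resulting per-coordinate facts over all $2n$ signed coordinates would give $|[3^{1}]|=0$.

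The main obstacle is exactly this last step, and the paper flags it when it notes that (\ref{p})--(\ref{c1}) have many solutions in natural numbers. Neither $|[3^{1}]|=0$ nor the uniformity $|[1^{3}]_{i}|=n-2$ can be read off from the identities and non-negativity alone; one checks that they survive even after imposing $|[4^{1}]|=|[2^{2}]|=0$. All the counting delivers is $|[3^{1}]|\equiv0\pmod3$ and $|[3^{1}]|\le 2n$, with the value $2n$ excluded because it would make $|[1^{3}]|=\tfrac{2n(n-1)}{3}$ non-integral for $n\equiv2\pmod3$. Bridging the gap from ``$|[3^{1}]|$ a multiple of $3$'' to ``$|[3^{1}]|=0$'' is the crux, and I expect it to require the genuinely geometric input above, with the residue $n\equiv2\pmod3$ entering in a way stronger than mere divisibility.

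Once $|[3^{1}]|=0$ and the distribution $|[1^{3}]_{i}|=n-2$ are in hand, the remaining conclusions are immediate bookkeeping from (\ref{p}), (\ref{pp}), (\ref{p1}) and (\ref{pp1}), as in the first paragraph. Since $|[1^{3}]_{i}|=n-2$ is itself a uniformity statement of the same difficulty as $|[3^{1}]|=0$, I would establish it not by averaging $3|[1^{3}]|=2n(n-2)$ over the $2n$ coordinates (which only gives the mean $n-2$) but by the same plane-by-plane configurational analysis, carried out in tandem with the analogous uniformity statements for the absolute-value-$4$ types, so that a single structural argument controls both the absolute-value-$3$ and absolute-value-$4$ local counts.
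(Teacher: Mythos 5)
Your first paragraph's reduction is sound, and you correctly identify $|[3^{1}]|=0$ as the crux. But that crux is exactly what your proposal does not prove: the plane-by-plane pentomino/propagation argument is only announced (``I would track how this pattern is forced to propagate \dots until it conflicts with the remaining identities''), no concrete contradiction is ever exhibited, and since everything else in the theorem is routine bookkeeping, this is a genuine gap rather than an omitted detail.

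Moreover, your diagnosis of \emph{why} the step is hard is off the mark. You assert that $|[3^{1}]|=0$ ``cannot be read off from the identities and non-negativity alone'' and must require geometric input; the paper's proof stays entirely inside the counting framework but uses two ingredients absent from your proposal. The first is the pair of ``conjugate'' global relations obtained by summing local counts over all signed coordinates, $\sum_{i\in I}|[2^{1},1^{1}]_{i}^{(2)}|=\sum_{i\in I}|[2^{1},1^{1}]_{i}^{(1)}|$ and $\sum_{i\in I}|[3^{1},1^{1}]_{i}^{(3)}|=\sum_{i\in I}|[3^{1},1^{1}]_{i}^{(1)}|$ (each codeword of such a type contributes once to each side). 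The first of these, combined with the parity of $|[2^{1},1^{1}]_{i}|$ coming from (\ref{pp1}) and the bound from (\ref{p1}), gives the paper's Lemma A: $|[2^{1},1^{1}]_{i}^{(2)}|=|[2^{1},1^{1}]_{i}^{(1)}|\in\{0,1\}$ for every $i$. The second ingredient is a case analysis on $|[3^{1}]_{i}|\in\{0,1\}$ in which $n\equiv 2\ (\operatorname{mod}3)$ is fed into (\ref{c1}) and into the difference of (\ref{b1}) and (\ref{b2}): the case $|[3^{1}]_{i}|=1$ forces the strict inequality $|[3^{1},1^{1}]_{i}^{(1)}|>|[3^{1},1^{1}]_{i}^{(3)}|$, while $|[3^{1}]_{i}|=0$ forces only $\geq$; summing over $i\in I$ against the second conjugate relation then excludes every coordinate with $|[3^{1}]_{i}|=1$. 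Finally, the uniformity $|[1^{3}]_{i}|=n-2$ is not ``of the same difficulty'' as $|[3^{1}]|=0$ and needs no separate configurational analysis: once $|[3^{1}]_{i}|=0$, Lemma A and (\ref{p1}) give $|[2^{1},1^{1}]_{i}|=2$, and (\ref{pp1}) immediately yields $|[1^{3}]_{i}|=n-2$. So the missing idea is not geometry but the conjugate-sum relations and the mod-$3$ case analysis built on them.
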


\begin{proof}
Let $W$ be a codeword in $\mathcal{T}_\mathcal{L}$. Clearly, then also the set
$\mathcal{T}^{\prime}=\{U,U\in Z^{n},U=V-W$ for some $V$ in $\mathcal{T}_\mathcal{L}\}$ is
a tiling of $Z^{n}$ by Lee spheres. Therefore, wlog we assume $W=O$. From
(\ref{p1}) we have $\left\vert \left[  2^{1},1^{1}\right]  _{i}^{(2)}%
\right\vert \leq1, $ while from (\ref{pp1}) we get $\left\vert \left[
2^{1},1^{1}\right]  _{i}\right\vert $ is even, hence $\left\vert \left[
2^{1},1^{1}\right]  _{i}^{(2)}\right\vert \leq\left\vert \left[  2^{1}%
,1^{1}\right]  _{i}\right\vert $. On the other hand, there is no $i\in I$ with
$\left\vert \left[  2^{1},1^{1}\right]  _{i}^{(2)}\right\vert <\left\vert
\left[  2^{1},1^{1}\right]  _{i}\right\vert $as $%
{\displaystyle\sum\limits_{i\in I}}
\left\vert \left[  2^{1},1^{1}\right]  _{i}^{(2)}\right\vert =%
{\displaystyle\sum\limits_{i\in I}}
\left\vert \left[  2^{1},1^{1}\right]  _{i}^{(1)}\right\vert .$ Thus we
proved:\bigskip

\noindent\textbf{Lemma A.} For each $i\,\in I,$ either $\left\vert \left[
2^{1},1^{1}\right]  _{i}\right\vert =0$ or $\left\vert \left[  2^{1}%
,1^{1}\right]  _{i}\right\vert =2.$ In the latter case $\left\vert \left[
2^{1},1^{1}\right]  _{i}^{(2)}\right\vert =\left\vert \left[  2^{1}%
,1^{1}\right]  _{i}^{(1)}\right\vert =1.$\bigskip

\noindent Now we are ready to prove that $\left\vert [3^{1}]\right\vert =0.$
We consider two cases.\bigskip

\noindent(i) Let $\left\vert [3^{1}]_{i}\right\vert =1.$ Then, by (\ref{a1}),
$\left\vert [3^{1},1^{1}]_{i}^{(3)}\right\vert =0,$ and by (\ref{p1}),
$\left\vert \left[  2^{1},1^{1}\right]  _{i}^{(2)}\right\vert =0,$ which
implies, by Lemma A, that $\left\vert \left[  2^{1},1^{1}\right]
_{i}\right\vert =0.$ This in turn implies, see (\ref{pp1}), $\left\vert
\left[  1^{3}\right]  _{i}\right\vert =n-1.$ Substituting it into (\ref{c1})
gives $\left\vert \left[  2^{1},1^{2}\right]  _{i}\right\vert +3\left\vert
\left[  1^{4}\right]  _{i}\right\vert =(n-1)(2n-5).$ As we deal with the case
$n=2(\operatorname{mod}3),$ then \mbox{$(n-1)(2n-5)$}$=2(\operatorname{mod}3)$ as well,
and therefore $\left\vert \left[  2^{1},1^{2}\right]  _{i}\right\vert
=2(\operatorname{mod}3).$ Subtracting (\ref{b2}) from (\ref{b1}), and using
$\left\vert [3^{1},1^{1}]_{i}^{(3)}\right\vert =\left\vert \left[  2^{1}%
,1^{1}\right]  _{i}^{(2)}\right\vert =\left\vert \left[  2^{1},1^{1}\right]
_{i}^{(1)}\right\vert =0,$ we get $2\left\vert \left[  2^{1},1^{2}\right]
_{i}^{(2)}\right\vert =\left\vert \left[  2^{1},1^{2}\right]  _{i}%
^{(1)}\right\vert +\left\vert [3^{1},1^{1}]_{i}^{(1)}\right\vert .$ As
$\left\vert \left[  2^{1},1^{2}\right]  _{i}\right\vert =\left\vert \left[
2^{1},1^{2}\right]  _{i}^{(2)}\right\vert +\left\vert \left[  2^{1}%
,1^{2}\right]  _{i}^{(1)}\right\vert ,$ adding $\left\vert \left[  2^{1}%
,1^{2}\right]  _{i}^{(2)}\right\vert $ to both sides yields $3\left\vert
\left[  2^{1},1^{2}\right]  _{i}^{(2)}\right\vert =\left\vert \left[
2^{1},1^{2}\right]  _{i}\right\vert $\newline$+ \left\vert [3^{1},1^{1}]_{i}%
^{(1)}\right\vert .$ We showed above that in this case of $\left\vert
[3^{1}]_{i}\right\vert =1$ it is $\left\vert \left[  2^{1},1^{2}\right]
_{i}\right\vert =2(\operatorname{mod}3).$ Therefore $\left\vert [3^{1}%
,1^{1}]_{i}^{(1)}\right\vert >0,$ that is, $\left\vert [3^{1},1^{1}]_{i}%
^{(1)}\right\vert >\left\vert [3^{1},1^{1}]_{i}^{(3)}\right\vert .$\bigskip

\noindent(ii) Now let $\left\vert [3^{1}]_{i}\right\vert =0.$ By (\ref{p1}), we get
$\left\vert \left[  2^{1},1^{1}\right]  _{i}^{(2)}\right\vert =1,$ which
implies, by \mbox{Lemma A}, that $\left\vert \left[  2^{1},1^{1}\right]
_{i}\right\vert =2.$ This in turn implies, see (\ref{pp1}), $\left\vert
\left[  1^{3}\right]  _{i}\right\vert =n-2.$ Substituting it into (\ref{c1})
gives $\left\vert \left[  2^{1},1^{2}\right]  _{i}\right\vert +3\left\vert
\left[  1^{4}\right]  _{i}\right\vert =(n-2)(2n-3).$ As
$n=2(\operatorname{mod}3),$ it is $(n-2)(2n-3)=0(\operatorname{mod}3),$ and
therefore $\left\vert \left[  2^{1},1^{2}\right]  _{i}\right\vert
=0(\operatorname{mod}3).$ Subtracting (\ref{b2}) from (\ref{b1}), and using
$\left\vert \left[  2^{1},1^{1}\right]  _{i}^{(2)}\right\vert =\left\vert
\left[  2^{1},1^{1}\right]  _{i}^{(1)}\right\vert \mbox{=1},$ we get $2\left\vert
\left[  2^{1},1^{2}\right]  _{i}^{(2)}\right\vert +\left\vert [3^{1}%
,1^{1}]_{i}^{(3)}\right\vert -\left\vert [3^{1},1^{1}]_{i}^{(1)}\right\vert
=\left\vert \left[  2^{1},1^{2}\right]  _{i}^{(1)}\right\vert \,,$ and adding
$\left\vert \left[  2^{1},1^{2}\right]  _{i}^{(2)}\right\vert $ to both sides
gives $3\left\vert \left[  2^{1},1^{2}\right]  _{i}^{(2)}\right\vert
+\left\vert [3^{1},1^{1}]_{i}^{(3)}\right\vert -\left\vert [3^{1},1^{1}%
]_{i}^{(1)}\right\vert =\left\vert \left[  2^{1},1^{2}\right]  _{i}\right\vert
.$ As $\left\vert \left[  2^{1},1^{2}\right]  _{i}\right\vert
=0(\operatorname{mod}3)$ in this case, we have $\left\vert [3^{1},1^{1}%
]_{i}^{(3)}\right\vert -\left\vert [3^{1},1^{1}]_{i}^{(1)}\right\vert
=0(\operatorname{mod}3),$ which yields $\left\vert [3^{1},1^{1}]_{i}%
^{(1)}\right\vert \geq\left\vert \lbrack3^{1},1^{1}]_{i}^{(3)}\right\vert $ as
\newline $\left\vert [3^{1},1^{1}]_{i}^{(3)}\right\vert 
\leq 1$ for all $i\in I,$ see (\ref{a1}).\bigskip

\noindent So, $\left\vert [3^{1}]_{i}\right\vert =1$ implies $\left\vert
[3^{1},1^{1}]_{i}^{(1)}\right\vert >\left\vert [3^{1},1^{1}]_{i}%
^{(3)}\right\vert ,$ while $\left\vert [3^{1}]_{i}\right\vert =0$ gives
$\left\vert [3^{1},1^{1}]_{i}^{(1)}\right\vert \geq\left\vert \lbrack
3^{1},1^{1}]_{i}^{(3)}\right\vert .$ However, $%
{\displaystyle\sum\limits_{i\in I}}
\left\vert [3^{1},1^{1}]_{i}^{(1)}\right\vert =%
{\displaystyle\sum\limits_{i\in I}}
\left\vert [3^{1},1^{1}]_{i}^{(3)}\right\vert ,$ therefore there is no $i\in
I$ with $\left\vert [3^{1}]_{i}\right\vert =1,$ that is $\left\vert
[3^{1}]\right\vert =0,$ and, for all $i\in I,$
\begin{equation}
\left\vert \lbrack3^{1},1^{1}]_{i}^{(1)}\right\vert =\left\vert [3^{1}%
,1^{1}]_{i}^{(3)}\right\vert ,\text{ and }3\left\vert \left[  2^{1}%
,1^{2}\right]  _{i}^{(2)}\right\vert =\left\vert \left[  2^{1},1^{2}\right]
_{i}\right\vert \label{s}%
\end{equation}
Since $\left\vert [3^{1}]\right\vert =0,$ by (\ref{p}) we get $\left\vert
[2^{1},1^{1}]\right\vert =2n,$ which in turn implies, by (\ref{pp}), that
$\left\vert \left[  1^{3}\right]  \right\vert =\frac{2n(n-2)}{3}.$ Further,
from $\left\vert \left[  2^{1},1^{1}\right]  _{i}\right\vert =2,$ we get
$\left\vert \left[  1^{3}\right]  _{i}\right\vert =n-2.$ The proof is complete.
\end{proof} \bigskip

\noindent Now we prove an analogue of Theorem \ref{6} for the values of $\left\vert \left[  2^{1},1^{2}\right]\right\vert$
and $\left\vert \left[  1^{4}\right]\right\vert.$

\begin{theorem}
\label{7} Let $\mathcal{L}$ be a tiling of $R^{n}$ by crosses where
$n=2(\operatorname{mod}3\dot{)}.$ Then, for each $W\in\mathcal{T}_\mathcal{L}$,
$\left\vert \left[ 2^{1}, 1^{2} \right] \right\vert =2n(n-2),$ and $\left\vert \left[  1^{4}\right]\right\vert=\frac{n(n-2)(n-3)}{3}.\,$\ In addition, for all $i\in I,$
it is$,\left\vert \left[  2^{1},1^{2}\right] _{i}\right\vert=3(n-2),$ $\left\vert \left[  2^{1},1^{2}\right] _{i}^{(2)}\right\vert=n-2,$ and $\left\vert \left[  1^{4}\right] _{i}\right\vert=\frac{2(n-2)(n-3)}{3}.$
\end{theorem}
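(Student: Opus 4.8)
The plan is to pin down the five remaining local quantities $|[4^1]_i|$, $|[3^1,1^1]_i^{(3)}|$, $|[2^2]_i|$, $|[2^1,1^2]_i^{(2)}|$ and $|[1^4]_i|$ by feeding the conclusions of Theorem \ref{6} into the local identities already at hand. As in Theorem \ref{6} I would first translate so that $W=O$. From Theorem \ref{6} we may use $|[3^1]|=0$, $|[2^1,1^1]_i^{(2)}|=|[2^1,1^1]_i^{(1)}|=1$, $|[1^3]_i|=n-2$, and, crucially, the two relations in (\ref{s}): $|[3^1,1^1]_i^{(1)}|=|[3^1,1^1]_i^{(3)}|$ and $3|[2^1,1^2]_i^{(2)}|=|[2^1,1^2]_i|$, the latter giving also $|[2^1,1^2]_i^{(1)}|=2|[2^1,1^2]_i^{(2)}|$. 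Writing $a_i=|[4^1]_i|$, $b_i=|[3^1,1^1]_i^{(3)}|$, $c_i=|[2^2]_i|$, $p_i=|[2^1,1^2]_i^{(2)}|$ and $d_i=|[1^4]_i|$, I would substitute these into (\ref{a1}), (\ref{b1}) (which, under (\ref{s}), coincides with (\ref{b2})) and (\ref{c1}), collapsing the whole system to the three per-coordinate equations $a_i+b_i=1$, $b_i+c_i+2p_i=2n-3$, and $p_i+d_i=\tfrac{(n-2)(2n-3)}{3}$, the last being an integer since $n\equiv 2\pmod 3$.

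Next I would extract the arithmetic these force. As $a_i,b_i\ge 0$ with $a_i+b_i=1$, either $(a_i,b_i)=(0,1)$ or $(1,0)$. Since $2p_i$ is even while $2n-3$ is odd, the middle equation shows $b_i+c_i$ is odd; thus when $b_i=1$ the number $c_i$ is even, while when $b_i=0$ (i.e. $a_i=1$) the number $c_i$ is odd, so $c_i\ge 1$. In either case $2p_i=2n-3-b_i-c_i\le 2n-4$, whence $p_i\le n-2$, and moreover $c_i\ge a_i$ for every $i$. Summing $p_i\le n-2$ over the $2n$ signed coordinates gives $|[2^1,1^2]|=\sum_i p_i\le 2n(n-2)$, with equality exactly when $p_i=n-2$ for all $i$, that is, exactly when $b_i+c_i=1$ for all $i$.

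The heart of the matter — and the step I expect to be the main obstacle — is to promote these inequalities to equalities by showing $|[2^2]|=0$. This is precisely the needed leverage: if $\sum_i c_i=2|[2^2]|=0$ then every $c_i=0$, and then $a_i\le c_i=0$ forces $a_i=0$ and $b_i=1$, so that $|[3^1,1^1]|=\sum_i b_i=2n$, $2p_i=2n-4$ gives $p_i=n-2$, and $d_i=\tfrac{(n-2)(2n-3)}{3}-(n-2)=\tfrac{2(n-2)(n-3)}{3}$ for every $i$. The global values then drop out by double counting: a codeword of type $[2^1,1^2]$ is counted once in $\sum_i p_i$ (by its unique $\pm 2$ coordinate) and one of type $[1^4]$ is counted four times in $\sum_i d_i$, so $|[2^1,1^2]|=\sum_i p_i=2n(n-2)$ and $|[1^4]|=\tfrac14\sum_i d_i=\tfrac{n(n-2)(n-3)}{3}$.

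I do not expect $|[2^2]|=0$ to follow from the linear identities (\ref{p})--(\ref{c1}) alone. Those identities are too symmetric: after using (\ref{s}) they yield only the one-sided bound $|[4^1]|=\sum_i a_i\le\sum_i c_i=2|[2^2]|$, and they offer no globally balanced pair of ``signed'' counts that is locally strict when $c_i>0$, so the clean parity/mod-$3$ asymmetry that drove Theorem \ref{6} has no direct analogue here. Instead I would rule out a $[2^2]$ codeword structurally: assuming $2\mathbf{e}_i+2\mathbf{e}_j$ is a codeword, its cross already covers the words $2\mathbf{e}_i+\mathbf{e}_j$ and $\mathbf{e}_i+2\mathbf{e}_j$, so neither is a codeword; combined with the exact-cover property and the local counts of Theorem \ref{6}, this dictates the types of the codewords covering an entire cluster of nearby words, and the aim would be to propagate these forced choices until a contradiction with the identities above appears. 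This propagation, not the bookkeeping, is the essential difficulty, and once $|[2^2]|=0$ is established the rest of Theorem \ref{7} is immediate.
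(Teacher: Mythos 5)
Your reduction of the problem is sound up to the point where you need to pin down $b_i+c_i$: the three collapsed equations, the parity observation that $b_i+c_i$ is odd, and the bookkeeping that would follow from $p_i=n-2$ all match what the paper does. But there is a genuine gap at exactly the step you flag as the obstacle, and moreover you aim at the wrong target there. You try to establish $\left\vert \left[ 2^{2}\right] \right\vert =0$, which is both more than is needed and genuinely out of reach at this stage --- the paper itself only proves $\left\vert \left[ 2^{2}\right] \right\vert =0$ later (Theorem \ref{8}), and that proof requires the full $5$-neighbourhood apparatus of Lemmas \ref{L3} and \ref{L5}. Your structural sketch for ruling out a $[2^2]$ codeword (``propagate forced choices until a contradiction appears'') is not carried out, so the proof is incomplete as written.

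The missing idea is the much weaker local bound $\left\vert \left[ 2^{2}\right] _{i}\right\vert \leq 1$ for each signed coordinate $i$ (the paper's Lemma \ref{L2}). This has a short structural proof: if $F$ and $F'$ were two codewords of type $[2^2]$ with $F_i=F'_i=2$, then the unique codeword $B$ of type $[2^1,1^1]$ with $B_i=2$ (guaranteed by Theorem \ref{6}) would see either a codeword of type $[3^1]$ relative to $B$, or two codewords of type $[2^1,1^1]$ with the same signed coordinate equal to $1$ relative to $B$ --- both contradict Theorem \ref{6} applied at $B$. With $c_i\leq 1$ in hand, your own parity argument closes immediately: $b_i+c_i$ is odd and at most $2$, hence equals $1$, hence $2p_i=2n-4$ and $p_i=n-2$ for every $i$, and the global and local values of $\left\vert \left[ 2^{1},1^{2}\right] \right\vert$ and $\left\vert \left[ 1^{4}\right] \right\vert$ follow by the double counting you describe. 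So the fix is small in length but essential: replace the unproved (and unnecessary) claim $\left\vert \left[ 2^{2}\right] \right\vert =0$ with the local bound $\left\vert \left[ 2^{2}\right] _{i}\right\vert \leq 1$ and its distance argument.
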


\begin{proof}
As with Theorem \ref{6}, w.l.o.g we assume that $W=O.$ In order to determine
the value of $\left\vert \left[  2^{1},1^{2}\right]\right\vert$ we need the following lemma:

\begin{lemma}
\label{L2} For each $i\in I,$ it is $\left\vert \left[  2^{2} \right] _{i}\right\vert\leq1;$ hence $\left\vert \left[  2^{2} \right]\right\vert\leq n/2.$
\end{lemma}

\noindent\textbf{Proof of Lemma }\ref{L2}. Assume by contradiction that there
is $i\in I,\,\ $say $i=1,\,\ $such that $\left\vert \left[  2^{2} \right] _{1}\right\vert\geq2.$ Let,~w.l.o.g,
$F=(2,2,0,...,0),F^{\prime}=(2,0,2,0,...,0)$ be two codewords of type
$[2^{2}]$ with $F_{1}=F_{1}^{\prime}=2$. We proved that, for each $i\in I,$
it is $\left\vert \left[  2^1, 1^1 \right] _{i}^{(2)}\right\vert=1.$ So there is a codeword $B$ of type $[2^1, 1^1]$, with
$B_{1}=2.$ We may assume w.l.o.g. that $B=(2,...,0,\pm1,0,...0).$ If
$B=(2,-1,0,...,0),$ then $F_{1}-B=(0,3,0,...,0),$ that is, the codeword $F_{1}
$ is with respect to the codeword $B$ of type $[3^1],$ which is a
contradiction as we proved that $\left\vert \left[  3^{1} \right] \right\vert=0.$ So let $B=(2,0,0,1,0,...,0).$ Then
$F_{1}-B=(0,2,0,-1,0,...,0)$ and $F_{2}-B=(0,0,2,-1,0,...,0).$ That is, with
respect to the codeword $B,$ we get $\left\vert \left[ 2^1, 1^1 \right]_{i}^{(1)} \right\vert=2,$ which contradicts that
$\left\vert \left[ 2^1, 1^1 \right]_i^{(1)} \right\vert=1$ for all $i\in I.$ Therefore, $\left\vert \left[ 2^2 \right] _{i} \right\vert\leq1$ for all $i\in I,$
which in turn implies $\left\vert \left[ 2^2 \right] \right\vert=\frac{1}{2}%
{\displaystyle\sum\limits_{i\in I}}
\left\vert \left[ 2^2\right] _i \right\vert\leq n.$ This proves Lemma \ref{L2}.\bigskip

\noindent With this in hand we find the values of $\left\vert \left[ 2^{1}, 1^{2} \right] \right\vert$ and $\left\vert \left[ 2^{1}, 1^{2} \right]_{i} \right\vert.$ The equality (\ref{b1}) states that $\left\vert \left[ 2^{1}, 1^{1} \right]_i^{(2)} \right\vert+\left\vert \left[ 3^{1}, 1^{1} \right]_i^{(3)} \right\vert+\left\vert \left[ 2^{2} \right]_i \right\vert+2\left\vert \left[ 2^{1}, 1^{2} \right]_{i}^{(2)} \right\vert=$\mbox{$2(n-1)$}. In addition, by Lemma \ref{6}, it is $\left\vert \left[ 2^1, 1^1 \right]_i^{(2)} \right\vert=1,$ by
(\ref{a1}) $\left\vert \left[ 3^1, 1^1 \right]_i^{(3)} \right\vert\leq1,$ and by the above lemma $\left\vert \left[ 2^2 \right]_i \right\vert\leq1.$ As
$\left\vert \left[ 2^1, 1^1 \right]_i^{(2)} \right\vert+\left\vert \left[ 3^1, 1^1 \right]_i^{(3)} \right\vert+\left\vert \left[ 2^2 \right]_i \right\vert$ is an even number, we get
\begin{equation}
\left\vert \left[ 2^1, 1^1 \right]_i^{(2)} \right\vert+\left\vert \left[ 3^1, 1^1 \right]_i^{(3)} \right\vert+\left\vert \left[ 2^2 \right]_i \right\vert=2.\label{ee}%
\end{equation}
Therefore $\left\vert \left[ 2^{1}, 1^{2} \right]_{i}^{(2)} \right\vert=n-2,$ thus $\left\vert \left[ 2^{1}, 1^{2} \right] \right\vert=$ $%
{\displaystyle\sum\limits_{i\in I}}
\left\vert \left[ 2^{1}, 1^{2} \right]_{i}^{(2)} \right\vert=2n(n-2).$ By (\ref{s}), $\left\vert \left[ 2^{1}, 1^{2} \right]_{i} \right\vert=3\left\vert \left[ 2^{1}, 1^{2} \right]_{i}^{(2)} \right\vert=3(n-2).$ The values of
$\left\vert \left[ 1^{4} \right] \right\vert$ and $\left\vert \left[ 1^{4} \right]_{i} \right\vert$ are easily obtained from (\ref{c}) and (\ref{c1}),
respectively. The proof is complete.\bigskip
\end{proof}

\noindent To be able to determine the values of $\left\vert \left[ 4^{1} \right] \right\vert,\left\vert \left[ 3^{1}, 1^{1} \right] \right\vert,$ and $\left\vert \left[ 2^{2} \right] \right\vert,$ we need to
consider codewords from the $5$-neighbourhood. Each word $V,\left\vert
V\right\vert _{M}=5,$ is either of type $\left[  5^1\right]  ,$ or $\left[
4^1,1^1\right]  ,$ or $\left[  3^1,2^1\right]  ,$ or $[3^1,1^{2}], $
or $\left[  2^{2},1^1\right]  ,$ or $\left[  2^1,1^{3}\right]  ,$ or
$\left[  1^{5}\right]  .$ Let $W$ be a codeword in $\mathcal{T}_\mathcal{L}$. Then the
number of codewords $Z$ in the $5$-neighbourhood of $W$ of the given type
$\left[  5^1\right]  $ will be denoted by $\left\vert \left[ 5^1 \right] \right\vert,$ of type $\left[  4^1,1^1\right]  $ by $\left\vert \left[ 4^1, 1^{1} \right] \right\vert,$ etc.\bigskip

\noindent We start with a series of auxiliary statements.

\begin{lemma}
\label{L3}For each $i\in I,$ $\left\vert \left[ 3^{1}, 2^{1} \right]_{i}^{(3)} \right\vert\leq1,~$and $\left\vert \left[ 3^{1}, 2^{1} \right]_{i}^{(2)} \right\vert\leq2.$
\end{lemma}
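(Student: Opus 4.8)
By Claim \ref{0} I would first translate so that $W=O\in\mathcal T$, and then use the permutation/sign symmetry of Claim \ref{0} to reduce both inequalities to the single signed coordinate $i=+1$. The two assertions then concern, respectively, the codewords of type $[3^1,2^1]$ whose first coordinate equals $3$ (those of the form $3e_1+\sigma2e_j$, $j\ge2$, $\sigma=\pm1$) and those whose first coordinate equals $2$ (those of the form $2e_1+\sigma3e_j$). In each case I would argue by contradiction, assuming two codewords of the first kind, respectively three of the second, and deriving an impossibility.

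The tools I would draw on are all already in hand. Any two codewords lie at Manhattan distance at least $3$, since $N_1$ and $N_2$ are empty. By Theorem \ref{6} one has, with respect to \emph{every} codeword, the unit counts $\left\vert[2^1,1^1]_k^{(2)}\right\vert=\left\vert[2^1,1^1]_k^{(1)}\right\vert=1$ for all signed $k$, together with $\left\vert[3^1]\right\vert=0$; by Lemma \ref{L2} and the identity (\ref{ee}) one also has the dichotomy $\left\vert[3^1,1^1]_k^{(3)}\right\vert+\left\vert[2^2]_k\right\vert=1$. The plan is to attach to each candidate codeword a canonical nearby reference codeword whose existence and uniqueness these counts guarantee — concretely the unique codeword covering $2e_1$ (of type $[2^1,1^1]_1^{(2)}$) and the unique codeword covering $3e_1$ (of type $[4^1]_1$ or $[3^1,1^1]_1^{(3)}$ by (\ref{a1})) — and then to subtract that reference so that each candidate becomes a codeword of a tightly constrained type with respect to it. For instance, if the reference covering $3e_1$ is $3e_1-e_2$, then a candidate $3e_1+2e_2$ becomes $3e_2$, a codeword of type $[3^1]$ with respect to the reference, contradicting $\left\vert[3^1]\right\vert=0$; other placements of the reference's unit coordinate force a pair of codewords at distance $<3$, or a second occupant of one of the unit slots $[2^1,1^1]_k^{(1)},[2^1,1^1]_k^{(2)}$ with respect to the reference. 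The reason the thresholds differ is geometric: pinning the $3$ to axis $1$ rigidly fixes the reference configuration and admits a single survivor, whereas pinning only the $2$ to axis $1$ leaves one extra admissible sign/position pair, hence two.

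The main obstacle is that no one-step covering argument can work: two distinct-position candidates such as $3e_1+2e_2$ and $3e_1+2e_3$ sit at Manhattan distance $4$, so they share no covered word, and individually they violate none of the unit counts nor the dichotomy above — one even checks that in the branch $\left\vert[2^2]_1\right\vert=1$ both survive a single reference subtraction. Thus the contradiction must come from a second-order interaction routed through the reference, and the delicate point is the finite case analysis over the reference's position and sign, especially when the reference lies in the coordinate plane spanned by $e_1$ and the off-axis coordinate of one candidate. I expect the bulk of the work, and the precise forcing of the thresholds $1$ and $2$, to live in this bookkeeping, with the parity relations $\sum_k\left\vert[2^1,1^1]_k^{(2)}\right\vert=\sum_k\left\vert[2^1,1^1]_k^{(1)}\right\vert$ and the dichotomy (\ref{ee}) serving as the decisive levers.
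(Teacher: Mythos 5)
Your proposal follows essentially the same route as the paper's proof: for $\left\vert\left[3^{1},2^{1}\right]_{i}^{(3)}\right\vert\leq 1$ the paper subtracts the unique codeword covering $3e_{i}$ (of type $\left[4^{1}\right]$ or $\left[3^{1},1^{1}\right]_{i}^{(3)}$ via (\ref{a1})) and contradicts $\left\vert\left[3^{1}\right]\right\vert=0$ or $\left\vert\left[2^{1},1^{1}\right]_{k}^{(1)}\right\vert=1$, and for $\left\vert\left[3^{1},2^{1}\right]_{i}^{(2)}\right\vert\leq 2$ it subtracts the unique codeword $B$ of type $\left[2^{1},1^{1}\right]$ with $B_{i}=2$ and contradicts $\left\vert\left[3^{1},1^{1}\right]_{k}^{(1)}\right\vert\leq 1$ from (\ref{s}) --- exactly the reference-subtraction scheme you describe, down to the observation that the contradiction must come from two candidates interacting through the reference. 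The only detail worth adjusting is that for the second inequality the violated unit slot is $\left[3^{1},1^{1}\right]_{k}^{(1)}$ (via (\ref{s})), not one of the $\left[2^{1},1^{1}\right]$ slots you list.
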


\begin{proof}
Assume by contradiction that there are two codewords $C^{1}$ and $C^{2}$ of
type $[3^1,2^1]$ with $C_{i}^{k}=3$ for $k=1,2.$ By (\ref{a1}) we have
$\left\vert \left[ 4^1 \right]_i \right\vert+\left\vert \left[ 3^1, 1^1 \right]_i^{(3)} \right\vert=1$ as we know from Theorem \ref{6} that $\left\vert \left[3^1 \right]_i \right\vert=0.$\bigskip

\noindent So, assume first that $\left\vert \left[ 4^{1} \right]_{i} \right\vert=1.$ Then there is a codeword $D$, with
$D_{i}=4.$ Say, w.l.o.g, $D=(4,0,...,0),$ and $C^{1}=(3,2,0,...,0),$
$C^{2}=(3,0,2,0,...,0)$. As $C^{1}-D=(-1,2,0,...,0),$ and $C^{2}%
-D=(-1,0,2,0,...,0),$ we arrived at a contradiction since with respect to $D$
we have $\left\vert \left[ 2^{1}, 1^{1} \right]_{i}^{(1)} \right\vert>1.$\bigskip

\noindent Suppose now that $\left\vert \left[ 3^1, 1^1 \right]_i^{(3)} \right\vert=1,$ i.e., there is codeword $E$ of
type $[3^{1},1^{1}]$ so that $E_{i}=3;$ say $E=(3,1,0,...,0\dot{)}.$ If
$C^{1}=(3,-2,0,...,0)$ then $C^{1}-E=(0,3,0,...,0)\,\ $a contradiction as $\left\vert \left[ 3^{1} \right] \right\vert=0
$ with respect to all codewords. So we may assume that $C^{1}=(3,0,2,0,..,0),
$ and $C^{2}=(3,0,0,2,0,...,0).$ Then $C^{1}-E=(0,-1,2,0,...,0),$ and
$C^{2}-E=(0,-1,0,2,0,...,0);$ i.e., with respect to $W,$ $\left\vert \left[ 2^{1}, 1^{1} \right] _{i}^{(1)} \right\vert>1,~$\ a
contradiction. The proof of the first part follows.\bigskip

\noindent Now let $B$ be a codeword of type $\left[  2^{1},1^{1}\right]  $ with
$B_{i}=2.$ Further, let $\left\vert \left[ 3^{1}, 2^{1} \right]_{i}^{(2)} \right\vert\newline\geq3$, and $C^{1},C^{2},$ and
$C^{3}$ be codewords of type $[3^{1},2^{1}]$ with $C_{i}^{j}=2\,,$ $j=1,...,3.$
We assume w.l.o.g. that $i=1,$ and $B=(2,1,0,..,0).$ Then there are at least
two of the codewords $C^{j},$ say $C^{1}$ and $C^{2}$ having the second
coordinate equal to $0,$ as otherwise the two codewords would be at distance
less than $3$. We assume w.l.o.g. $C^{1}=(2,0,3,0,...,0),$ and $C^{2}%
=(2,0,0,3,0,...,0).$ Hence $C^{1}-B=(0,-1,3,0,...,0)\,\ $\ and $C^{2}%
-B=(0,-1,0,3,0,...,0);$ i.e., with respect to the codeword $B\,\ $\ we get
$\left\vert \left[ 3^{1}, 1^{1} \right] _{-2}^{(1)} \right\vert>1.$ This contradicts (\ref{s}) because $\left\vert \left[ 3^1, 1^1 \right]_i^{(3)} \right\vert\leq1$ for
all $i\in I.$ The proof is complete.\bigskip
\end{proof}

\noindent Before we prove the next lemma we get equalities related to covering
words of absolute value $4.$ In $Z^{n}$ there are $2n$ words of type $[4^{1}].$
By Theorem \ref{6}, there is no codeword of type $[3^{1}].$ Hence we have%
\begin{equation}
\left\vert \left[ 4^{1} \right] \right\vert+\left\vert \left[ 5^{1} \right] \right\vert+\left\vert \left[ 4^{1}, 1^{1} \right] \right\vert=2n,\label{d}%
\end{equation}

\noindent and, for all $i\in I,$ the local form reads as follows:%
\begin{equation}
\left\vert \left[ 4^{1} \right]_{i} \right\vert+\left\vert \left[ 5^{1} \right] _{i} \right\vert+\left\vert \left[ 4^{1}, 1^{1} \right] _{i}^{(4)} \right\vert=1\label{d1}%
\end{equation}

\noindent Further, in $Z^{n}$ there are $2^{3}\binom{n}{2}$ words of type
$[3^{1},1^{1}].$ Each of them is covered by a codeword of type $\left[
3^{1}\right]  ,$ or $\left[  2^{1},1^{1}\right]  ,$ or $[ 3^{1}, 1^{1}]
,$ or $[ 4^{1}, 1^{1}]  ,$ or $[ 3^{1}, 2^{1}]  ,$ or
$\left[  3^{1},1^{2}\right]  .$ Only codewords of type $\left[  
3^{1},1^{2}\right]  $ cover two words of type $[3^{1},1^{1}].$ In addition
we know that there is no codeword of type $\left[  3^{1}\right]  $. Thus,%

\begin{equation}
\left\vert \left[ 2^{1}, 1^{1} \right] \right\vert+\left\vert \left[ 3^{1}, 1^{1} \right] \right\vert+\left\vert \left[ 4^{1}, 1^{1} \right] \right\vert+\left\vert \left[ 3^{1}, 2^{1} \right] \right\vert+2\left\vert \left[ 3^{1}, 1^{2} \right] \right\vert=8\binom{n}{2}\label{e}%
\end{equation}
For each $i\in I,$ there are $2(n-1)$ words $V$ of type $[3^{1},1^{1}]$ with
$V_{i}=3,$ and, at the same time, $2(n-1)$ words $V$ of type $[3^{1},1^{1}]$
with $V_{i}=1.$ Thus the local forms of (\ref{e}) read as follows:%
\begin{equation}
\begin{gathered}
\left\vert \left[ 2^1, 1^1 \right]_i^{(2)} \right\vert+\left\vert \left[ 3^1, 1^1 \right]_i^{(3)} \right\vert+\left\vert \left[ 4^{1}, 1^{1} \right] _{i}^{(4)} \right\vert+\left\vert \left[ 3^{1}, 2^{1} \right]_{i}^{(3)} \right\vert+\\2\left\vert \left[ 3^{1}, 1^{2} \right] _{i}^{(3)} \right\vert=2(n-1),\label{e1}%
\end{gathered}
\end{equation}

\noindent and
\begin{equation}
\begin{gathered}
\left\vert \left[ 2^{1}, 1^{1} \right] _{i}^{(1)} \right\vert+\left\vert \left[ 3^{1}, 1^{1} \right] _{i}^{(1)} \right\vert+\left\vert \left[ 4^{1}, 1^{1} \right] _{i}^{(1)} \right\vert+\left\vert \left[ 3^{1}, 2^{1} \right]_{i}^{(2)} \right\vert+\\ \left\vert \left[ 3^{1}, 1^{2} \right] _{i}^{(1)} \right\vert=2(n-1).\label{E2}%
\end{gathered}
\end{equation}

\noindent In $Z^{n}$ there are $2^{2}\binom{n}{2}$ words of type $[2^{2}].$
Each word of this type is covered by a codeword of type $\left[  
2^{1},1^{1}\right]  ,$ or $\left[  2^{2}\right]  ,$ or $\left[  
3^{1},2^{1}\right]  ,$ or $\left[  2^{2},1^{1}\right]  $. As each such codeword
covers one word of type $[2^{2}]$ we have
\begin{equation}
\left\vert \left[ 2^{1}, 1^{1} \right] \right\vert+\left\vert \left[ 2^{2} \right] \right\vert+\left\vert \left[ 3^{1}, 2^{1} \right] \right\vert+\left\vert \left[ 2^{2}, 1^{1} \right] \right\vert=2^{2}\binom{n}{2}\label{f}%
\end{equation}

\noindent and, for each $i\in I,$ we get%
\begin{equation}
\left\vert \left[ 2^{1}, 1^{1} \right] _{i} \right\vert+\left\vert \left[ 2^2 \right]_i \right\vert+\left\vert \left[ 3^{1}, 2^{1} \right] _{i} \right\vert+\left\vert \left[ 2^{2}, 1^{1} \right] _{i}^{(2)} \right\vert=2(n-1)\label{f1}%
\end{equation}

\noindent In $Z^{n}$ there are $3\cdot2^{3}\binom{n}{3}$ words of type
$[2^{1}, 1^{2}].$ Each of these words is covered by a codeword of type
$\left[  2^{1},1^{1}\right]  ,$ or $[1^{3}]  ,$ or $\left[
2^{1},1^{2}\right]  ,$ or $\left[  3^{1},1^{2}\right]  ,$ or $\left[
2^{1},1^{3}\right]  ,$ or $\left[  2^{2},1^{1}\right]  .$ As each codeword
of type $[2^{1}, 1^{1}]$ covers $2(n-2)$ of them, each codeword of type
$[1^{3}]$ and of type $[2^{1},1^{3}]$ covers three of them, and each
codeword of type $[2^{2}, 1^{1}]$ covers two of them, we get%
\begin{equation}
\begin{gathered}
2(n-2)\left\vert \left[ 2^{1}, 1^{1} \right] \right\vert+3\left\vert \left[ 1^{3} \right] \right\vert+\left\vert \left[ 2^{1}, 1^{2} \right] \right\vert+\left\vert \left[ 3^{1}, 1^{2} \right] \right\vert+\\ 3\left\vert \left[ 2^{1}, 1^{3} \right] \right\vert+2\left\vert \left[ 2^{2}, 1^{1} \right] \right\vert=24\binom{n}{3}\label{g}%
\end{gathered}
\end{equation}

\noindent For each $i\in I,$ in $Z^{n}$ there are $2^{2}\binom{n-1}{2}$ words
$V$ of type $[2^{1}, 1^{2}]$ with $V_{i}=2.$ Hence, a local form of (\ref{g})
is%
\begin{equation}
\begin{gathered}
2(n-2)\left\vert \left[ 2^1, 1^1 \right]_i^{(2)} \right\vert+\left\vert \left[  1^{3} \right] _{i} \right\vert+\left\vert \left[ 2^{1}, 1^{2} \right]_{i}^{(2)} \right\vert+\left\vert \left[ 3^{1}, 1^{2} \right] _{i}^{(3)} \right\vert+ \\ 3\left\vert \left[ 2^{1}, 1^{3} \right] _{i}^{(2)} \right\vert+\left\vert \left[ 2^{2}, 1^{1} \right] _{i}^{(2)} \right\vert=2^{2}\binom{n-1}{2}.\label{g1}%
\end{gathered}
\end{equation}
In $Z^{n}$ there are $8\binom{n-1}{2}$ words $V$ of type $\left[  
2^{1},1^{2}\right]  $ with $V_{i}=1.$ It is not difficult to see that:%
\begin{equation}
\begin{gathered}
\left\vert \left[ 2^{1}, 1^{1} \right] _{i}^{(0)} \right\vert+2(n-2)\left\vert \left[ 2^{1}, 1^{1} \right] _{i}^{(1)} \right\vert+2\left\vert \left[  1^{3} \right] _{i} \right\vert+\left\vert \left[ 2^{1}, 1^{2} \right] _{i}^{(1)} \right\vert+\\ \left\vert \left[ 3^{1}, 1^{2} \right] _{i}^{(1)} \right\vert+2\left\vert \left[ 2^{1}, 1^{3} \right] _{i}^{(1)} \right\vert+\left\vert \left[ 2^{2}, 1^{1} \right] _{i}^{(2)} \right\vert+\\ 2\left\vert \left[ 2^{2}, 1^{1} \right] _{i}^{(1)} \right\vert=8\binom{n-1}%
{2},\label{g2}%
\end{gathered}
\end{equation}

\noindent where $\left\vert \left[ 2^{1}, 1^{1} \right] _{i}^{(0)} \right\vert$stands for codeword $V$ of type $\left[ 2^{1},1^{1}\right]  $ with $V_{i}=V_{-i}=0.$ As $\left\vert \left[ 2^{1}, 1^{1} \right] _{i} \right\vert=\left\vert \left[ 2^{1}, 1^{1} \right] _{-i} \right\vert=2,$ we have
$\left\vert \left[ 2^{1}, 1^{1} \right] _{i}^{(0)} \right\vert=2(n-2).$ Finally, in $Z^{n}$ there are $2^{4}\binom{n}{4}$ words
of type $\left[ 1^{4}\right]  .$ Each of them is covered by a codeword of
type $[1^{3}]  ,$ or $\left[  1^{4}\right]  ,$ or $\left[
1^{5}\right]  ,$ or $\left[  2^{1},1^{3}\right]  .$ Clearly each codeword
of type $[1^{3}]  $ covers $2(n-3)$ words of type $\left[
1^{4}\right]  ,$ while each codeword of type $\left[  1^{5}\right]  $
covers $5$ words of type $\left[  1^{4}\right]  .$ Therefore:%
\begin{equation}
2(n-3)\left\vert \left[ 1^{3} \right] \right\vert+\left\vert \left[ 1^{4} \right]  \right\vert + \left\vert \left[ 2^{1}, 1^{3} \right] \right\vert+5\left\vert \left[ 1^{5} \right]  \right\vert=2^{4}\binom{n}{4}.\label{h}%
\end{equation}

\noindent The local form of (\ref{h}) reads as follows%
\begin{equation}
\begin{gathered}
\left\vert \left[ 1^{3} \right] _{i}^{(0)}  \right\vert+2(n-3)\left\vert \left[  1^{3} \right] _{i} \right\vert+\left\vert \left[ 1^{4} \right] _{i}  \right\vert+\left\vert \left[ 2^{1}, 1^{3} \right] _{i}  \right\vert+\\ 4\left\vert \left[ 1^{5} \right] _{i}  \right\vert=2^{3}\binom{n-1}%
{3},\label{h1}%
\end{gathered}
\end{equation}
where $\left\vert \left[ 1^{3} \right] _{i}^{(0)}  \right\vert$ stands for the number of codewords $V$ of type $\left[1^{3}\right]  $ with $V_{i}=V_{-i}=0.$\bigskip

\noindent Now we are ready to prove a lemma crucial for the proof of the next
theorem. (\ref{d1}) states that $\left\vert \left[ 4^{1} \right]_{i} \right\vert+\left\vert \left[ 5^{1} \right] _{i} \right\vert+\left\vert \left[ 4^{1}, 1^{1} \right] _{i}^{(4)} \right\vert=1.\,\ $Thus,
the lemma covers all possible cases.

\begin{lemma}
\label{L5}Let $i\in I.$ If $\left\vert \left[ 4^{1} \right]_{i} \right\vert=1,$ then $\left\vert \left[ 2^2 \right]_i \right\vert=1,$ $\left\vert \left[ 3^1, 1^1 \right]_i^{(3)} \right\vert=0,$ and
$\left\vert \left[ 3^{1}, 2^{1} \right]_{i}^{(3)} \right\vert=1,$ while if $\left\vert \left[ 5^{1} \right] _{i} \right\vert=1,$ then $\left\vert \left[ 2^2 \right]_i \right\vert=0,\left\vert \left[ 3^1, 1^1 \right]_i^{(3)} \right\vert=1,$
and $\left\vert \left[ 3^{1}, 2^{1} \right]_{i}^{(3)} \right\vert=0,$ but if $\left\vert \left[ 4^{1}, 1^{1} \right] _{i}^{(4)} \right\vert=1$ then $\left\vert \left[ 2^2 \right]_i \right\vert=0,\left\vert \left[ 3^{1}, 1^{1} \right] _{i}^{(3)}  \right\vert=1,$ and $\left\vert \left[ 3^{1}, 2^{1} \right]_{i}^{(3)} \right\vert=1.$ In particular, $\left\vert \left[ 3^{1}, 2^{1} \right] \right\vert=\left\vert \left[ 4^{1} \right] \right\vert+\left\vert \left[ 4^{1}, 1^{1} \right] \right\vert.$
\end{lemma}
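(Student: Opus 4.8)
The plan is to fix an arbitrary signed coordinate $i\in I$ and treat the three cases of the lemma in parallel, observing that they are precisely the three ways to satisfy (\ref{d1}): since $|[4^1]_i|+|[5^1]_i|+|[4^1,1^1]_i^{(4)}|=1$, exactly one of the three summands equals $1$. Every local identity I need already holds for each $i\in I$, so I can work with a single fixed $i$ and need no relabelling of coordinates.

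First I would dispose of the values $|[2^2]_i|$ and $|[3^1,1^1]_i^{(3)}|$, which fall straight out of the relations already in hand. By Theorem \ref{6} we have $|[2^1,1^1]_i^{(2)}|=1$, so (\ref{ee}) collapses to $|[3^1,1^1]_i^{(3)}|+|[2^2]_i|=1$, while (\ref{a1}), using $|[3^1]_i|=0$, reads $|[4^1]_i|+|[3^1,1^1]_i^{(3)}|=1$. Hence $|[2^2]_i|=|[4^1]_i|$ and $|[3^1,1^1]_i^{(3)}|=1-|[4^1]_i|$. When $|[4^1]_i|=1$ this gives $|[2^2]_i|=1$, $|[3^1,1^1]_i^{(3)}|=0$; when instead $|[5^1]_i|=1$ or $|[4^1,1^1]_i^{(4)}|=1$ we have $|[4^1]_i|=0$ by (\ref{d1}), so $|[2^2]_i|=0$ and $|[3^1,1^1]_i^{(3)}|=1$. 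This settles the first two asserted values in all three cases.

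The real content is the value of $|[3^1,2^1]_i^{(3)}|$, and the key device here is a parity argument. Substituting $|[2^1,1^1]_i^{(2)}|=1$ into the local identity (\ref{e1}) leaves $|[3^1,1^1]_i^{(3)}|+|[4^1,1^1]_i^{(4)}|+|[3^1,2^1]_i^{(3)}|+2|[3^1,1^2]_i^{(3)}|=2n-3$. In each case the values $|[3^1,1^1]_i^{(3)}|$ and $|[4^1,1^1]_i^{(4)}|$ are now known, so this reduces to $|[3^1,2^1]_i^{(3)}|+2|[3^1,1^2]_i^{(3)}|=c$ with a definite constant: one computes $c=2n-3$ when $|[4^1]_i|=1$, $c=2n-4$ when $|[5^1]_i|=1$, and $c=2n-5$ when $|[4^1,1^1]_i^{(4)}|=1$. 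Since $2|[3^1,1^2]_i^{(3)}|$ is even, $|[3^1,2^1]_i^{(3)}|$ has the same parity as $c$, and by Lemma \ref{L3} it lies in $\{0,1\}$. An odd $c$ (the first and third cases) forces $|[3^1,2^1]_i^{(3)}|=1$, while the even $c$ (the second case) forces $|[3^1,2^1]_i^{(3)}|=0$, exactly as claimed.

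Finally, the closing identity $|[3^1,2^1]|=|[4^1]|+|[4^1,1^1]|$ follows by summing the per-coordinate equality $|[3^1,2^1]_i^{(3)}|=|[4^1]_i|+|[4^1,1^1]_i^{(4)}|$, which the three cases verify, over all $i\in I$: each $[3^1,2^1]$, $[4^1]$, and $[4^1,1^1]$ codeword is counted exactly once, in the signed coordinate carrying its entry of largest absolute value. I expect the main obstacle to be organisational rather than conceptual, namely checking that (\ref{a1}), (\ref{ee}), (\ref{e1}) and Lemma \ref{L3} are all genuinely available for one fixed $i$ with no sign or indexing slip; once that is secured, the parity-plus-bound step does all the work.
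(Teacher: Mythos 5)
Your proof is correct, and for the core of the lemma it takes a genuinely different route from the paper's. The preliminary steps coincide: like the authors you obtain $\left\vert \left[ 2^{2}\right] _{i}\right\vert $ and $\left\vert \left[ 3^{1},1^{1}\right] _{i}^{(3)}\right\vert $ from (\ref{a1}) and (\ref{ee}) together with $\left\vert \left[ 2^{1},1^{1}\right] _{i}^{(2)}\right\vert =1$, and the closing summation over $i\in I$ is identical. The divergence is in how $\left\vert \left[ 3^{1},2^{1}\right] _{i}^{(3)}\right\vert $ is determined. The paper applies the parity-of-(\ref{e1})-plus-Lemma-\ref{L3} device only in the case $\left\vert \left[ 4^{1},1^{1}\right] _{i}^{(4)}\right\vert =1$; in the other two cases it argues geometrically, producing explicit codewords and reading off contradictions in the neighborhoods of the type-$\left[ 4^{1}\right] $ codeword $D$ and the type-$\left[ 5^{1}\right] $ codeword $W$ (for instance, a putative type-$\left[ 3^{1},2^{1}\right] $ codeword with $C_{i}=3$ would force $\left\vert \left[ 2^{2}\right] _{-1}\right\vert =1$ with respect to $W$, violating (\ref{ee})). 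You instead run the same parity argument uniformly in all three cases: after substituting the known values into (\ref{e1}), the residual constants $2n-3$, $2n-4$, $2n-5$ fix the parity of $\left\vert \left[ 3^{1},2^{1}\right] _{i}^{(3)}\right\vert $, and the unconditional bound $\left\vert \left[ 3^{1},2^{1}\right] _{i}^{(3)}\right\vert \leq 1$ of Lemma \ref{L3} then pins it down; I have checked the three constants and they are correct. Your version is shorter and avoids all case-specific configurations, relying only on counting identities already on record; the paper's geometric arguments yield a little extra structural information about where the $\left[ 3^{1},2^{1}\right] $ codeword sits relative to $D$, but that information is not used later, so nothing is lost by your route.
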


\begin{proof}
Assume first that $\left\vert \left[ 4^{1} \right]_{i} \right\vert=1,$ and let $D$ be a codeword in $\mathcal{T}_\mathcal{L}$ of
type $\left[  4^{1}\right]  $ with $D_{i}=4$. As $\left\vert \left[ 4^{1} \right]_{i} \right\vert=1,$ then taking into
account $\left\vert \left[ 3^{1} \right] _{i} \right\vert=0$ and (\ref{a1}), we get $\left\vert \left[ 3^1, 1^1 \right]_i^{(3)} \right\vert=0.$ This in turn
implies, due to (\ref{ee}), that $\left\vert \left[ 2^2 \right]_i \right\vert=1.$ Consider the $3$-neighbourhood
$\mathcal{N}$ of $D$. \ By Theorem \ref{6}, we have for $\mathcal{N}$ that
$\left\vert \left[ 2^{1}, 1^{1} \right] _{i}^{(1)} \right\vert=1$ for all $i\in I.$ That is there has to be in $\mathcal{T}_\mathcal{L}$ a
codeword $B$ of type $[2^{1}, 1^{1}]$ with respect to $D,$ with $B_{i}-D_{i}=-1,$
that is $(B-D)_{-i}^{(1)}=1.$ Thus $B_{i}=3$ and there is a $j\in J$ so that
$B_{j}=2.$ Hence, $B$ is a codeword of type $[3^{1},2^{1}]$ with respect to the
origin $O\,,$ and therefore $\left\vert \left[ 3^{1}, 2^{1} \right]_{i}^{(3)} \right\vert\geq1,$ while by Lemma \ref{L3}
we get $\left\vert \left[ 3^{1}, 2^{1} \right]_{i}^{(3)} \right\vert=1.$ The first part of the proof is complete.\bigskip

\noindent Let now $\left\vert \left[ 5^{1} \right] _{i} \right\vert=1.$ Then there is a codeword $W$ in
$\mathcal{T}_\mathcal{L}$ of type $[5^{1}]$ with $W_{i}=5.$ Further, by (\ref{d1}),
$\left\vert \left[ 4^{1} \right]_{i} \right\vert=0,$ which in turn implies, see (\ref{a1}), that $\left\vert \left[ 3^1, 1^1 \right]_i^{(3)} \right\vert=1,$ and by
(\ref{ee}), $\left\vert \left[ 2^2 \right]_i \right\vert=0.$ Now we prove that in this case $\left\vert \left[ 3^{1}, 2^{1} \right]_{i}^{(3)} \right\vert=0$ as
well. Let $B\mathcal{\ }$be a codeword of type $\left[  2^{1},1^{1}\right]  $
with $B_{i}=2;$ w.l.o.g., let $W=(5,0,...,0),$ and $B=(2,1,0,...,0).$ Assume
that there is a codeword $C$ of type $[3^{1},2^{1}]$ with $C_{i}=3.$ Then
$B-W=(-3,1,0,...,0)$ and $C-W=(-2,0,2,0,...,0).$ That is, with respect to the
codeword $W,$ we have $\left\vert \left[ 3^{1}, 1^{1} \right] _{-1}^{(3)} \right\vert=1$ but also $\left\vert \left[ 2^{2} \right] _{-1} \right\vert=1,$ which contradicts
(\ref{ee}) as $\left\vert \left[ 2^1, 1^1 \right]_i^{(2)} \right\vert=1$ for all $i\in I.$ Thus $\left\vert \left[ 3^{1}, 2^{1} \right]_{i}^{(3)} \right\vert=0$ in
this case. The proof of the second part of the statement is complete.\bigskip

\noindent Now, assume that $\left\vert \left[ 4^{1}, 1^{1} \right] _{i}^{(4)} \right\vert=1.$ We need to show that in this
case $\left\vert \left[ 3^{1}, 2^{1} \right]_{i}^{(3)} \right\vert=1$ as well. However, by (\ref{e1}), $\left\vert \left[ 2^1, 1^1 \right]_i^{(2)} \right\vert%
+\left\vert \left[ 3^1, 1^1 \right]_i^{(3)} \right\vert+\left\vert \left[ 4^{1}, 1^{1} \right] _{i}^{(4)} \right\vert+\left\vert \left[ 3^{1}, 2^{1} \right]_{i}^{(3)} \right\vert$ is an even number, and in this
case we have $\left\vert \left[ 2^1, 1^1 \right]_i^{(2)} \right\vert=\left\vert \left[ 3^1, 1^1 \right]_i^{(3)} \right\vert=\left\vert \left[ 4^{1}, 1^{1} \right] _{i}^{(4)} \right\vert=1. $ Hence $\left\vert \left[ 3^{1}, 2^{1} \right] _{i}^{(3)} \right\vert$ is odd, and, by Lemma \ref{L3}, $\left\vert \left[ 3^{1}, 2^{1} \right]_{i}^{(3)} \right\vert=1.$ 

\noindent Finally, $\left\vert \left[ 3^{1}, 2^{1} \right]   \right\vert=%
{\displaystyle\sum\limits_{i\in I}}
\left\vert \left[ 3^{1}, 2^{1} \right]_{i}^{(3)} \right\vert=%
{\displaystyle\sum\limits_{\left\vert \left[ 4^{1} \right]_{i} \right\vert=1}}
\left\vert \left[ 3^{1}, 2^{1} \right]_{i}^{(3)} \right\vert +\\%
{\displaystyle\sum\limits_{\left\vert \left[ 4^{1}, 1^{1} \right] _{i}^{(4)} \right\vert=1}}
\left\vert \left[ 3^{1}, 2^{1} \right]_{i}^{(3)} \right\vert +%
{\displaystyle\sum\limits_{\left\vert \left[ 5^{1} \right] _{i} \right\vert=1}}
\left\vert \left[ 3^{1}, 2^{1} \right]_{i}^{(3)} \right\vert=%
{\displaystyle\sum\limits_{\left\vert \left[ 4^{1} \right]_{i} \right\vert=1}}
\left\vert \left[ 3^{1}, 2^{1} \right]_{i}^{(3)} \right\vert+\\%
{\displaystyle\sum\limits_{\left\vert \left[ 4^{1}, 1^{1} \right] _{i}^{(4)} \right\vert=1}}
\left\vert \left[ 3^{1}, 2^{1} \right]_{i}^{(3)} \right\vert=\left\vert \left[ 4^{1} \right] \right\vert + \left\vert \left[ 4^{1}, 1^{1} \right] \right\vert.$ The proof is complete.\bigskip
\end{proof}

\noindent We are ready to determine the remaining values of $\left\vert \left[ 4^{1} \right] \right\vert,\left\vert \left[ 3^{1}, 1^{1} \right] \right\vert,$ and $\left\vert \left[ 2^{2} \right] \right\vert.$

\begin{theorem}
\label{8}For each codeword $W$in $\mathcal{T}_\mathcal{L},$ we have $\left\vert \left[ 3^{1}, 1^{1} \right] \right\vert=2n,$ $\ $and
$\left\vert \left[ 4^{1} \right] \right\vert=\left\vert \left[ 2^{2} \right] \right\vert=0.$ In addition, for all $i\in I,$ $\left\vert \left[ 3^1, 1^1 \right]_i^{(3)} \right\vert=\left\vert \left[ 3^{1}, 1^{1} \right] _{i}^{(1)} \right\vert\mbox{=1}.$
\end{theorem}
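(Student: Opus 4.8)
The plan is to reduce everything to the single claim that $\left\vert \left[ 4^{1}\right] _{i}\right\vert =0$ for every $i\in I$, and then to kill any putative $\left[ 4^{1}\right]$-codeword by a change-of-basepoint argument that produces two distinct $\left[ 2^{2}\right]$-codewords sharing a signed coordinate, contradicting Lemma \ref{L2}. First I would record that the theorem follows cheaply once $\left\vert \left[ 4^{1}\right] _{i}\right\vert =0$ is known for all $i$. Indeed, by (\ref{a1}) together with $\left\vert \left[ 3^{1}\right] _{i}\right\vert =0$ (Theorem \ref{6}) we have $\left\vert \left[ 4^{1}\right] _{i}\right\vert +\left\vert \left[ 3^{1},1^{1}\right] _{i}^{(3)}\right\vert =1$, so $\left\vert \left[ 4^{1}\right] _{i}\right\vert =0$ forces $\left\vert \left[ 3^{1},1^{1}\right] _{i}^{(3)}\right\vert =1$, and then (\ref{s}) gives $\left\vert \left[ 3^{1},1^{1}\right] _{i}^{(1)}\right\vert =\left\vert \left[ 3^{1},1^{1}\right] _{i}^{(3)}\right\vert =1$. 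Summing $\left\vert \left[ 4^{1}\right] _{i}\right\vert =0$ over $i\in I$ gives $\left\vert \left[ 4^{1}\right] \right\vert =0$, whence $\left\vert \left[ 3^{1},1^{1}\right] \right\vert =2n$ by (\ref{a}). For $\left\vert \left[ 2^{2}\right] \right\vert =0$ I would use the case analysis of Lemma \ref{L5}: its three cases are exhaustive by (\ref{d1}) and mutually exclusive, and $\left\vert \left[ 2^{2}\right] _{i}\right\vert =1$ occurs \emph{only} in the case $\left\vert \left[ 4^{1}\right] _{i}\right\vert =1$; hence $\left\vert \left[ 4^{1}\right] _{i}\right\vert =0$ for all $i$ yields $\left\vert \left[ 2^{2}\right] _{i}\right\vert =0$ for all $i$ and therefore $\left\vert \left[ 2^{2}\right] \right\vert =0$.

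The heart of the argument is then to show that $\left\vert \left[ 4^{1}\right] _{i}\right\vert =1$ is impossible. Suppose for contradiction that some $\left\vert \left[ 4^{1}\right] _{i}\right\vert =1$; after a permutation of axes and a sign change (Claim \ref{0}) I may assume $i=+1$. By Lemma \ref{L5} this gives $\left\vert \left[ 2^{2}\right] _{1}\right\vert =1$, so there is a unique codeword $F$ of type $\left[ 2^{2}\right]$ with $F_{1}=2$; applying a further permutation of the axes $2,\dots ,n$ and a sign change on the relevant axis (operations fixing the $+1$ direction, hence preserving $\left\vert \left[ 4^{1}\right] _{1}\right\vert =1$) I may take $F=(2,2,0,\dots ,0)$. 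The point is that all the structural facts established so far hold with respect to \emph{every} codeword, in particular with respect to $F$. Viewed from $F$, the origin satisfies $O-F=(-2,-2,0,\dots ,0)$, which is a codeword of type $\left[ 2^{2}\right]$ with signed coordinate $-2$; thus $\left\vert \left[ 2^{2}\right] _{-2}\right\vert _{F}\geq 1$, and by Lemma \ref{L2} applied at $F$ we get $\left\vert \left[ 2^{2}\right] _{-2}\right\vert _{F}=1$. Using the equivalence $\left\vert \left[ 2^{2}\right] _{j}\right\vert =1\Rightarrow \left\vert \left[ 4^{1}\right] _{j}\right\vert =1$ (again Lemma \ref{L5}, now at $F$), there is a codeword $M$ with $(M-F)_{-2}=4$; that is, $M_{2}=F_{2}-4=-2$ while $M_{1}=F_{1}=2$ and all other coordinates vanish, so $M=(2,-2,0,\dots ,0)$.

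This $M$ is a codeword of type $\left[ 2^{2}\right]$ (with respect to $O$) whose first coordinate equals $2$, and $M\neq F=(2,2,0,\dots ,0)$. Hence $\left\vert \left[ 2^{2}\right] _{1}\right\vert \geq 2$, contradicting Lemma \ref{L2}. This contradiction establishes $\left\vert \left[ 4^{1}\right] _{i}\right\vert =0$ for all $i\in I$ and completes the proof via the reduction above. I expect the only delicate point to be the bookkeeping of the change of basepoint: one must verify carefully that the data transported to $F$ is read with the correct signed-coordinate conventions (so that $O-F$ contributes to $\left\vert \left[ 2^{2}\right] _{-2}\right\vert _{F}$), and that the normalizing symmetries used to fix $F=(2,2,0,\dots ,0)$ do not disturb the assumption $\left\vert \left[ 4^{1}\right] _{1}\right\vert =1$. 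Everything else is a direct application of Lemmas \ref{L2} and \ref{L5} together with (\ref{a}), (\ref{a1}), and (\ref{s}).
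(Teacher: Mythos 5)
Your proposal is correct, and it reaches the conclusion by a genuinely different and considerably shorter route than the paper. The reduction is common to both: once $\left\vert \left[ 4^{1}\right] _{i}\right\vert =0$ is known for every $i\in I$, equations (\ref{a1}) and (\ref{s}) give $\left\vert \left[ 3^{1},1^{1}\right] _{i}^{(3)}\right\vert =\left\vert \left[ 3^{1},1^{1}\right] _{i}^{(1)}\right\vert =1$, equation (\ref{a}) gives $\left\vert \left[ 3^{1},1^{1}\right] \right\vert =2n$, and $\left\vert \left[ 2^{2}\right] \right\vert =0$ follows either from (\ref{ee}) (the paper's route) or from your observation that the three cases of Lemma \ref{L5} are exhaustive and mutually exclusive by (\ref{d1}) and only the first admits $\left\vert \left[ 2^{2}\right] _{i}\right\vert =1$. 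The difference lies in how $\left\vert \left[ 4^{1}\right] _{i}\right\vert =0$ is obtained. The paper derives the local relations (\ref{r1})--(\ref{r2}), sums them into (\ref{eee1})--(\ref{eee2}), substitutes into (\ref{g}), (\ref{g1}) and (\ref{g2}), and extracts a parity statement --- $\left\vert \left[ 4^{1},1^{1}\right] _{i}^{(1)}\right\vert$ is odd whenever $\left\vert \left[ 4^{1}\right] _{i}\right\vert =1$ --- which, combined with $\sum_{i\in I}\left\vert \left[ 4^{1},1^{1}\right] _{i}^{(1)}\right\vert =\sum_{i\in I}\left\vert \left[ 4^{1},1^{1}\right] _{i}^{(4)}\right\vert$ and $\left\vert \left[ 4^{1},1^{1}\right] _{i}^{(4)}\right\vert \leq 1$, yields (\ref{beta}) and hence $\left\vert \left[ 4^{1}\right] \right\vert =0$. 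You instead exploit the equivalence $\left\vert \left[ 2^{2}\right] _{i}\right\vert =1\Leftrightarrow \left\vert \left[ 4^{1}\right] _{i}\right\vert =1$ supplied by Lemma \ref{L5} together with (\ref{d1}), and run a change-of-basepoint argument: from $\left\vert \left[ 4^{1}\right] _{1}\right\vert =1$ you obtain $F=(2,2,0,\dots ,0)\in \mathcal{T}$; reading the origin from $F$ gives $\left\vert \left[ 2^{2}\right] _{-2}\right\vert \geq 1$ with respect to $F$, hence by the equivalence at $F$ a codeword $M$ with $M-F=(0,-4,0,\dots ,0)$, and then $M=(2,-2,0,\dots ,0)$ and $F$ are two distinct codewords of type $\left[ 2^{2}\right]$ counted by $\left\vert \left[ 2^{2}\right] _{1}\right\vert$ at the origin, violating Lemma \ref{L2}. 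I checked the signed-coordinate bookkeeping and the normalizations (the permutations and sign changes of Claim \ref{0} used to fix $F$ do indeed fix the $+1$ direction); the argument is sound and works for every $n=2(\operatorname{mod}3)$, the same generality as the paper's version. What your approach buys is brevity and transparency, avoiding (\ref{g})--(\ref{g2}) and the parity computation entirely; what it does not do is produce the auxiliary identities (\ref{eee1}), (\ref{eee2}), (\ref{i}), (\ref{lambda}), (\ref{lambda1}) and (\ref{gama}) that the paper's proof generates along the way and reuses in Theorems \ref{9} and \ref{10}, so those would still need to be derived separately if your proof replaced the paper's.
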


\begin{proof}
As above, we assume w.l.o.g. that $W=O.$ We recall that, by (\ref{s}),
$\left\vert \left[ 3^1, 1^1 \right]_i^{(3)} \right\vert=\left\vert \left[ 3^{1}, 1^{1} \right] _{i}^{(1)} \right\vert$for all $i\in I$. Since $\left\vert \left[ 3^{1}, 2^{1} \right] _{i} \right\vert=\left\vert \left[ 3^{1}, 2^{1} \right] _{i}^{(3)} \right\vert+\left\vert \left[ 3^{1}, 2^{1} \right]_{i}^{(2)} \right\vert,$ we get from (\ref{f1})
\begin{equation}
\begin{gathered}
\left\vert \left[ 2^{1}, 1^{1} \right] _{i} \right\vert+\left\vert \left[  2^{2} \right] _{i} \right\vert+\left\vert \left[ 3^{1}, 2^{1} \right]_{i}^{(3)} \right\vert+\left\vert \left[ 3^{1}, 2^{1} \right]_{i}^{(2)} \right\vert+\left\vert \left[ 2^{2}, 1^{1} \right] _{i}^{(2)} \right\vert\\
=2(n-1)\label{rr}%
\end{gathered}
\end{equation}
Combining Lemma \ref{L5} with (\ref{e1}), (\ref{E2}), \ and (\ref{rr}) yields:

\noindent If $\left\vert \left[ 4^{1} \right]_{i} \right\vert=1,$ then
\begin{equation}
\begin{gathered}
\left\vert \left[ 3^{1}, 1^{2} \right] _{i}^{(3)} \right\vert=n-2,\\ \left\vert \left[ 3^{1}, 1^{2} \right] _{i}^{(1)} \right\vert=2n-3-\left\vert \left[ 4^{1}, 1^{1} \right] _{i}^{(1)} \right\vert-\left\vert \left[  3^{1}, 2^{1} \right] _{i}^{(2)} \right\vert,\\ \text{ and }\left\vert \left[ 2^{2}, 1^{1} \right] _{i}^{(2)} \right\vert=2n-6-\left\vert \left[ 3^{1}, 2^{1} \right]_{i}^{(2)} \right\vert\label{r1}%
\end{gathered}
\end{equation}

\noindent For $\left\vert \left[  5^{1} \right] _{1} \right\vert=1,$ it is
\begin{equation}
\begin{gathered}
\left\vert \left[ 3^{1}, 1^{2} \right] _{i}^{(3)} \right\vert=n-2,\\ \left\vert \left[ 3^{1}, 1^{2} \right] _{i}^{(1)} \right\vert=2n-4-\left\vert \left[ 4^{1}, 1^{1} \right] _{i}^{(1)} \right\vert-\left\vert \left[  3^{1}, 2^{1} \right] _{i}^{(2)} \right\vert=,\\ \text{ and }\left\vert \left[ 2^{2}, 1^{1} \right] _{i}^{(2)} \right\vert=2n-4-\left\vert \left[ 3^{1}, 2^{1} \right]_{i}^{(2)} \right\vert,\label{r3}%
\end{gathered}
\end{equation}

\noindent and for $\left\vert \left[  4^{1}, 1^{1} \right] _{i} \right\vert=1$ we get%
\begin{equation}
\begin{gathered}
\left\vert \left[ 3^{1}, 1^{2} \right] _{i}^{(3)} \right\vert=n-3, \\ \left\vert \left[ 3^{1}, 1^{2} \right] _{i}^{(1)} \right\vert=2n-4-\left\vert \left[ 4^{1}, 1^{1} \right] _{i}^{(1)} \right\vert-\left\vert \left[  3^{1}, 2^{1} \right] _{i}^{(2)} \right\vert=,\\ \text{ and }\left\vert \left[ 2^{2}, 1^{1} \right] _{i}^{(2)} \right\vert=2n-5-\left\vert \left[ 3^{1}, 2^{1} \right]_{i}^{(2)} \right\vert.\label{r2}%
\end{gathered}
\end{equation}

\noindent Summing (\ref{r1}), (\ref{r3}), and (\ref{r2}) yields:

$\left\vert \left[ 3^{1}, 1^{2} \right] \right\vert=%
{\displaystyle\sum\limits_{i\in I}}
\left\vert \left[ 3^{1}, 1^{2} \right] _{i}^{(3)} \right\vert=%
{\displaystyle\sum\limits_{\left\vert \left[ 4^{1} \right]_{i} \right\vert=1}}
\left\vert \left[ 3^{1}, 1^{2} \right] _{i}^{(3)} \right\vert+%
{\displaystyle\sum\limits_{\left\vert \left[ 4^{1}, 1^{1} \right] _{i}^{(4)} \right\vert=1}}
\left\vert \left[ 3^{1}, 1^{2} \right] _{i}^{(3)} \right\vert+%
{\displaystyle\sum\limits_{\left\vert \left[ 5^{1} \right] _{i} \right\vert=1}}
\left\vert \left[ 3^{1}, 1^{2} \right] _{i}^{(3)} \right\vert=$ $\left\vert \left[ 4^{1} \right] \right\vert(n-2)+\left\vert \left[ 4^{1}, 1^{1} \right] \right\vert(n-3)+\left\vert \left[ 5^{1} \right] \right\vert(n-2).$ Hence, using
$\left\vert \left[ 4^{1} \right] \right\vert+\left\vert \left[ 4^{1}, 1^{1} \right] \right\vert+\left\vert \left[ 5^{1} \right] \right\vert=2n,$ see (\ref{d}), we get
\begin{equation}
\left\vert \left[ 3^{1}, 1^{2} \right] \right\vert=2n(n-2)-\left\vert \left[ 4^{1}, 1^{1} \right] \right\vert;\label{eee1}%
\end{equation}

\noindent and \bigskip

$2\left\vert \left[ 2^{2}, 1^{1} \right] \right\vert=%
{\displaystyle\sum\limits_{i\in I}}
\left\vert \left[ 2^{2}, 1^{1} \right] _{i}^{(2)} \right\vert=%
{\displaystyle\sum\limits_{\left\vert \left[ 4^{1} \right]_{i} \right\vert=1}}
\left\vert \left[ 2^{2}, 1^{1} \right] _{i}^{(2)} \right\vert +%
{\displaystyle\sum\limits_{\left\vert \left[ 4^{1}, 1^{1} \right] _{i}^{(4)} \right\vert=1}}
\left\vert \left[ 2^{2}, 1^{1} \right] _{i}^{(2)} \right\vert\\+%
{\displaystyle\sum\limits_{\left\vert \left[ 5^{1} \right] _{i} \right\vert=1}}
\left\vert \left[ 2^{2}, 1^{1} \right] _{i}^{(2)} \right\vert=$ $\left\vert \left[ 4^{1} \right] \right\vert(2n-6)+\left\vert \left[ 4^{1}, 1^{1} \right] \right\vert(2n-5)+\left\vert \left[ 5^{1} \right] \right\vert(2n-4)-%
{\displaystyle\sum\limits_{i\in I}}
\left\vert \left[ 3^{1}, 2^{1} \right]_{i}^{(2)} \right\vert=2n(2n-6)+\left\vert \left[ 4^{1}, 1^{1} \right] \right\vert+2\left\vert \left[ 5^{1} \right] \right\vert-\left\vert \left[ 3^{1},2^{1} \right] \right\vert$\ as, see \mbox{Lemma \ref{L5}},
$\left\vert \left[ 3^{1}, 2^{1} \right]   \right\vert=\left\vert \left[ 4^{1} \right] \right\vert+\left\vert \left[ 4^{1}, 1^{1} \right] \right\vert.$ Thus
\begin{equation}
\left\vert \left[ 2^{2}, 1^{1} \right] \right\vert=n(2n-6)+\left\vert \left[ 5^{1} \right] \right\vert-\frac{\left\vert \left[ 4^{1} \right] \right\vert}{2}.\label{eee2}%
\end{equation}

\noindent Substituting into (\ref{g}) for $\left\vert \left[  2^{1}, 1^{1} \right]  \right\vert,\left\vert \left[  1^{3} \right]  \right\vert,\left\vert \left[ 2^{1},1^{2} \right] \right\vert$ from Theorem \ref{7} and for
$\left\vert \left[ 3^{1}, 1^{2} \right] \right\vert$ and $2\left\vert \left[ 2^{2}, 1^{1} \right] \right\vert$ from above we get
\begin{equation*}
\begin{gathered}
4n(n-2)+2n(n-2)+2n(n-2)+2n(n-2)-\left\vert \left[ 4^{1}, 1^{1} \right] \right\vert+ \\ 3\left\vert \left[ 2^{1}, 1^{3} \right] \right\vert+2n(2n-7)+\left\vert \left[ 4^{1}, 1^{1} \right] \right\vert+3\left\vert \left[ 5^{1} \right] \right\vert
=24\binom{n}{3},
\end{gathered}
\end{equation*}

\noindent i.e.,%
\begin{equation}
3\left\vert \left[ 2^{1}, 1^{3} \right] \right\vert+3\left\vert \left[ 5^{1} \right] \right\vert=2n(n-3)(2n-7)\label{i}%
\end{equation}

\noindent Substituting for $\left\vert \left[ 2^1, 1^1 \right]_i^{(2)} \right\vert,\left\vert \left[  1^{3} \right] _{i}  \right\vert,\left\vert \left[ 2^{1}, 1^{2} \right]_{i}^{(2)} \right\vert$ from Theorem
\ref{7}, to (\ref{g1}) yields $\left\vert \left[ 3^{1}, 1^{2} \right] _{i}^{(3)} \right\vert+3\left\vert \left[  2^{1}, 1^{3} \right] _{i}^{(2)}  \right\vert%
+\left\vert \left[ 2^{2}, 1^{1} \right] _{i}^{(2)} \right\vert=4\binom{n-1}{2}-4(n-2)=2(n-2)(n-3).$ Applying
(\ref{r1}), (\ref{r2}), \ and (\ref{r3}) in turn implies, if $\left\vert \left[ 4^{1} \right]_{i} \right\vert=1$ or
$\left\vert \left[ 4^{1}, 1^{1} \right] _{i}^{(4)} \right\vert=1, $ then
\begin{equation}
3\left\vert \left[  2^{1}, 1^{3} \right] _{i}^{(2)}  \right\vert=2(n-2)(n-3)-3n+8+\left\vert \left[ 3^{1}, 2^{1} \right]_{i}^{(2)} \right\vert,\label{lambda}%
\end{equation}
and if $\left\vert \left[ 5^{1} \right] _{i} \right\vert=1,$ then
\begin{equation}
3\left\vert \left[  2^{1}, 1^{3} \right] _{i}^{(2)}  \right\vert=2(n-2)(n-3)-3n+6+\left\vert \left[ 3^{1}, 2^{1} \right]_{i}^{(2)} \right\vert.\label{lambda1}%
\end{equation}

\noindent As $n=2(\operatorname{mod}3),$ also
$2(n-2)(n-3)=0(\operatorname{mod}3).$ Thus, for $\left\vert \left[ 4^{1} \right]_{i} \right\vert=1$ and $\left\vert \left[  2^{1}, 1^{1} \right] _{i}^{(4)}  \right\vert=1,$ we have $\left\vert \left[ 3^{1}, 2^{1} \right]_{i}^{(2)} \right\vert=1(\operatorname{mod}3),$ while for
$\left\vert \left[ 5^{1} \right] _{i} \right\vert=1$ we get $\left\vert \left[ 3^{1}, 2^{1} \right]_{i}^{(2)} \right\vert=0(\operatorname{mod}3).$ By Lemma
\ref{L3}, $\left\vert \left[ 3^{1}, 2^{1} \right]_{i}^{(2)} \right\vert\leq2$ for all $i\in I.$ Hence, for $\left\vert \left[ 4^{1} \right]_{i} \right\vert=1$ and
$\left\vert \left[ 4^{1}, 1^{1} \right] _{i}^{(4)} \right\vert=1,$ we have $\left\vert \left[ 3^{1}, 2^{1} \right]_{i}^{(2)} \right\vert=1,$ and for $\left\vert \left[ 5^{1} \right] _{i} \right\vert=1$ we
get $\left\vert \left[ 3^{1}, 2^{1} \right]_{i}^{(2)} \right\vert=0.$ Again, by Lemma \ref{L3}, we have the same
conclusion for $\left\vert \left[ 3^{1}, 2^{1} \right]_{i}^{(3)} \right\vert.$ Thus,
\begin{equation}
\left\vert \left[ 3^{1}, 2^{1} \right]_{i}^{(3)} \right\vert=\left\vert \left[ 3^{1}, 2^{1} \right]_{i}^{(2)} \right\vert\label{gama}%
\end{equation}
for all $i\in I.$\bigskip

\noindent Substituting into (\ref{g2}) for $\left\vert \left[ 2^{1}, 1^{1} \right] _{i}^{(1)} \right\vert,\left\vert \left[  1^{3} \right] _{i}  \right\vert,$ and
$\left\vert \left[ 2^{1}, 1^{2} \right] _{i}^{(1)} \right\vert$ from Theorem \ref{7} yields $\left\vert \left[ 3^{1}, 1^{2} \right] _{i}^{(1)} \right\vert+2\left\vert \left[  2^{1}, 1^{3} \right] _{i}^{(1)}  \right\vert+\left\vert \left[ 2^{2}, 1^{1} \right] _{i}^{(2)} \right\vert+2\left\vert \left[ 2^{2}, 1^{1} \right] _{i}^{(1)} \right\vert%
=4(n-1)(n-2)-8(n-2)=4(n-2)(n-3).$ If $\left\vert \left[ 4^{1} \right]_{i} \right\vert=1$ or $\left\vert \left[ 4^{1}, 1^{1} \right] _{i}^{(4)} \right\vert=1,$ then in
this case $\left\vert \left[ 3^{1}, 2^{1} \right]_{i}^{(2)} \right\vert=1$, and by (\ref{r1}), (\ref{r2}), we get
\[
2\left\vert \left[ 2^{1}, 1^{3} \right] _{i}^{(1)} \right\vert+2\left\vert \left[ 2^{2}, 1^{1} \right] _{i}^{(1)} \right\vert=4(n-2)(n-3)-4n+11+\left\vert \left[  4^{1}, 1^{1} \right] _{i}^{(1)}  \right\vert.
\]
Thus, $\left\vert \left[  4^{1}, 1^{1} \right] _{i}^{(1)}  \right\vert$ is odd for $\left\vert \left[ 4^{1} \right]_{i} \right\vert=1$ and for $\left\vert \left[ 4^{1}, 1^{1} \right] _{i}^{(4)} \right\vert=1.$
However, $%
{\displaystyle\sum\limits_{i\in I}}
\left\vert \left[ 4^{1}, 1^{1} \right] _{i}^{(1)} \right\vert=%
{\displaystyle\sum\limits_{i\in I}}
\left\vert \left[ 4^{1}, 1^{1} \right] _{i}^{(4)} \right\vert$ and $\left\vert \left[ 4^{1}, 1^{1} \right] _{i}^{(4)} \right\vert\leq1.$ Therefore, for all $i\in I,$%
\begin{equation}
\left\vert \left[ 4^{1}, 1^{1} \right] _{i}^{(4)} \right\vert=\left\vert \left[ 4^{1}, 1^{1} \right] _{i}^{(1)} \right\vert,\label{beta}%
\end{equation}
which in turn implies $\left\vert \left[ 4^{1} \right]_{i} \right\vert=0$ for all $i,$ hence $\left\vert \left[ 4^{1} \right] \right\vert=0$. Combining $\left\vert \left[ 3^{1} \right] \right\vert=0$ with
(\ref{a}) yields $\left\vert \left[ 3^{1}, 1^{1} \right] \right\vert=2n,$ which in turn implies $\left\vert \left[ 2^{2} \right] \right\vert=0,$ see (\ref{ee}). The
proof is complete.
\end{proof}

\subsection{Phase B}

\noindent In this subsection we deal with the number of codewords of
individual types of weight equal to $5$. First we will summarize results for
all $n=2(\operatorname{mod}3),$ then we concentrate on the case $n=5$. For
$n=2(\operatorname{mod}3),$ all these values are expressed as a function of
the number of codewords of type $[5^{1}].$ We point out that for some
$n=2(\operatorname{mod}3)$, there are two tilings of $R^{n}$ by crosses with
different number of codewords of type $[5^{1}].$ Hence, unlike with codewords
of weight equal to $3$ or $4,$ the values of $\left\vert [5^{1}]\right\vert
,\left\vert [4^{1},1^{1}]\right\vert ,\left\vert [3^{1},2^{1}]\right\vert
,\left\vert [3^{1},1^{2}]\right\vert ,$ $\left\vert [2^{1},1^{3}]\right\vert
,\left\vert [2^{2},1^{1}]\right\vert ,$ and $\left\vert [1^{5}]\right\vert $
do not depend only on the value of $n$ but also on a given tiling of
$R^{n\text{ }}$by crosses.

\begin{theorem}
\label{9}Let $n=2(\operatorname{mod}3),$ and $W$ in $\mathcal{T}_{\mathcal{L}%
}.$ Then the number of codewords of a given type with respect to $W$ is:
$\left\vert [4^{1},1^{1}]\right\vert =2n-\left\vert [5^{1}]\right\vert
,\left\vert [3^{1},2^{1}]\right\vert =2n-\left\vert [5^{1}]\right\vert
,\left\vert [3^{1},1^{2}]\right\vert =2n(n-3)+\left\vert [5^{1}]\right\vert ,
$ $\left\vert [2^{1},1^{3}]\right\vert =n(2n-6)+\left\vert [5^{1}]\right\vert
,\left\vert [2^{1},1^{3}]\right\vert \\=\frac{1}{3}2n(n-3)(2n-7)-\left\vert
[5^{1}]\right\vert ,$ and $\left\vert [1^{5}]\right\vert =\frac{1}{5}%
(2^{4}\binom{n}{4}-n(n-3)(3n-8)+\left\vert [5^{1}]\right\vert ).$
\end{theorem}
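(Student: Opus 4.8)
The plan is to read off all six values directly from the counting identities already assembled in Phase A, treating $\left\vert [5^{1}]\right\vert$ as the single free parameter and substituting the values computed in Theorems \ref{6}, \ref{7}, and \ref{8}. First I would invoke Theorem \ref{8}, which gives $\left\vert [4^{1}]\right\vert =\left\vert [2^{2}]\right\vert =0$. Feeding $\left\vert [4^{1}]\right\vert =0$ into the covering equation (\ref{d}), namely $\left\vert [4^{1}]\right\vert +\left\vert [5^{1}]\right\vert +\left\vert [4^{1},1^{1}]\right\vert =2n$, immediately yields $\left\vert [4^{1},1^{1}]\right\vert =2n-\left\vert [5^{1}]\right\vert$. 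The value of $\left\vert [3^{1},2^{1}]\right\vert$ then follows from the identity $\left\vert [3^{1},2^{1}]\right\vert =\left\vert [4^{1}]\right\vert +\left\vert [4^{1},1^{1}]\right\vert$ recorded at the end of Lemma \ref{L5}, giving $\left\vert [3^{1},2^{1}]\right\vert =2n-\left\vert [5^{1}]\right\vert$ as well.

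Next I would substitute $\left\vert [4^{1},1^{1}]\right\vert =2n-\left\vert [5^{1}]\right\vert$ into equation (\ref{eee1}), $\left\vert [3^{1},1^{2}]\right\vert =2n(n-2)-\left\vert [4^{1},1^{1}]\right\vert$, which collapses to $\left\vert [3^{1},1^{2}]\right\vert =2n(n-3)+\left\vert [5^{1}]\right\vert$ after the elementary simplification $2n(n-2)-2n=2n(n-3)$. The value of $\left\vert [2^{2},1^{1}]\right\vert$ is read straight off equation (\ref{eee2}) once $\left\vert [4^{1}]\right\vert =0$ is inserted, and $\left\vert [2^{1},1^{3}]\right\vert$ is obtained by solving the linear relation (\ref{i}), $3\left\vert [2^{1},1^{3}]\right\vert +3\left\vert [5^{1}]\right\vert =2n(n-3)(2n-7)$, for $\left\vert [2^{1},1^{3}]\right\vert$.

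The only step needing more than a one-line substitution is the formula for $\left\vert [1^{5}]\right\vert$. Here I would take the global covering equation (\ref{h}), $2(n-3)\left\vert [1^{3}]\right\vert +\left\vert [1^{4}]\right\vert +\left\vert [2^{1},1^{3}]\right\vert +5\left\vert [1^{5}]\right\vert =2^{4}\binom{n}{4}$, and substitute the known expressions $\left\vert [1^{3}]\right\vert =\tfrac{2n(n-2)}{3}$ and $\left\vert [1^{4}]\right\vert =\tfrac{n(n-2)(n-3)}{3}$ from Theorems \ref{6} and \ref{7}, together with the value of $\left\vert [2^{1},1^{3}]\right\vert$ just derived. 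Collecting the three $n$-dependent fractions over the common denominator $3$ and factoring out $n(n-3)$ reduces their sum to $n(n-3)(3n-8)$, via $5(n-2)+2(2n-7)=9n-24=3(3n-8)$; since the $\left\vert [5^{1}]\right\vert$ contained in $\left\vert [2^{1},1^{3}]\right\vert$ re-enters with a plus sign, solving for $\left\vert [1^{5}]\right\vert$ then produces the stated expression $\left\vert [1^{5}]\right\vert =\tfrac{1}{5}(2^{4}\binom{n}{4}-n(n-3)(3n-8)+\left\vert [5^{1}]\right\vert)$.

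Because every ingredient — the vanishing of $\left\vert [4^{1}]\right\vert$ and $\left\vert [2^{2}]\right\vert$, the identity for $\left\vert [3^{1},2^{1}]\right\vert$, and the counting relations (\ref{d}), (\ref{eee1}), (\ref{eee2}), (\ref{i}), (\ref{h}) — has already been established for all $n\equiv 2\ (\operatorname{mod}3)$, I do not expect a genuine obstacle: the theorem is a bookkeeping consequence of Phase A, with $\left\vert [5^{1}]\right\vert$ surviving as an undetermined parameter precisely because no identity pins it down in general. The one place to exercise care is the polynomial algebra in the final step, where a sign or coefficient slip in combining the fractions would be the only plausible source of error.
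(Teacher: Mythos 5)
Your proposal is correct and follows essentially the same route as the paper: the authors likewise feed $\left\vert [4^{1}]\right\vert =0$ from Theorem \ref{8} into (\ref{d}), use the identity $\left\vert [3^{1},2^{1}]\right\vert =\left\vert [4^{1}]\right\vert +\left\vert [4^{1},1^{1}]\right\vert$ from Lemma \ref{L5}, and then read the remaining values off (\ref{eee1}), (\ref{eee2}), (\ref{i}), and (\ref{h}). Your explicit verification of the simplification $5(n-2)+2(2n-7)=3(3n-8)$ in the last step is correct and is the only detail the paper leaves unwritten.
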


\begin{proof}
We have proved in the previous theorem that $\left\vert \left[ 4^{1} \right] \right\vert=0.$ Therefore, by (\ref{d}), we
have $\left\vert \left[  5^{1} \right]  \right\vert+\left\vert \left[ 4^{1}, 1^{1} \right] \right\vert=2n.$ In addition, by Lemma \ref{L5}, it is $\left\vert \left[ 3^{1}, 2^{1} \right]   \right\vert=\left\vert \left[ 4^{1} \right] \right\vert+\left\vert \left[ 4^{1}, 1^{1} \right] \right\vert,$ that is $\left\vert \left[ 3^{1}, 2^{1} \right]   \right\vert=\left\vert \left[ 4^{1}, 1^{1} \right] \right\vert=2n-\left\vert \left[  5^{1} \right]  \right\vert.$ It is $\left\vert \left[ 3^{1}, 1^{2} \right] \right\vert=2n(n-2)-\left\vert \left[  4^{1}, 1^{1} \right]  \right\vert,$
see(\ref{eee1}), hence $\left\vert \left[  5^{1} \right]  \right\vert+\left\vert \left[ 4^{1}, 1^{1} \right] \right\vert=2n$ implies $\left\vert \left[ 3^{1}, 1^{2} \right] \right\vert=2n(n-3)+\left\vert \left[  5^{1} \right]  \right\vert,$
while (\ref{eee2}) implies $\left\vert \left[ 2^{2}, 1^{1} \right] \right\vert=n(2n-6)+\left\vert \left[  5^{1} \right]  \right\vert.$ The value of
$\left\vert \left[  2^{1}, 1^{3} \right]  \right\vert$ is given by (\ref{i}). Finally, the value of $\left\vert \left[  1^{5} \right]  \right\vert$ follows from
(\ref{h}) after substituting for $\left\vert \left[  1^{3} \right]  \right\vert$ and $\left\vert \left[  1^{4} \right]  \right\vert$ from Theorem \ref{5}. The proof
is complete.\bigskip
\end{proof}

\noindent The next theorem determines the local values of \thinspace
$\left\vert \left[  3^{1}, 2^{1} \right]  \right\vert,...,\left\vert \left[  1^{5} \right]  \right\vert$ with respect to $\left\vert \left[  5^{1} \right] _{1} \right\vert$.

\begin{theorem}
\label{10}Let $n=2(\operatorname{mod}3).$ Then for each codeword in
$\mathcal{T}_\mathcal{L}$ we have: If $\left\vert \left[ 5^{1} \right] _{i} \right\vert=1,$ then $\left\vert \left[  4^{1}, 1^{1} \right] _{i} \right\vert=0$,$\left\vert \left[  3^{1}, 2^{1} \right] _{i} \right\vert=0,$
$\left\vert \left[ 3^{1}, 1^{2} \right] _{i}^{(1)} \right\vert=2\left\vert \left[ 3^{1}, 1^{2} \right] _{i}^{(3)} \right\vert=2(n-2),\left\vert \left[ 2^{2}, 1^{1} \right] _{i}^{(2)} \right\vert%
=2\left\vert \left[ 2^{2}, 1^{1} \right] _{i}^{(1)} \right\vert=2(n-2),\left\vert \left[ 2^{1}, 1^{3} \right] _{i}^{(1)} \right\vert=3\left\vert \left[  2^{1}, 1^{3} \right] _{i}^{(2)}  \right\vert%
=(n-2)(2n-9).$ If $\left\vert \left[ 5^{1} \right] _{i} \right\vert=0$ , then $\left\vert \left[ 4^{1}, 1^{1} \right] _{i}^{(4)} \right\vert=\left\vert \left[ 4^{1}, 1^{1} \right] _{i}^{(1)} \right\vert=1,\left\vert \left[ 3^{1}, 2^{1} \right]_{i}^{(3)} \right\vert \\=\left\vert \left[ 3^{1}, 2^{1} \right]_{i}^{(2)} \right\vert=1,\left\vert \left[ 3^{1}, 1^{2} \right] _{i}^{(1)} \right\vert=2\left\vert \left[ 3^{1}, 1^{2} \right] _{i}^{(3)} \right\vert=2(n-3),$ $\left\vert \left[ 2^{2}, 1^{1} \right] _{i}^{(2)} \right\vert=2\left\vert \left[ 2^{2}, 1^{1} \right] _{i}^{(1)} \right\vert%
=2(n-3),\left\vert \left[ 2^{1}, 1^{3} \right] _{i}^{(1)} \right\vert=3\left\vert \left[  2^{1}, 1^{3} \right] _{i}^{(2)}  \right\vert=(n-3)(2n-7\dot{)}.$ In both cases
$\left\vert \left[ 1^{5} \right] _{i}  \right\vert=$ $\frac{1}{4}(8\binom{n-1}{3}-\left\vert \left[ 2^{1}, 1^{3} \right] _{i} \right\vert-\frac{10}{3}(n-2)(n-3)).$
\end{theorem}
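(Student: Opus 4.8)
The plan is to reduce everything to the two cases of the statement and then read the individual values off the "local" covering identities of Phase A, exactly as was done for the absolute value $4$ codewords in the proof of Theorem \ref{8}; as always I would assume $W=O$. First I would set up the dichotomy: by Theorem \ref{8} we have $\left\vert \left[ 4^{1}\right] \right\vert =0$, hence $\left\vert \left[ 4^{1}\right] _{i}\right\vert =0$ for every $i\in I$, so (\ref{d1}) collapses to $\left\vert \left[ 5^{1}\right] _{i}\right\vert +\left\vert \left[ 4^{1},1^{1}\right] _{i}^{(4)}\right\vert =1$. Thus for each signed coordinate exactly one of $\left\vert \left[ 5^{1}\right] _{i}\right\vert =1$ or $\left\vert \left[ 4^{1},1^{1}\right] _{i}^{(4)}\right\vert =1$ holds, which are precisely the two cases. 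In either case Lemma \ref{L5} immediately fixes $\left\vert \left[ 2^{2}\right] _{i}\right\vert$, $\left\vert \left[ 3^{1},1^{1}\right] _{i}^{(3)}\right\vert$ and $\left\vert \left[ 3^{1},2^{1}\right] _{i}^{(3)}\right\vert$; combining this with (\ref{beta}) gives $\left\vert \left[ 4^{1},1^{1}\right] _{i}^{(1)}\right\vert$ and with (\ref{gama}) gives $\left\vert \left[ 3^{1},2^{1}\right] _{i}^{(2)}\right\vert$. This already yields the asserted values of $\left\vert \left[ 4^{1},1^{1}\right] _{i}\right\vert$ and $\left\vert \left[ 3^{1},2^{1}\right] _{i}\right\vert$ in both cases.

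Next I would feed these into the remaining local identities in order. Substituting the now-known quantities together with the Phase-A value $\left\vert \left[ 2^{1},1^{1}\right] _{i}^{(2)}\right\vert =1$ into (\ref{e1}) and (\ref{E2}) yields $\left\vert \left[ 3^{1},1^{2}\right] _{i}^{(3)}\right\vert$ and $\left\vert \left[ 3^{1},1^{2}\right] _{i}^{(1)}\right\vert$, and in particular the ratio $\left\vert \left[ 3^{1},1^{2}\right] _{i}^{(1)}\right\vert =2\left\vert \left[ 3^{1},1^{2}\right] _{i}^{(3)}\right\vert$. The identity (\ref{rr}) (the split form of (\ref{f1})) then gives $\left\vert \left[ 2^{2},1^{1}\right] _{i}^{(2)}\right\vert$, and with these in hand (\ref{g1}) determines $\left\vert \left[ 2^{1},1^{3}\right] _{i}^{(2)}\right\vert$. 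Finally, substituting everything known into (\ref{g2}) produces a single linear relation of the form $\left\vert \left[ 2^{1},1^{3}\right] _{i}^{(1)}\right\vert +\left\vert \left[ 2^{2},1^{1}\right] _{i}^{(1)}\right\vert =c_{i}$, where $c_{i}=2(n-2)(n-4)$ when $\left\vert \left[ 5^{1}\right] _{i}\right\vert =1$ and $c_{i}=2(n-3)^{2}$ when $\left\vert \left[ 5^{1}\right] _{i}\right\vert =0$.

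Here is where the real difficulty lies. The relation coming out of (\ref{g2}) contains the two unknowns $\left\vert \left[ 2^{1},1^{3}\right] _{i}^{(1)}\right\vert$ and $\left\vert \left[ 2^{2},1^{1}\right] _{i}^{(1)}\right\vert$ only through their sum, and forming the natural combination (\ref{g2})$\,-\,2\,$(\ref{g1}) and cancelling the ratios already established only reproduces this same sum: it yields the equivalence $\left\vert \left[ 2^{1},1^{3}\right] _{i}^{(1)}\right\vert =3\left\vert \left[ 2^{1},1^{3}\right] _{i}^{(2)}\right\vert \Longleftrightarrow \left\vert \left[ 2^{2},1^{1}\right] _{i}^{(2)}\right\vert =2\left\vert \left[ 2^{2},1^{1}\right] _{i}^{(1)}\right\vert$, but proves neither. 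So the global and local counting identities alone do not separate $\left\vert \left[ 2^{1},1^{3}\right] _{i}^{(1)}\right\vert$ from $\left\vert \left[ 2^{2},1^{1}\right] _{i}^{(1)}\right\vert$; one genuinely new input is required, and this is the crux of the whole theorem.

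The hard part is therefore to establish, for each $i$, one of the two equivalent per-coordinate ratios; I would aim at $\left\vert \left[ 2^{2},1^{1}\right] _{i}^{(2)}\right\vert =2\left\vert \left[ 2^{2},1^{1}\right] _{i}^{(1)}\right\vert$ by a structural argument in the spirit of Lemmas \ref{L2}, \ref{L3} and \ref{L5}, exploiting the minimum distance $3$ between codewords together with the already rigid $4$-neighbourhood (for instance the uniqueness of the codeword of type $\left[ 2^{1},1^{1}\right]$ with a prescribed signed coordinate, which forbids certain configurations of codewords of type $\left[ 2^{2},1^{1}\right]$ from coexisting). Once one ratio is proven, the equivalence above supplies the other, the sum relation from (\ref{g2}) then pins down both $\left\vert \left[ 2^{1},1^{3}\right] _{i}^{(1)}\right\vert$ and $\left\vert \left[ 2^{2},1^{1}\right] _{i}^{(1)}\right\vert$, and substituting the resulting $\left\vert \left[ 2^{1},1^{3}\right] _{i}\right\vert$ into (\ref{h1}) delivers the last value $\left\vert \left[ 1^{5}\right] _{i}\right\vert$, completing both cases.
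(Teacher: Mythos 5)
Your reduction to the two cases via (\ref{d1}) (using $\left\vert \left[ 4^{1}\right] \right\vert =0$ from Theorem \ref{8}), followed by Lemma \ref{L5} together with (\ref{beta}) and (\ref{gama}), and then the substitutions into (\ref{e1}), (\ref{E2}), (\ref{rr}) and (\ref{g1}), is exactly the paper's route, and your computations along that chain are correct. You have also put your finger on a real soft spot: among the identities of Phases A and B, the quantities $\left\vert \left[ 2^{2},1^{1}\right] _{i}^{(1)}\right\vert $ and $\left\vert \left[ 2^{1},1^{3}\right] _{i}^{(1)}\right\vert $ occur only in (\ref{g2}), hence only through their sum, and your observation that forming (\ref{g2})$\,-2\,$(\ref{g1}) merely shows the two asserted ratios are equivalent to each other is correct, as are your values of $c_{i}$. (The paper's own proof is no more explicit at this point: it appeals to (\ref{r3}), (\ref{r2}), (\ref{lambda}) and (\ref{lambda1}), which determine only $\left\vert \left[ 2^{2},1^{1}\right] _{i}^{(2)}\right\vert $ and $\left\vert \left[ 2^{1},1^{3}\right] _{i}^{(2)}\right\vert $, not the $^{(1)}$ components.)

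Nevertheless, your proposal stops at precisely the step you yourself identify as the crux. The structural argument that is supposed to establish, for each $i\in I$, one of the equivalent ratios $\left\vert \left[ 2^{2},1^{1}\right] _{i}^{(2)}\right\vert =2\left\vert \left[ 2^{2},1^{1}\right] _{i}^{(1)}\right\vert $ or $\left\vert \left[ 2^{1},1^{3}\right] _{i}^{(1)}\right\vert =3\left\vert \left[ 2^{1},1^{3}\right] _{i}^{(2)}\right\vert $ is only announced, in the spirit of Lemmas \ref{L2}, \ref{L3} and \ref{L5}; no forbidden configuration is exhibited and no inequality is actually derived. Without that step the values of $\left\vert \left[ 2^{2},1^{1}\right] _{i}^{(1)}\right\vert $ and $\left\vert \left[ 2^{1},1^{3}\right] _{i}^{(1)}\right\vert $, hence of $\left\vert \left[ 2^{1},1^{3}\right] _{i}\right\vert $ and, via (\ref{h1}), of $\left\vert \left[ 1^{5}\right] _{i}\right\vert $, remain unproved, so the proposal is an incomplete proof of the statement. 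Be aware also that the global relations $\sum_{i\in I}\left\vert \left[ 2^{2},1^{1}\right] _{i}^{(2)}\right\vert =2\sum_{i\in I}\left\vert \left[ 2^{2},1^{1}\right] _{i}^{(1)}\right\vert $ and $\sum_{i\in I}\left\vert \left[ 2^{1},1^{3}\right] _{i}^{(1)}\right\vert =3\sum_{i\in I}\left\vert \left[ 2^{1},1^{3}\right] _{i}^{(2)}\right\vert $ are consistent with, but do not imply, their per-coordinate versions, so summing over $i$ cannot substitute for the missing argument; a genuinely combinatorial input (or a sharper set of local covering identities than those stated in the paper) is required to close this case.
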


\begin{proof}
Let $\left\vert \left[ 5^{1} \right] _{i} \right\vert=1.$ Then by, Lemma \ref{L5}, $\left\vert \left[ 3^{1}, 2^{1} \right]_{i}^{(3)} \right\vert=0,$ and by
(\ref{gama}) $\left\vert \left[ 3^{1}, 2^{1} \right]_{i}^{(2)} \right\vert=0;$ also $\left\vert \left[ 4^{1}, 1^{1} \right] _{i}^{(4)} \right\vert=0,$ and (\ref{gama})
implies $\left\vert \left[ 4^{1}, 1^{1} \right] _{i}^{(i)} \right\vert=0.$ Hence $\left\vert \left[ 3^{1}, 2^{1} \right]   \right\vert=\left\vert \left[ 4^{1}, 1^{1} \right] \right\vert=0.$ Using the same arguments
in the case $\left\vert \left[ 5^{1} \right] _{i} \right\vert=0$ yields $\left\vert \left[ 4^{1}, 1^{1} \right] _{i}^{(4)} \right\vert=\left\vert \left[ 4^{1}, 1^{1} \right] _{i}^{(1)} \right\vert%
=1,\left\vert \left[ 3^{1}, 2^{1} \right]_{i}^{(3)} \right\vert=\left\vert \left[ 3^{1}, 2^{1} \right]_{i}^{(2)} \right\vert=1.$ With this in hand, the values of
$\left\vert \left[  3^{1}, 1^{2} \right] _{i} \right\vert$ and $\left\vert \left[  2^{2}, 1^{1} \right] _{i} \right\vert$ follow from (\ref{r3}) and (\ref{r2}),
while the values of $\left\vert \left[  2^{1}, 1^{3} \right] _{i} \right\vert$ are obtained from (\ref{lambda}) and
(\ref{lambda1}). Finally, to determine $\left\vert \left[  1^{5} \right] _{i} \right\vert$ it suffices to substitute
into (\ref{h1}). The proof is complete.\bigskip
\end{proof}

\noindent We showed above, that the number of codewords in $\mathcal{T}_\mathcal{L}$ of
weight $3$ and $4$ does depends only on $n,$ while the number of
codewords of weight $5$ depends also on the tiling $\mathcal{L}$.
However, for $n=5$ also all these values are constant.

\begin{theorem}
\label{11} If $n=5,$ then $\left\vert \left[ 3^{1} \right] \right\vert=\left\vert \left[ 4^{1} \right] \right\vert=\left\vert \left[ 2^{2} \right] \right\vert=\left\vert \left[  5^{1} \right]  \right\vert=0,\left\vert \left[  2^{1}, 1^{1} \right]  \right\vert\\=\left\vert \left[  1^{3} \right]  \right\vert=\left\vert \left[ 3^{1}, 1^{1} \right] \right\vert=\left\vert \left[  1^{4} \right]  \right\vert=\left\vert \left[ 4^{1}, 1^{1} \right] \right\vert=\left\vert \left[ 3^{1}, 2^{1} \right]   \right\vert=10,$
$\left\vert \left[ 3^{1}, 1^{2} \right] \right\vert=\left\vert \left[ 2^{2}, 1^{1} \right] \right\vert=\left\vert \left[  2^{1}, 1^{3} \right]  \right\vert=20,$ while $\left\vert \left[  2^{1}, 1^{2} \right]  \right\vert=30,$ and $\left\vert \left[ 1^{5} \right]  \right\vert=2,$ and $\left\vert \left[ 1^{5} \right] _{i}  \right\vert=1$
for all $i\in I.$
\end{theorem}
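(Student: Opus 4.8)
The plan is to read Theorem \ref{11} as the $n=5$ specialization of the machinery of Phase A and Phase B, so that almost every value is obtained by substitution, and the only genuine work is to pin down the single free parameter $t:=|[5^1]|$. First I would record that $5\equiv 2\pmod 3$, so Theorems \ref{5}--\ref{10} all apply. Substituting $n=5$ into Theorems \ref{6}, \ref{7}, \ref{8} (equivalently Theorem \ref{5}) gives at once the tiling-independent counts of absolute value $3$ and $4$: $|[3^1]|=|[4^1]|=|[2^2]|=0$, $|[2^1,1^1]|=|[1^3]|=|[3^1,1^1]|=|[1^4]|=10$, and $|[2^1,1^2]|=2n(n-2)=30$. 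These require no further argument.

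For absolute value $5$, Theorem \ref{9} expresses every count as an affine function of $t$: with $n=5$ one gets $|[4^1,1^1]|=|[3^1,2^1]|=10-t$, $|[3^1,1^2]|=|[2^2,1^1]|=20+t$, $|[2^1,1^3]|=20-t$, and $|[1^5]|=\tfrac15(80-70+t)=2+\tfrac t5$. Hence the entire theorem collapses to the single claim $t=0$: granting it, all listed global values drop out, and the local value $|[1^5]_i|=1$ for every $i\in I$ follows from the $|[5^1]_i|=0$ branch of Theorem \ref{10}. As a first reduction I would use that $|[1^5]|=2+t/5$ must be a nonnegative integer, forcing $5\mid t$, and that $|[4^1,1^1]|=10-t\ge 0$ (equivalently $|[3^1,2^1]|\ge 0$), forcing $t\le 10$. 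Thus $t\in\{0,5,10\}$, and it remains only to eliminate $t=5$ and $t=10$.

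Both surviving cases require at least one codeword of type $[5^1]$, so it suffices to prove that no such codeword exists. The plan here is a local contradiction: assume w.l.o.g. that $W=(5,0,0,0,0)\in\mathcal T$, and exploit that $O$ is then of type $[5^1]$ with respect to $W$, so the local identities of Theorem \ref{10} (and Theorems \ref{6}--\ref{8}) may be invoked both at $O$ and at $W$. I would track the axis points: $(1,0,0,0,0)$ is covered by $O$ and $(4,0,0,0,0)$ by $W$, while $(2,0,0,0,0)$ and $(3,0,0,0,0)$ are forced (using $|[3^1]|=0$, the emptiness of $N_2$, and the uniqueness counts $|[2^1,1^1]_1^{(2)}|=1$ and $|[3^1,1^1]_1^{(3)}|=1$) to be covered by a unique off-axis codeword $B$ of type $[2^1,1^1]$ with $B_1=2$ and a unique $C$ of type $[3^1,1^1]$ with $C_1=3$. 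Normalizing $B=(2,1,0,0,0)$ and propagating the forced positions of the type-$[2^1,1^2]$ and $[3^1,1^2]$ codewords around both $O$ and $W$ (each constrained by Theorem \ref{10} in the $|[5^1]_i|=1$ branch), the goal is to exhibit either two codewords at Manhattan distance $<3$ or a forced codeword of a type whose count is $0$ (type $[3^1]$ or $[2^2]$), which is the contradiction.

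I expect this last step to be the main obstacle. The counting identities alone are fully consistent for every $t\in\{0,5,10\}$ --- indeed for general $n\equiv 2\pmod 3$ the paper notes $|[5^1]|$ genuinely varies --- so no purely numerical relation can close the gap; the contradiction must come from the geometry of the overlapping $3$- and $5$-neighbourhoods that a $[5^1]$ codeword forces, and organizing that finite but intricate case analysis (rather than any single clever identity) is where the effort lies.
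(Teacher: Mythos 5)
Your reduction is correct and matches the paper's setup: substituting $n=5$ into Theorems \ref{6}--\ref{8} gives all the absolute-value-$3$ and $4$ counts, Theorem \ref{9} expresses everything else as an affine function of $t=|[5^1]|$, and the integrality of $|[1^5]|=2+t/5$ together with $|[4^1,1^1]|=10-t\ge 0$ pins $t$ to $\{0,5,10\}$. But the entire content of the theorem is the claim $t=0$, and at that point your proposal stops being a proof: you describe a \emph{plan} (normalize a hypothetical codeword $(5,0,0,0,0)$, track the axis points, propagate forced $[2^1,1^2]$ and $[3^1,1^2]$ positions, and hope to produce two codewords at distance $<3$ or a forbidden type), and you explicitly flag that organizing this case analysis is ``where the effort lies.'' No contradiction is actually exhibited, so the key step is missing. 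It is not at all clear that the analysis you sketch closes up without substantial further ideas, since the $|[5^1]_i|=1$ branch of Theorem \ref{10} is internally consistent locally; the obstruction to $t>0$ is not visible from a single coordinate direction.

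The paper's route to $t=0$ is different from the one you sketch and is worth noting, because it shows your assertion that ``no purely numerical relation can close the gap'' is too pessimistic. From $t>0$ and $5\mid t$ one gets $|[1^5]|\ge 3$; since any two codewords of type $[1^5]$ lie at mutual distance $\ge 3$, a short pigeonhole argument shows two of the three must agree in at least two signed coordinates, i.e.\ $|[1^5]_{ij}|\ge 2$ for some $i,j\in I$. The paper then introduces \emph{double-local} counting identities (equations (\ref{x0})--(\ref{x2}) and Claim B), indexed by pairs of signed coordinates, and proves $|[1^5]_{ij}|\le 1$ for all $i,j$ --- the only genuinely geometric input being the verification that $|[2^1,1^3]_{ij}|>0$ when $|[1^3]_{ij}|=0$. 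So the contradiction does come from a counting identity, just a finer-grained one than the global and single-coordinate-local identities you restricted yourself to. To repair your proposal you would either need to carry out your neighbourhood case analysis in full, or adopt the paper's double-local machinery.
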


\begin{proof}
 \hfill The \hfill values \hfill of $\left\vert \left[ 3^{1}\right] \right\vert,\left\vert \left[  2^{1}, 1^{1} \right]  \right\vert,\left\vert \left[  1^{3} \right]  \right\vert,\left\vert \left[ 4^{1} \right] \right\vert,\left\vert \left[ 3^{1}, 1^{1} \right] \right\vert,\left\vert \left[ 2^{2} \right] \right\vert,   \\ \left\vert \left[  2^{1}, 1^{2} \right]  \right\vert,$ and $\left\vert \left[  1^{4} \right]  \right\vert$ are obtained from Theorem \ref{6},
\ref{7}, and \ref{8} by substituting $n=5$. The other values depend on the
value of $\left\vert \left[  5^{1} \right]  \right\vert,$ see Theorem \ref{9}. We will prove that there is no
codeword of type $[5^{1}],$ i.e., that $\left\vert \left[  5^{1} \right]  \right\vert=0.$ In order to do it we have to
consider not only local equalities but also so-called double-local equalities
for the individual type of codewords. To be able to introduce these we need
one more piece of notation. Let $\mathcal{K}$ be a set of codewords. Then, for $i,j\in I,$ we denote by
$\left\vert  \mathcal{K}_{ij} \right\vert$ the number of codewords $K$ in $\mathcal{K}$ so that $K_{i}\neq0\neq
K_{j}.$ For an ordered pair $(i,j)$ we denote by $\left\vert \mathcal{K}_{ij}^{(a,b)} \right\vert$ the
number of codewords $K$ in $\mathcal{K}$ with $K_{i}=a$ and $K_{j}=b.$\bigskip

\noindent Substituting $n=5$ into Theorem \ref{9} and into Theorem \ref{10}
yields $\left\vert \left[ 1^{5} \right]  \right\vert=2+\frac{\left\vert \left[  5^{1} \right]  \right\vert}{5},$ and $\left\vert \left[  1^{5} \right] _{i} \right\vert>0$ for all $i\in I,$
respectively. Assume by contradiction that $\left\vert \left[  5^{1} \right]  \right\vert>0.$ Then we have $\left\vert \left[ 1^{5} \right]  \right\vert
\geq3,\,$\ and$\ $at least two codewords of type $[1^{5}]$ have to coincide
in at least two coordinates. Thus, there have to \ be signed coordinates
$i,j\in I$ such that $\left\vert \left[ 1^{5} \right]_{ij}  \right\vert\geq2.$ To reject the assumption of $\left\vert \left[  5^{1} \right]  \right\vert>0$
we prove that $\left\vert \left[ 1^{5} \right] _{ij} \right\vert\leq1$ for all $i,j\in I.$ We will start by setting
several double-local equalities.\bigskip

\noindent For each $i,j\in I$ there is a unique word $V$ in $Z^{5}$ of type
$[1^{2}]$ with $V_{i}=V_{j}=1.$ Therefore, see also the explanation to
(\ref{pp}),%

\begin{equation}
\left\vert \left[ 2^{1}, 1^{1} \right] _{ij} \right\vert+\left\vert \left[ 1^{3} \right] _{ij} \right\vert=1\label{x0}%
\end{equation}

\noindent For each $i,j\in I$ there are six words $V$ in $Z^{5}$ of type
$[1^{3}]$ with $V_{i}=V_{j}=1.$ Therefore, see also (\ref{c}),%

\begin{equation}
\left\vert \left[ 1^{3} \right] _{ij} \right\vert+\left\vert \left[ 2^{1}, 1^{2} \right] _{ij} \right\vert+2\left\vert \left[ 1^{4} \right] _{ij} \right\vert=6\label{x1}%
\end{equation}

\noindent Clearly, it is not difficult to see that $\left\vert \left[ 2^{1}, 1^{2} \right] _{ij} \right\vert=\left\vert \left[ 2^{1}, 1^{2} \right] _{ij}^{(2,1)} \right\vert+\left\vert \left[ 2^{1}, 1^{2} \right] _{ij}^{(1,2)} \right\vert \\+ \left\vert \left[ 2^{1}, 1^{2} \right] _{ij}^{(1,1)} \right\vert\leq3.$ Indeed, $\left\vert \left[ 2^{1}, 1^{2} \right] _{ij}^{(2,1)} \right\vert>1$
($\left\vert \left[ 2^{1}, 1^{2} \right] _{ij}^{(1,2)} \right\vert>1$) would imply that there are in $\mathcal{T}_\mathcal{L}$ two codewords
of type $[2^{1}, 1^{2}]$ of distance $2,$ a contradiction. Finally,
$\left\vert \left[ 2^{1}, 1^{2} \right] _{ij}^{(1,1)} \right\vert>1$ would imply, that there are in $\mathcal{T}_\mathcal{L}$ two codewords
of type $[2^{1}, 1^{2}]$ coinciding in two coordinates where their value is
$1,$ say $A=(1,1,2,0,0),$ and $B=(1,1,-2,0,0)$ or $B=(1,1,0,2,0).$ However, in
the former case $B-A=(0,0,4,0,0),$ and in the latter $B-A=(0,0,-2,2,0)$, thus
$B$ would be with respect to $A$ a codeword of type $[4^{1}],$ and of type
$[2^{2}],$ respectively, which contradicts $\left\vert \left[ 4^{1} \right] \right\vert=\left\vert \left[ 2^{2} \right] \right\vert=0,$ see Theorem \ref{8}.
Combining (\ref{x0}) with (\ref{x1}) we get:\bigskip

\noindent\textbf{Claim B. }For all $i,j\in I,$ if $\left\vert \left[ 1^{3} \right] _{ij} \right\vert=1,$ then $\left\vert \left[ 1^{4} \right] _{ij} \right\vert \geq1,$ and for $\left\vert \left[ 1^{3} \right] _{ij} \right\vert=0$, we get $\left\vert \left[ 1^{4} \right] _{ij} \right\vert\geq2.$\bigskip

\noindent Now we state a double-local equality for codewords of type
$[1^{4}].$ For each $i,j\in I,$ there are twelve words $V$ in $Z^{5}$ of
type $[1^{4}]$ with $V_{i}=V_{j}=1,$ using arguments similar to proving
(\ref{h}) yields:%
\begin{equation}
\begin{gathered}
\left\vert \left[ 1^{3} \right] _{ij}^{(0,1)} \right\vert+\left\vert \left[ 1^{3} \right] _{ij}^{(1,0)} \right\vert+4\left\vert \left[ 1^{3} \right] _{ij} \right\vert+\left\vert \left[ 1^{4} \right] _{ij} \right\vert+\left\vert \left[ 2^{1}, 1^{3} \right] _{ij} \right\vert+\\ 3\left\vert \left[ 1^{5} \right] _{ij} \right\vert%
=12,\label{x2}%
\end{gathered}
\end{equation}

\noindent where $\left\vert \left[ 1^{3} \right] _{ij}^{(0,1)} \right\vert$ is the number of codewords $C$ of type
$[1^{3}]$ such that $C_{i}=C_{-i}=0,$ and $C_{j}=1.$ For each $i,j\in I,$
we have $\left\vert \left[ 1^{3} \right] _{ij} \right\vert\leq1,$ otherwise there would be two codewords of type
$[1^{3}]$ at distance less than $3.$ Therefore, for all $i,j\in I,$ there
is at most one codeword $V$ of type $[1^{3}]$ with $V_{i}=V_{j}=1,$ or
$V_{-i}=V_{j}=1,$ or $V_{i}=V_{-j}=1.$ As $\left\vert \left[  1^{3} \right] _{i}  \right\vert=3,$ we get that both
$\left\vert \left[ 1^{3} \right] _{ij}^{(0,1)} \right\vert$ $\geq2$ and $\left\vert \left[ 1^{3} \right] _{ij}^{(1,0)} \right\vert\geq2$ for $\left\vert \left[ 1^{3} \right] _{ij} \right\vert=0,$ and both
$\left\vert \left[ 1^{3} \right] _{ij}^{(0,1)} \right\vert$ $\geq1$ and $\left\vert \left[ 1^{3} \right] _{ij}^{(1,0)} \right\vert\geq1$ if $\left\vert \left[ 1^{3} \right] _{ij} \right\vert=1.$\bigskip

\noindent In aggregate, if $\left\vert \left[ 1^{3} \right] _{ij} \right\vert=1,$ we get $\left\vert \left[ 1^{3} \right] _{ij}^{(0,1)} \right\vert+\left\vert \left[ 1^{3} \right] _{ij}^{(1,0)} \right\vert+4\left\vert \left[ 1^{3} \right] _{ij} \right\vert\geq6,$ and $\left\vert \left[ 1^{3} \right] _{ij}^{(0,1)} \right\vert+\left\vert \left[ 1^{3} \right] _{ij}^{(1,0)} \right\vert+4\left\vert \left[ 1^{3} \right] _{ij} \right\vert\geq4$ for
$\left\vert \left[ 1^{3} \right] _{ij} \right\vert=0.$ 

\noindent Substituting to (\ref{x2}) for $\left\vert \left[ 1^{4} \right] _{ij} \right\vert$ from Claim B implies

\begin{equation*}
\begin{gathered}
\left\vert \left[ 2^{1}, 1^{3} \right] _{ij} \right\vert+3\left\vert \left[ 1^{5} \right] _{ij} \right\vert\leq5\text{ for }  \left\vert \left[ 1^{3} \right] _{ij} \right\vert=1, \text{ and } \\ \left\vert \left[ 2^{1}, 1^{3} \right] _{ij} \right\vert%
+3\left\vert \left[ 1^{5} \right] _{ij} \right\vert\leq6\text{ for }\left\vert \left[ 1^{3} \right] _{ij} \right\vert=0.
\end{gathered}
\end{equation*} \bigskip

\noindent Thus, $\left\vert \left[ 1^{5} \right] _{ij} \right\vert\leq1$ for $\left\vert \left[ 1^{3} \right] _{ij} \right\vert=1.$ To prove that $\left\vert \left[ 1^{5} \right] _{ij} \right\vert\leq1$
also when $\left\vert \left[ 1^{3} \right] _{ij} \right\vert=0,$ we will show that in this case $\left\vert \left[ 2^{1}, 1^{3} \right] _{ij} \right\vert>0.$ If
$\left\vert \left[ 1^{3} \right] _{ij} \right\vert=0,$ then $\left\vert \left[ 2^{1}, 1^{1} \right] _{ij} \right\vert=1,$ see (\ref{p1}). Let $B$ be the codeword of type
$\left[  2^{1},1^{1}\right]  $ having non-zero the $i$-th and the $j$-th sign
coordinate$.$ Then either $B_{i}=2$ and $B_{j}=1,$ or $B_{i}=1,$ and
$B_{j}=2.$ Assume that the former is the case. In $Z^{5}$ there are six words
$V$ of type $[2^{1}, 1^{2}]$ with $V_{i}=2,$ and $V_{j}=1.$ Each of these is
covered either (i) by a codeword $W$ of type $[2^{1}, 1^{1}]$ with $W_{i}%
=2,W_{j}=0;$ or (ii) by a codeword $W$ of type $[2^{1}, 1^{1}]$ with
$W_{i}=2,W_{j}=1;$ or (iii) by a codeword $W$ of type $[1^{3}]$ with
$W_{i}=1,W_{j}=1;$ or (iv) by a codeword $W$ of type $\left[  2^{1}, 1^{2}\right]  $ with $W_{i}=2,W_{j}=1;$ or (v) by a codeword $W$ of type
$\left[  3^{1}, 1^{2}\right]  $ with $W_{i}=3,W_{j}=1;$ or (vi) by a codeword
$W$ of type $\left[  2^{2}, 1^{1}\right]  $ with $W_{i}=2,W_{j}=2;$ or
(vii) by a codeword $W$ of type $\left[  2^{2}, 1^{1}\right]  $ with
$W_{i}=2,W_{j}=1;$ or (viii) by a codeword $W$ of type $\left[  2^{1},
1^{3}\right]  $ with $W_{i}=2,W_{j}=1.$ Using our notation we can write%

\begin{equation*}
\begin{gathered}
\left\vert \left[ 2^{1}, 1^{1} \right] _{ij}^{(2,0)} \right\vert+6\left\vert \left[ 2^{1}, 1^{1} \right] _{ij}^{(2,1)} \right\vert+\left\vert \left[ 1^{3} \right] _{ij} \right\vert+\left\vert \left[ 2^{1}, 1^{2} \right] _{ij}^{(2,1)} \right\vert+\left\vert \left[ 3^{1}, 1^{2} \right] _{ij}^{(3,1)} \right\vert\\ +\left\vert \left[ 2^{2}, 1^{1} \right] _{ij}^{(2,2)} \right\vert+\left\vert \left[ 2^{2}, 1^{1} \right] _{ij}^{(2,1)} \right\vert+\left\vert \left[ 2^{1}, 1^{3} \right] _{ij}^{(2,1)} \right\vert=6.
\end{gathered}
\end{equation*}

\noindent We have chosen $i,j$ so that $\left\vert \left[ 2^{1}, 1^{1} \right] _{ij}^{(2,1)} \right\vert=\left\vert \left[ 1^{3} \right] _{ij} \right\vert=0.$ Further, it
is easy to see that in each of the cases (i) and (iv)-(vi) there is at most
one codeword of each type, as otherwise we would have two codewords at
distance less than $3$. As to (vii), we have $\left\vert \left[ 2^{2}, 1^{1} \right] _{ij}^{(2,1)} \right\vert\leq1 $
as well, because if there were two codewords $W$ of type $[2^{2},1^{1}]$
with $W_{i}=2,$ and $W_{j}=1,$ then their difference would be a codeword
either of type $[4^{1}]$ or $[2^{2}],$ a contradiction. So we get
$\left\vert \left[ 2^{1}, 1^{3} \right] _{ij}^{(2,1)} \right\vert\geq1$ in this case, thus $\left\vert \left[ 2^{1}, 1^{3} \right] _{ij} \right\vert\geq1,$ and in
turn $\left\vert \left[ 1^{5} \right] _{ij} \right\vert\leq1$ also in the case $\left\vert \left[ 1^{3} \right] _{ij} \right\vert=0.$ The proof is complete.
\end{proof}

\subsection{Phase C}

\noindent In the previous subsection we proved that the $5$-neighborhood of
each codeword has the same quantitative properties. Now we prove that it has
also the same structure.\bigskip 

\noindent Let $V$ be a word in $Z^{5}.$ Then by $<V>$ \ we denote the
collection of words comprising $V,$ and the words obtained by cyclic shifts of
coordinates of $V$. Hence, e.g., $<(2,1,0,0,0)>=\{(2,1,0,0,0),(0,2,1,0,0),
$\newline$(0,0,2,1,0),$ $(0,0,0,2,1),(1,0,0,0,2)\}.\,\ $We note that $<V>$ contains five words except for the case when $V$ has all coordinates equal to the same number. Finally, we set $\pm<V>$ $\circeq$ $<V>\cup<-V>.$ By the canonical $5$-
neighborhood, or simply a canonical neighborhood, we mean the set of
words\newline$\{\pm<(2,1,0,0,0)>,\pm<(1,0,1,0,-1)>,\pm<(3,0,0,-1,0)>,\newline%
\pm<(2,0,1,0,1)>,\pm<(2,0,0,1,-1)>,\pm<(2,-1,-1,0,0)>,\newline\pm
<(1,1,1,-1,0)>,\pm<(4,0,-1,0,0)>,\pm<(3,0,2,0,0)>,\newline\pm<(3,0,0,1,1)>,\pm
<3,-1,0,0,-1)>,$ $\pm<(2,-2,0,0,1)>,\newline\pm<(2,0,-2,0,-1)>,\pm
<(2,-1,1,1,0)>,\pm<(2,0,-1,-1,1)>,$ \newline$\pm<(1,1,1,1,1)>\}.$ A simple
inspection shows that the number of words of individual types in the canonical
neighborhood coincides with the values given by Theorem \ref{11}. E.g.,
$\pm<(2,1,0,0,0)>$ is the set of ten words of type $[2^{1},1^{1}],$ ($\left\vert \left[ 2^{1}, 1^{1} \right] \right\vert=10$),
while $\pm<(1,0,1,0,-1)>$ comprises ten words of type $[1^{3}]$ ($\left\vert \left[ 1^{3} \right] \right\vert=10$).

\begin{theorem}
\label{12} Let $\mathcal{L}$ be a tiling of $R^{5}$ by crosses. Then, for
each\ codeword $W$ in $\mathcal{T}_\mathcal{L}$, $\ $the $5$-neighbourhood of $W$ is
congruent to the canonical one. Moreover, the $5$-neighborhood of $W$ is
uniquely determined by the set of codewords of type $[2^{1}, 1^{1}]$.
\end{theorem}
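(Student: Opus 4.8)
The plan is to fix $W=O$ once and for all (legitimate by Claim \ref{0}, replacing $\mathcal{T}$ by $\mathcal{T}-W$) and to treat the explicitly listed canonical neighbourhood as the target, noting that its type-counts already match Theorem \ref{11}. First I would record every global and local number supplied by Theorems \ref{6}--\ref{11}: for $n=5$ each type has a fixed cardinality, and, more importantly, a fixed count in each signed coordinate $i\in I$, e.g. $\left\vert[2^{1},1^{1}]_{i}^{(2)}\right\vert=\left\vert[2^{1},1^{1}]_{i}^{(1)}\right\vert=1$, $\left\vert[1^{3}]_{i}\right\vert=3$, $\left\vert[3^{1},1^{1}]_{i}^{(3)}\right\vert=1$, and $\left\vert[1^{5}]_{i}\right\vert=1$. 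These per-coordinate data are exactly what make the reconstruction rigid, and throughout I would use the two structural facts available for a $PL(5,1)$ code: every word is covered by a unique codeword, and any two codewords satisfy $\rho_{M}\geq 3$ (so no pair lies at distance $\le 2$).

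The first real step is to pin down the ten codewords of type $[2^{1},1^{1}]$. Since each signed coordinate $i$ is the value-$2$ position of exactly one such codeword and the value-$1$ position of exactly one such codeword, these ten codewords are encoded by a permutation $\sigma$ of $I$: for each $i$ there is a codeword whose $i$-th signed coordinate equals $2$ and whose $\sigma(i)$-th signed coordinate equals $1$. Automatically $\sigma(i)\neq\pm i$ (a coordinate cannot carry both $2$ and $\pm1$). A short check shows that the only way two such codewords can fall at distance $\le 2$ is the signed transposition $\sigma(i)=k,\ \sigma(k)=i$, whose two codewords differ by $\pm1$ in just two coordinates and so lie at distance $2$; hence $\sigma$ has no $2$-cycles. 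Using the congruence group of coordinate permutations and sign changes (Claim \ref{0}), I would normalise one codeword to $(2,1,0,0,0)$.

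Granting the set of $[2^{1},1^{1}]$ codewords, I would then rebuild $N_{5}$ in increasing order of $\left\vert\,\cdot\,\right\vert_{M}$, each layer forced by the covering relations and the distance-$3$ bound. The ten words of type $[2^{1}]$ and ten of the forty words of type $[1^{2}]$ are already covered by the $[2^{1},1^{1}]$ codewords, so the remaining thirty $[1^{2}]$-words must be covered by the ten codewords of type $[1^{3}]$; viewing $[1^{2}]$-words as edges and $[1^{3}]$-codewords as triangles on the vertex set $I$, this is precisely a triangle decomposition of the complete graph on $I$ with the five antipodal pairs $\{i,-i\}$ and the ten matched pairs $\{i,\sigma(i)\}$ deleted, and distance-$3$ forbids two triangles sharing an edge. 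Continuing, each word of type $[3^{1}]$ forces a unique covering codeword $3e_{i}+e_{j}$ of type $[3^{1},1^{1}]$ (the choice $j=\sigma(i)$ being excluded since it would give distance $1$ to the matching $[2^{1},1^{1}]$ codeword), then each $[2^{1},1^{2}]$-word and each distance-$5$ word is pinned the same way, down to the two codewords of type $[1^{5}]$, where the double-local bound $\left\vert[1^{5}]_{ij}\right\vert\le1$ from the proof of Theorem \ref{11} forces uniqueness. This yields the second assertion: $N_{5}$ is a function of the $[2^{1},1^{1}]$ codewords.

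The substantive part, and the main obstacle, is to show that $\sigma$ itself is congruent to the canonical one, so that the whole reconstructed neighbourhood is congruent to the listed canonical neighbourhood rather than merely matching its counts. The internal distance constraints alone leave many admissible $\sigma$ (for instance a single $10$-cycle versus the canonical pair of sign-respecting $5$-cycles $(1\,2\,3\,4\,5)$ and $(-1\,-2\,-3\,-4\,-5)$), and cycle type is invariant under the congruence group, so these cannot be collapsed by a change of coordinates. They must instead be eliminated by the global consistency of the cascade above: a non-canonical $\sigma$ should, at some layer, make the required triangle decomposition impossible, or force a $[3^{1},1^{1}]$, $[2^{1},1^{2}]$ or distance-$5$ codeword that either duplicates a coordinate at distance $\le 2$ or breaks one of the fixed local counts. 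I expect this to demand a finite but delicate case analysis on the cycle structure of $\sigma$ (and on the triangle decompositions it admits), carried out after normalising $(2,1,0,0,0)\in\mathcal{T}$ and fixing as many further codewords as the residual congruence freedom allows; performing that analysis and verifying that the surviving configuration is exactly the canonical neighbourhood is where the real effort lies.
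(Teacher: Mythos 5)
There is a genuine gap, and you have located it yourself: everything after ``The substantive part, and the main obstacle'' is a description of work still to be done, not a proof. Your setup (normalising $W=O$, encoding the ten codewords of type $[2^{1},1^{1}]$ by a fixed-point-free, $2$-cycle-free permutation $\sigma$ of the signed coordinates, normalising one codeword to $(2,1,0,0,0)$, and then cascading through the types by covering counts and the distance-$3$ bound) matches the paper's framework, and your reduction of the second assertion to the cascade is sound in outline. But the first assertion -- that the surviving configuration is congruent to the canonical one -- is exactly the part you leave as ``a finite but delicate case analysis on the cycle structure of $\sigma$,'' and without it the theorem is not proved. In particular you have not ruled out the $10$-cycle, and you correctly observe that cycle type is a congruence invariant, so no normalisation can dispose of it.

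The paper closes this gap with three ideas you are missing. First, it exploits the two codewords of type $[1^{5}]$ \emph{before} analysing $\sigma$: since $\left\vert[1^{5}]\right\vert=2$ and $\left\vert[1^{5}]_{i}\right\vert=1$ they are antipodal, $\pm M$, and a counting argument on the twenty words of type $[1^{2}]$ sign-coherent with $M$ or $-M$ (each covered by a $[1^{3}]$ codeword at most once, on pain of being within distance $2$ of $\pm M$) forces every $[2^{1},1^{1}]$ codeword to be sign-coherent in both nonzero coordinates with $M$ or with $-M$. This immediately splits the $2$-factor into two $5$-cycles, each transversal to the antipodal matching, killing the $10$-cycle and all other cycle types in one stroke. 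Second, after normalisation only four cubic graphs $F\cup M$ remain (prism, Petersen, and two others), and two are eliminated because their complements admit no triangle decomposition -- a cited external result of Bryant, Maenhaut, Quinn and Webb -- which is needed to place the $[1^{3}]$ codewords; your proposal has no substitute for this input. Third, the Petersen case and one orientation of the prism are eliminated by short explicit configurations showing that the codewords of types $[2^{1},1^{2}]$ and $[2^{1},1^{3}]$ cannot be chosen. Without these three steps (or replacements for them) your argument establishes only that the neighbourhood, \emph{if} it has the canonical $[2^{1},1^{1}]$ skeleton, is determined by it -- not that it must have that skeleton.
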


\begin{proof}
As in other proofs in this paper we assume w.l.o.g. that $W=O.$ Two words
$U=(u_{1},...,u_{5}),V=(v_{1},..,v_{5})$ will be called sign equivalent in the
$j$-th coordinate if $u_{j}v_{j}>0;\,\ $that is, they are sign equivalent if
$u_{j}\neq0\neq v_{j},$ and the two non-zero values have the same sign.

\underline{Codewords of type $[1^{5}].$} It was proved in Theorem \ref{11}
that $\left\vert \left[ 1^{5} \right]  \right\vert=2,$ and $\left\vert \left[ 1^{5} \right] _{i}  \right\vert=1$ for each $i\in I;$ i.e., the two codewords of
type $[1^{5}]$ differ in each coordinate (= are not sign equivalent in any
coordinate). That is, if $M$ is a codeword of type $[1^{5}],$ then $-M$ is
the other codeword of type $[1^{5}].$

\underline{Codewords of type $[2^{1}, 1^{1}].$} Let $\mathcal{B}$ be a set of
codewords of type $[2^{1}, 1^{1}],$ and $\mathcal{C}$ be the set of codewords of
type $[1^{3}]$. We know by Theorem \ref{6} that $\left\vert \mathcal{B}%
\right\vert =\left\vert \mathcal{C}\right\vert =10$. There are in total $10$
words of type $[1^{2}]$ that are sign equivalent in two coordinates with
$M$ and another $10$ words of type $[1^{2}]$ that are sign equivalent in
two coordinates with $-M.$ Each word of type $[1^{2}]$ is covered by a
codeword in $\mathcal{B\cup C}$, thus each of these $20$ words is covered by a
codeword in $\mathcal{B\cup C}$. If a codeword $C$ in $\mathcal{C}$ covered
two of these $20$ words, then $C$ would be sign equivalent in three
coordinates with $M$ or $-M,$ and the distance of $C$ to one of $M$ or $-M$
would be less than $3$. Therefore, each codeword in $\mathcal{C}$ covers at
most one of these $20$ words of type $[1^{2}]$. Hence, each codeword in
$\mathcal{B}$ has to cover one of these $20$ words, that is, each codeword of
type $[2^{1}, 1^{1}]$ is sign equivalent in both non-zero coordinates either with
the codeword $M$ or with the codeword $-M$. We know by Theorem \ref{6} that,
for each $i\in I,$ $\left\vert \left[ 2^1, 1^1 \right]_i^{(2)} \right\vert=\left\vert \left[ 2^{1}, 1^{1} \right] _{i}^{(1)} \right\vert.$ Thus, five codewords in
$\mathcal{B}$ are sign equivalent in two coordinates with $M,$ the other five
with $-M$.\bigskip

\noindent It turns out that graph theory has a very suitable language to
describe the structure of the set $\mathcal{B}$. Let $G$ be a graph with the
vertex set $I,$ the set of signed coordinates, and the edges of $G$ be all
pairs of vertices in $I$ except for $\{i,-i\},$ $i=1,...,5.$ Thus $G$ is a
complete graph on $10$ vertices, $K_{10},$ with a $1$-factor (=perfect
matching) removed. We denote this one factor by $M.$ So $G=K_{10}-M.$ In
$Z^{5}$ there are forty words of type $\left[ 1^{2}\right]  .$ In a
natural way each word of type $\left[  1^{2}\right]  $ is associated with
an edge of $G.$ If $V$ is a word of type $\left[  1^{2}\right]  $ with
$V_{i}=V_{j}=1,$ (and then $V_{k}=0$ for all $k\in I-\{i,j\}$) then we assign
to $V$ the edge $\{i,j\}$ of $G.$ So there is a one-to-one correspondence
between words of type $\left[  1^{2}\right]  $ and the edges of $G.$ In
addition, each codeword $W\in\mathcal{B}$ is associated with the edge (word of
type $[1^{2}])$ covered by $W.$ The set of words of type $\left[  
1^{2}\right]  $ covered by codewords in $\mathcal{B}$ will be denoted by
$\mathcal{B}^{\ast}.$ The condition $\left\vert \left[ 2^1, 1^1 \right]_i^{(2)} \right\vert=\left\vert \left[ 2^{1}, 1^{1} \right] _{i}^{(1)} \right\vert=1$ implies that
the words in $\mathcal{B}^{\ast}$ form a $2$-factor, say $F.$ It follows from
the above discussion that $F$ consists of two cycles of length $5$ such that
both cycles contain exactly one vertex of each edge in the matching $M$. Let
$C$ be the set of five words in $\mathcal{B}^{\ast}$ constituting one of the
two $5$-cycles in $F$. Clearly, by suitably permuting the order of coordinates
of each codeword in $\mathcal{T}_\mathcal{L}$ and/or changing a sign of a coordinate for
each codeword in $\mathcal{T}_\mathcal{L}$ maps $C$ onto $<(1,1,0,0,0)>,$ and the
codewords covering words in $C$ onto $<(2,1,0,0,0)>$. $\ $By Claim \ref{0} the
above transformation is a congruence mapping. Therefore we will assume that
$<(2,1,0,0,0)>$ are codewords in $\mathcal{T}_\mathcal{L}$.\bigskip

\noindent We will show that by choosing $<(2,1,0,0,0)>\in\mathcal{T}_\mathcal{L}$ all the
other codewords will be uniquely determined. First of all $<(2,1,0,0,0)>\in
\mathcal{T}_\mathcal{L}$ implies that the codewords of type $\left[  1^{5}\right]  $
are $M=(1,1,1,1,1)$ and $-M.$\bigskip

\noindent If one of the two $5$-cycles of $\mathcal{B}^{\ast}$ is
$<(1,1,0,0,0)>,$ then the other one is formed by codewords having both
non-zero coordinates negative. There are four non-isomorphic ways how to
choose it. It is either the $5$-cycle comprising \ the edges corresponding to
$<(-1,-1,0,0,0)>,$ or to $<(-1,0,0,-1,0)>,$ or\newline%
$\{(-1,-1,0,0,0),(0,-1,0,0,-1),(0,0,-1,0,-1),(0,0,-1,-1,0),$~\newline%
$\ (-1,0,0,-1,0)\}$, or\newline%
$\{(-1,-1,0,0,0),(0,-1,-1,0,0),(0,0,0,-1,-1),(0,0,-1,0,-1),$~\newline%
$\ (-1,0,0,-1,0)\}.$\bigskip

\noindent The graph consisting of the edges of the $2$-factor $F$ and the
matching $M$ is a cubic graph. The four cubic graph corresponding to the cases
described above are depicted in Figure 1. \ The first one is the prism on $10$
vertices, the second is the Petersen graph, the labels for the other two are
taken from \cite{Bryant}.%


\begin{figure}[ptb]\begin{center}
\includegraphics[scale=0.85]{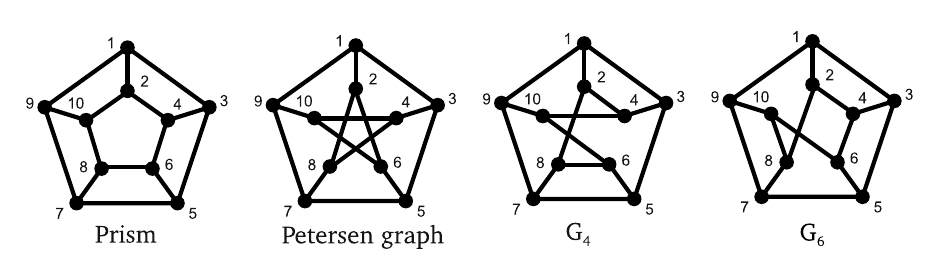}
\caption{Corresponding cubic graphs}
\end{center}\end{figure} 



\underline{Codewords of type $[1^{3}]$}. Each codeword of type $[1^{3}]$
covers three words of type $[1^{2}],$ so we associate with each codeword of
type $[1^{3}]$ a triangle (cycle of length $3$) in the graph $G$. As
mentioned many times, all words of type $[1^{2}]$ are covered by codewords
in $\mathcal{B\cup C}.$ Therefore, the triangles corresponding to codewords in
$\mathcal{C}$ have to form an edge decomposition of the complement of the
cubic graph consisting of the edges of the $2$-factor $F$ and the matching
$M.$\bigskip

\noindent It is known, see \cite{Bryant}, that the complement of the two cubic
graphs $G_{4}$ and $G_{6}$ is not decomposable into triangles. Therefore, in
the case of $G_{4}$ and $G_{6}$ it is impossible to choose the codewords of
type $[1^{3}].$ Thus, we are left with the prism and the Petersen graph.

(i) The prism. Then $\mathcal{B}^{\ast}=\pm<(1,1,0,0,0)>.$ There are two
different ways how to choose codewords of type $[1^{3}]  $
(how to decompose the complement of the prism into triangles). Either

(ia) $\mathcal{C}=\pm<(1,0,1,0,-1)>,$ or

(ib) $C=\pm<(1,0,1,-1,0)>.$

\noindent These two decomposition are isomorphic but we will we need to
consider both of them, as the automorphism of the graph $G,$ which maps one
decomposition on the other, maps the set $\pm<(2,1,0,0,0)>$ of codewords of
type $[2^{1}, 1^{1}]$ on the set $\pm<(1,2,0,0,0)>.$\bigskip

(ii) The Petersen graph.

\noindent Then $\mathcal{B}^{\ast}=<(1,1,0,0,0)>\cup<(-1,0,-1,0,0)>.$ There
are six different ways how to decompose the complement of the Petersen graph
into triangles. One of them corresponds to the following codewords of type
$[1^{3}]  :$

(iia) $\mathcal{C=<(}1,-1,1,0,0)>\cup<-1,-1,0,1,0)>,$ while the other five are
isomorphic to:

(iib) $\mathcal{C}=\{(1,-1,1,0,0),(1,0,-1,1,0),(-1,1,0,0,1),\newline%
(0,1,0,1,0,-1),(0,0,1,-1,1),(1,0,0,-1,-1),(-1,-1,0,1,0),\newline%
(-1,0,1,0,-1),(0,1,-1,-1,0),(0,-1,-1,0,1)\}.$ Here we do not need to consider
all $5$ decompositions, as it is possible to prove that this case is not a
viable one whether $<(2,1,0,0,0)>\in\mathcal{T}_\mathcal{L}$ or $<(1,2,0,0,0)>\in
\mathcal{T}_\mathcal{L}.$

\underline{Codewords of type $\left[  1^{4}\right]  .$} Let $\mathcal{H}$
be the set of ten codewords in $\mathcal{T}_\mathcal{L}$ of type $\left[  1^{4}\right]
.$ Clearly, $\mathcal{H}\subset\mathcal{H}^{\ast}$ where $\mathcal{H}^{\ast}$
is constructed as follows: First let $\mathcal{H}^{\ast}$ be the set of all
eighty words in $Z^{5}$ of type $\left[  1^{4}\right]  .$ As any codeword
in $\mathcal{H}$ has to have a distance at least $3$ from both codewords of
type $\left[  1^{5}\right]  ,$ and all ten codewords of type $\left[
1^{3}\right]  ,$ we delete from $\mathcal{H}^{\ast}$ all codewords that
coincide with a codeword in all four non-zero coordinates, or with a codeword
in $\mathcal{C}$ in three non-zero coordinates. After these two procedures
there are exactly thirty words left in $\mathcal{H}^{\ast}.$\bigskip

\noindent There are twenty words of type $[1^{2}]$ with both non-zero
coordinates of the same sign. Ten of them are covered by codewords in
$\mathcal{B}$, the other ten by the codewords in $\mathcal{C}$. Further, by
Claim B stated in the proof of Theorem \ref{11}, it is $\left\vert \left[ 1^{4} \right] _{ij} \right\vert\geq2$ for $ij $
such that $\left\vert \left[ 2^{1}, 1^{1} \right] _{ij} \right\vert=1,$ and $\left\vert \left[ 1^{4} \right] _{ij} \right\vert\geq1$ for $\left\vert \left[ 1^{3} \right] _{ij} \right\vert=1$. Hence, to each
codeword $W$ in $\mathcal{B},$ there are in $\mathcal{H}$ at least two
codewords sign equivalent with $W$ in two coordinates. Since no codeword in
$\mathcal{H}$ has all non-zero coordinates of the same sign (otherwise its
distance to a codeword of type $[1^{5}]$ would be less than $3$), and the
codewords in $\mathcal{B}$ form two $5$-cycles, no codeword in $\mathcal{H}$
can be sign equivalent in two coordinates with three codewords in
$\mathcal{B}$. This in turn implies, because $\left\vert \mathcal{H}%
\right\vert =10,$ that\bigskip

\noindent\textbf{Claim C. }Each codeword in $\mathcal{H}$ has to be sign
equivalent in two coordinates with two codewords of type $[2^{1}, 1^{1}]$ and
with one word of type $[1^{2}]$ with both non-zero coordinates of the same
sign. In particular, each codeword in $\mathcal{H}$ has three coordinates of
the same sign.\bigskip

\noindent As in all cases $<(1,1,0,0,0)>\subset\mathcal{B}^{\ast}%
$,~$\mathcal{H}$ has to contain a codeword $W=(1,1,1,a,b)$ where exactly one
of $a,b$ equals $0$ and the other equals $-1,$ and all cyclic shifts of
coordinates of $W.$\bigskip

\noindent(ia) First we describe the set $\mathcal{H}^{\ast}.$ At the beginning
of the process$\newline$ $\mathcal{H}^{\ast}=\{\pm<(1,1,1,1,0)>,\pm
<(1,1,1,-1,0)>,\pm<(1,1,-1,1,0)>,\newline\pm<(1,-1,1,1,0)>,\pm
<(-1,1,1,1,0)>,\pm<(1,1,-1,-1,0)>,\newline\pm<(1,-1,1,-1,0)>,\pm
<(-1,1,1,-1,0)>\}.$ We have to remove from $\mathcal{H}^{\ast}$ words
$\pm<(1,1,1,1,0)>$ that have the distance from the codewords $\pm(1,1,1,1,1)$
of type $[1^{5}]$ less than $3.$ Next, in this case, the set $\mathcal{C}$
of codewords of type $[1^{3}]$ is $\pm<(1,0,1,0,-1)>.$ Therefore we need to
remove from $\mathcal{H}^{\ast}$ forty words at distance less than $3$ from
any codeword in $\mathcal{C}$. These words are $\pm<(1,\pm1,1,0,-1)>$ and
$\pm<(1,0,1,\pm1,-1)>.$ Thus, at the end of the process $\mathcal{H}^{\ast
}=\{\pm<(1,1,1,-1,0)>,\pm<(1,-1,1,1,0)>,\pm<(1,1,-1,-1,0,)>\}.$ Clearly, the
only way how to choose a set of ten codewords satisfying Claim C is to set
$\mathcal{H}=\pm<(1,1,1,-1,0)>.$\bigskip

(ib) We have $\mathcal{H}^{\ast}=\{\pm<(-1,1,1,1,0)>,\pm<(1,1,-1,1,0)>,$%
\linebreak$\pm<(1,1,-1,-1,0)>\}.$ Then a unique way how to fulfill Claim C is
to set $\mathcal{H}=\pm<(-1,1,1,1,0)>.$\bigskip

(iia) Let $H_{1}=<(1,1,1,-1,0)>,$ $H_{2}=<(-1,1,1,1,0)>,\newline
H_{3}=<(-1,1,-1,-1,0)>,H_{4}=<-1,-1,1,-1,0)>.$ Then $\mathcal{H}^{\ast}$ can
be expressed as$\newline$ $\mathcal{H}^{\ast}=%
{\displaystyle\bigcup\limits_{i=1}^{4}}
H_{i}\cup<(-1,1,1,-1,0)>\cup<(-1,-1,1,1,0)>.$\bigskip

\noindent There are four options how to choose $\mathcal{H}$ in this case.
Either $\mathcal{H}$=$H_{1}\cup H_{3},$ or $H_{1}\cup H_{4},$ or $H_{2}\cup
H_{3},$ or $H_{2}\cup H_{4}.$\bigskip

(iib) In this case all words in $\mathcal{H}^{\ast}$ that belong to
$<(1,1,1,a,b)>$ are:\newline$(1,1,1,-1,0),(1,1,1,0,-1),$
$(-1,1,1,1,0),(-1,0,1,1,1),(0,-1,1,1,1),$\newline$(1,-1,0,1,1),(1,1,0,-1,1).$
It is impossible to choose from this set three words of type
$(a,1,1,1,b),(a,b,1,1,1),$ and $(1,a,b,1,1)$ that would be pair-wise at
distance at least $3.$ Therefore in this case it is impossible to choose a
required set of codewords of type $[1^{4}],$ and we do not need to consider
this case any longer.

\underline{Codewords of type $[2^{1}, 1^{2}]$ and $[2^{1},1^{3}].$} In
$Z^{5}$ there are eighty words of type $[1^{3}].$ By (\ref{c}), ten of them
are covered by codewords in $\mathcal{C}$ of type $[1^{3}],$ forty of them
by codewords in $\mathcal{H}$ of type $[1^{4}],$ and the set $\mathcal{G}%
^{\ast\text{ }}$of the remaining thirty words covered by the set $\mathcal{G}$
of codewords of type $[2^{1}, 1^{2}]$. Clearly, if a codeword $W$ of type
$[2^{1}, 1^{2}]$ covers a word $V$ in $\mathcal{G}^{\ast},$ then we can see
the codeword $W$ as obtained from $V$ by multiplying one of the non-zero
coordinates of $V$ by two.\bigskip

\noindent Further, in $Z^{5}$ there are eighty words of type $[1^{4}].$ By
(\ref{h}), forty of them are covered by codewords in $\mathcal{C}$, ten of
them by codewords in $\mathcal{H},$ ten by codewords of type $[1^{5}], $
and the remaining twenty, belonging to the set $\mathcal{H}^{\ast}%
-\mathcal{H}$, by the set $\Lambda$ of codewords of type $[2^{1},1^{3}] $.
As above, if a codeword $W$ in $\Lambda$ covers a word $V$ in $\mathcal{H}%
^{\ast}-\mathcal{H}$, then we can see $W$ as obtained from $V$ by multiplying
one of the non-zero coordinates of $V$ by two.\bigskip

(ia) Since $\mathcal{H=\pm<(}1,1,1,-1,0)>,$ and $\mathcal{C=}\pm
<(1,0,1,0,-1)>,$ it is $\mathcal{G}^{\ast}=\{\pm<(1,0,1,1,0)>,\pm
<(1,-1,1,0,0)>,$\linebreak$\pm<(0,-1,1,1,0)>\},$ and $\mathcal{H}^{\ast
}-\mathcal{H}=\{\pm<(1,-1,1,1,0)>,$\linebreak$\pm<(1,1,-1,-1,0)>\}.$ We need
to consider two possible choices of codewords of type $[2^{1}, 1^{1}]$ covering
the other $5$-cycle $<(-1,-1,0,0,0)>$ of the $2$-factor $F.$ It is either
$<(-2,-1,0,0,0)>\in\mathcal{T}_\mathcal{L},$ or \linebreak$<(-1,-2,0,0,0)>\in\mathcal{T}_\mathcal{L}$.

In the former case consider the set of words/codewords.\medskip%

\begin{tabular}
[c]{rrrrrr}%
$B_{1}=($ & $-2,$ & $-1,$ & $0,$ & $0,$ & $0)$\\
$B_{2}=($ & $0,$ & $0,$ & $2,$ & $1,$ & $0)$\\
$V_{1}=($ & $1,$ & $-1,$ & $1,$ & $1,$ & $0)$\\
$V_{2}=($ & $-1,$ & $-1,$ & $1,$ & $1,$ & $0)$\\
$V_{3}=($ & $1,$ & $0,$ & $1,$ & $1,$ & $0)$\\
$V_{4}=($ & $0$ & $-1,$ & $1,$ & $1,$ & $0)$\\
$V_{5}=($ & $1,$ & $-1,$ & $1,$ & $0,$ & $0)$%
\end{tabular}
\medskip

\noindent Clearly, $B_{1},B_{2}$ are codewords of type $[2^{1}, 1^{1}],$ while
$V_{1},V_{2}\in\mathcal{H}^{\ast}-\mathcal{H}$, and $V_{3},V_{4},V_{5}%
\in\mathcal{G}^{\ast}$. To get a codeword $W_{j}$ covering the word $V_{j}%
,$\linebreak$j=1,...,5,$ we need to multiply a non-zero coordinate of $V_{j}$
by $2$. As there are five words $V_{j},$ and all of them have the fifth
coordinate equal to zero, two of them have to have the same non-zero
coordinate multiplied by $2$. It is possible only for $V_{2}$ and $V_{3}$ if
their forth coordinate is chosen, as otherwise the two resulting codewords
would be at distance less than $3$ (note that multiplying the first coordinate
of $V_{2}$ by $2$ results in a codeword at distance less than $3$ from
$B_{1}).$ Because of the distance to the codeword $B_{2},$ only the word
$V_{5}$ can have its third coordinate multiplied by $2$. This implies that
$V_{4}$ has to have its second coordinate multiplied by $2$ while for $V_{1}%
$the first one is the only choice.\bigskip

\noindent The same type of argument can be applied to a set of words/codewords
obtained by the above one by cyclically shifting coordinates of each
word/code\-word and/or multiplying all words/codewords by $-1.$ Therefore
setting\newline$\Lambda=\{\pm<(2,-1,1,1,0)>,\pm<(-1,-1,1,2,0)>\},$
and\newline$\mathcal{G}=\{\pm<(1,0,1,2,0)>,\pm<(0,-2,1,1,0)>,\pm
<(1,-1,2,0,0)>\},$ which is identical to\newline$\Lambda=\{\pm
<(2,-1,1,1,0)>,\pm<(2,0,-1,-1,1)>\},$and \newline$\mathcal{G}=\{\pm
<(2,0,1,0,1)>,\pm<(2,0,0,1,-1)>,\pm<(2,-1,-1,0,0)>\}$ is the unique choice so
that all codewords in $\mathcal{B}$, $\mathcal{G}$, and $\Lambda$ are
pair-wise at distance at least $3$.\bigskip

\noindent In the latter case $<(-1,-2,0,0,0)>\in\mathcal{B}$. We will
demonstrate that this is not a viable option. Let\medskip%

\begin{tabular}
[c]{rrrrrr}%
$B=($ & $0,$ & $0,$ & $-1,$ & $-2,$ & $0)$\\
$V_{1}=($ & $-1,$ & $1,$ & $-1,$ & $-1,$ & $0)$\\
$V_{2}=($ & $-1,$ & $0,$ & $-1,$ & $-1,$ & $0)$\\
$V_{3}=($ & $-1,$ & $1,$ & $-1,$ & $0,$ & $0)$\\
$V_{4}=($ & $0,$ & $1,$ & $-1,$ & $-1,$ & $0)$%
\end{tabular}
\medskip

\noindent It is easy to check that $V_{1}\in\pm<(1,-1,1,1,0)>\in
\mathcal{H}^{\ast}-\mathcal{H},$ and $V_{2}\in\pm<(1,0,1,1,0)>\in
\mathcal{G}^{\ast},V_{3}\in\pm<(1,-1,1,0,0)>\in\mathcal{G}^{\ast},$
\linebreak$V_{4}\in\pm<(0,-1,1,1,0)>\in\mathcal{G}^{\ast},$ while $B$ is a
codeword. Let $W_{j}$ be a codeword covering the word $V_{j},j=1,...,4.$ As
mentioned above, $W_{j}$ can be viewed as obtained from $V_{j}^{\text{ }}$by
multiplying one non-zero coordinate of $V_{j}$ by $2$. It is easy to see that,
for $1\leq j<k\leq4,$ if $W_{j}$ and $W_{k}$ were obtained by multiplying the
same coordinate of $V_{j}$ and $V_{k}$ by $2$ than their distance would be
less than $3$. On the other hand, if, for some $j,1\leq j\leq4,$ the fourth
coordinate of $W_{j}$ equaled to $-2$, then the distance of $W_{j}$ to $B$
would be less than three. Thus in this case, codewords covering $V_{j}%
,j=1,...,4,$ do not exist.\bigskip

(ib) In this case $\mathcal{H}=\pm<(-1,1,1,1,0)>,$ which in turn
implies\newline$\mathcal{H}^{\ast}-\mathcal{H}=\{\pm<(1,1,-1,1,0)>,\pm
<(1,1,-1,-1,0)>\},$ and\newline$\mathcal{G}^{\ast}\mathcal{=\{\pm
<}(1,1,0,1,0)>,\mathcal{\pm<}(1,1,-1,0,0)>,\pm$ $\mathcal{<}(1,-1,1,0,0)>\}.$
Let\medskip%

\begin{tabular}
[c]{rrrrrr}%
$B=($ & $2,$ & $1,$ & $0,$ & $0,$ & $0)$\\
$V_{1}=($ & $1,$ & $1,$ & $-1,$ & $1,$ & $0)$\\
$V_{2}=($ & $1,$ & $1,$ & $0,$ & $1,$ & $0)$\\
$V_{3}=($ & $1,$ & $1,$ & $-1,$ & $0,$ & $0)$\\
$V_{4}=($ & $0,$ & $1,$ & $-1,$ & $1,$ & $0)$%
\end{tabular}
\medskip

\noindent Then $V_{1}\in\mathcal{H}^{\ast}-\mathcal{H},$ $V_{2},V_{3},V_{4}%
\in\mathcal{G}^{\ast}.$ By the same type of an argument as in (ia) it is easy
to see that the required codewords $W_{j},j=1,...,4,$ do not exist, as
multiplying the first coordinate of $V_{j}$ by $2$ leads to a codeword at
distance less than $3$ from $B$. Moreover, our example shows that in this case
(ib) it is impossible to choose the sets of codewords of type $[2^{1},
1^{2}],$ and $[2^{1},1^{3}],$ regardless whether $<(-2,-1,0,0,0)>,$ or
$<(-1,-2,0,0,0)>$ is in $\mathcal{B}$.

\bigskip

\noindent(iia) We show that also in this case it is impossible to choose the
sets of codewords of type $[2^{1}, 1^{2}],$ and $[2^{1},1^{3}].$ There are
two ways how to choose codewords in $\mathcal{B}$ of type $[2^{1}, 1^{1}]$
covering words $<(-1,0,-1,0,0)>;$ either $<(-2,0,-1,0,0,)>,$ or
$<(-1,0,-2,0,0,)>.$ There are four ways how to choose the set $\mathcal{H}$ of
codewords of type $[1^{4}].$ We treat here only one of them as the other
three are nearly identical to this one. We deal with the option $\mathcal{H}%
=H_{2}\cup H_{3}.$ Then $\mathcal{H}^{\ast}-\mathcal{H=\{}%
<(1,1,1,-1,0)>,\newline<-1,-1,1,-1,0)>$ $,<(1,1,-1,0,-1)>,<(1,1,0,-1,-1)>\},$
while\newline$\mathcal{G}^{\ast}%
=\{<(1,0,1,0,1)>,<(1,0,-1,1,0)>,<(1,0,0,-1,-1)>,\newline%
<(1,1,0,0,-1)>,<(-1,-1,0,0,-1)>,<(-1,0,-1,0,1)>\}.\ $\ Of the two cases when
$<(-2,0,-1,0,0,)>\in\mathcal{T}_\mathcal{L}$ or $<(-1,0,-2,0,0,)>\in\mathcal{T}_\mathcal{L}$ \ we
treat here the first one. Let\medskip

$%
\begin{tabular}
[c]{rrrrrr}%
$B_{1}=($ & $2,$ & $1,$ & $0,$ & $0,$ & $0)$\\
$B_{2}=($ & $0,$ & $0,$ & $-2,$ & $0,$ & $-1)$\\
$V_{1}=($ & $1,$ & $1,$ & $-1,$ & $0,$ & $-1)$\\
$V_{2}=($ & $0,$ & $1,$ & $-1,$ & $0,$ & $-1)$\\
$V_{3}=($ & $1,$ & $1,$ & $0,$ & $0,$ & $-1)$%
\end{tabular}
$,\medskip

\noindent where $B_{1}$ and $B_{2}$ are codewords and $V_{1}\in\mathcal{H}%
^{\ast}-\mathcal{H}$, $V_{2}\in<(-1,0,-1,0,1)>$\linebreak$\in\mathcal{G}%
^{\ast}$, $V_{3}\in<(1,1,0,0,-1)>\in\mathcal{G}^{\ast}$. Because of $B_{1}$
none of $V_{j}^{\prime}$s can have the first coordinate multiplied by $2,$
while because of $B_{2}$ none of $V_{j}^{\prime}$s can have its third
coordinate multiplied by $2.$ Thus, codewords covering $V_{1},V_{2},V_{3}$ do
not exist.\bigskip

\noindent Thus, in what follows, it suffices to consider the case (ia) with
$\mathcal{B}=\pm<(2,1,0,0,0)>.$\bigskip

\noindent Now it is relatively simple to show the uniqueness of the codewords
of the remaining types.

\underline{Codewords of type $[3^{1},1^{1}].$} In $Z^{5}$ there are eighty words
of type $[2^{1}, 1^{1}].$ By (\ref{b}), ten of them are covered by the codewords
of type $[2^{1}, 1^{1}]$, and sixty by the codewords of type $[2^{1}, 1^{2}].$
The remaining ten words $\pm<(2,0,0,-1,0)>$ are to be covered by codewords of
type $[3^{1},1^{1}].$ Thus, $\pm<(3,0,0,-1,0)>$ are in $\mathcal{T}_\mathcal{L}$.

\underline{Codewords of type $[3^{1},1^{2}]$ and $[2^{2}, 1^{1}].$} In
$Z^{5\text{ }}$there are $240$ words of type $[2^{1}, 1^{2}].$ By (\ref{g})
sixty of them are covered by codewords of type $[2^{1}, 1^{1}],$ thirty by
codewords of type $[1^{3}],$ another thirty by codewords of type $[
2^{1},1^{2}]$, and sixty by codewords of type $[2^{1},1^{3}].$ The remaining
sixty words of type $[2^{1}, 1^{2}]$ are $\pm<(2,0,0,1,1)>,$\linebreak%
$\pm<(2,-1,0,0,-1)>,$\ then $\pm<(2,-1,0,0,1)>,$ $\pm<(1,-2,0,0,1)>,$ and
$\pm<(2,0,-1,0,-1)>,$ $\pm<(1,0,-2,0,-1)\dot{>}.$ As each codeword of type
$[2^{2}, 1^{1}]$ covers two words of type $[2^{1}, 1^{2}]\,$\ there have to
be in $\mathcal{T}_\mathcal{L}$ codewords $\pm<(2,-2,0,0,1)>$ and $\pm<(2,0,-2,0,-1)>$ of
type $[2^{2}, 1^{1}],$ and the remaining twenty words of type $[2^{1},
1^{2}]$ are covered by codewords $\pm<(3,0,0,1,1)>,\pm<(3,-1,0,0,-1)>$ of type
$[3^{1},1^{2}]. $

\underline{Codewords of type $[3^{1},2^{1}]$ and $[4^{1},1^{1}].$} Finally, the
ten remaining words $\pm<(3,0,-1,0,0)>$ of type $[3^{1},1^{1}]$ have to be
covered by codewords \newline$\pm<(4,0,-1,0,0)>$ of type $[4^{1},1^{1}]\,,$ and
the ten remaining words\newline$\pm<(2,0,2,0,0)>$ of type \ $[2^{2}]$ by
codewords\newline$\pm<(3,0,2,0,0)>$ of type $[3^{1},2^{1}].$\bigskip

\noindent So we have proved that the $5$-neighborhood of each point is
congruent to the canonical neighborhood, and that the neighborhood is uniquely
determined by the codewords of type $[2^{1}, 1^{1}].$\bigskip
\end{proof}

\ \noindent At the end of this subsection we describe an important attribute
of the canonical $5$-neighborhood.

\begin{theorem}
\label{13} Let $U,Z\in\mathcal{T}_\mathcal{L}$ be two words from the $5$-neighborhood of a
codeword $W$. Then $2W-U$ is in this neighborhood, that is, the $5$%
-neighborhood is symmetric, and if $\left\vert U+Z-2W\right\vert \leq5,$ then
$U+Z-W$ belongs to this $5$-neighborhood as well. In particular, if $U,Z $ are
from the $5$-neighborhood of the origin then $-U$ and $U+Z,$ if $\left\vert
U+Z\right\vert _{M}\leq5,$ belong to this neighborhood as well.
\end{theorem}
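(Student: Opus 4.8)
The plan is to reduce both assertions to a single algebraic fact about the canonical neighbourhood and then transport that fact across the congruence of Theorem \ref{12}. First I would normalise by replacing $\mathcal{T}$ with the tiling $\{V-W:V\in\mathcal{T}\}$, so that it suffices to treat the case $W=O$; there the two claims read $-U\in N_5(O)$ and, when $|U+Z|_M\le 5$, $U+Z\in N_5(O)$. By Theorem \ref{12} there is a congruence $g$ with $N_5(O)=g(\mathcal{C})$, where $\mathcal{C}$ is the canonical neighbourhood, and by Claim \ref{0} this $g$ is a composition of coordinate permutations and sign changes. Such a $g$ is linear and an isometry for $|\cdot|_M$, so it satisfies $g(X+Y)=g(X)+g(Y)$, $g(-X)=-g(X)$ and $|g(X)|_M=|X|_M$; hence both closure statements for $N_5(O)$ follow once they are proved for $\mathcal{C}$ itself.

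The heart of the argument is to show that $\mathcal{C}$ is exactly the set of short vectors of a lattice. Define $\psi:Z^5\to Z_{11}$ by $\psi(e_i)=(-2)^{i-1}\bmod 11$, so that $(\psi(e_1),\dots,\psi(e_5))=(1,9,4,3,5)$. Then $\{0\}\cup\{\pm\psi(e_i):1\le i\le 5\}=Z_{11}$, so by Corollary \ref{20} the kernel $\ker\psi$ is (a representative of) the unique lattice tiling. The decisive observation is that $(-2)^5\equiv 1\pmod{11}$: writing $\sigma$ for the cyclic shift of coordinates, one checks $\psi\circ\sigma=-2\,\psi$, so $\ker\psi$ is invariant under $\sigma$ and under $X\mapsto -X$. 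I would then verify, by testing one representative of each orbit $\pm\langle V\rangle$ in the list defining $\mathcal{C}$, that each such $V$ lies in $\ker\psi$; invariance under $\sigma$ and negation upgrades this to $\mathcal{C}\subseteq\ker\psi$. Finally, applying the type counts of Theorem \ref{11} to the lattice tiling $\ker\psi$ shows that $\ker\psi\cap\{X:|X|_M\le 5\}$ has exactly the same number of codewords of each type as $\mathcal{C}$; since $\mathcal{C}$ sits inside this finite set with equal cardinality, we get $\ker\psi\cap\{X:|X|_M\le 5\}=\mathcal{C}\cup\{O\}$.

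With this identification the closure properties are immediate from $\ker\psi$ being a subgroup of $Z^5$. If $U\in\mathcal{C}$ then $-U\in\ker\psi$ and $|-U|_M=|U|_M\le 5$, so $-U\in\mathcal{C}$; and if $U,Z\in\mathcal{C}$ with $|U+Z|_M\le 5$ then $U+Z\in\ker\psi$, whence $U+Z\in\mathcal{C}\cup\{O\}$, the value $O$ occurring precisely when $Z=-U$. Applying $g$ and then undoing the initial normalisation transports these two facts, yielding $2W-U\in N_5(W)$ and $U+Z-W\in N_5(W)$ in general (the latter reducing to $W\in N_5(W)$ when $U+Z-2W=O$).

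I expect the main obstacle to be the identity $\mathcal{C}=\ker\psi\cap\{X:|X|_M\le 5\}$: both choosing the right homomorphism $\psi$ and recognising that $\sigma$ scales it by the fifth root of unity $-2\in Z_{11}^{*}$, which is exactly what makes $\ker\psi$ shift-invariant and therefore compatible with the cyclic description of $\mathcal{C}$. The converse inclusion (no short kernel vectors beyond $\mathcal{C}$) rests on the exact enumeration of Theorem \ref{11}. Once the homomorphism and its shift-invariance are in hand, what remains is a finite, routine check of one representative per orbit and a comparison of cardinalities, after which the group structure of $\ker\psi$ does all the work.
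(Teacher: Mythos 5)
Your proposal is correct and follows essentially the same route as the paper: after reducing to the canonical neighbourhood via Theorem \ref{12} and Claim \ref{0}, you embed it in the kernel of the very same homomorphism $\phi(e_1,\dots,e_5)=(1,9,4,3,5)$ into $Z_{11}$ and let the subgroup structure deliver both closure properties. Your two refinements --- the observation that the cyclic shift scales $\psi$ by the fifth root of unity $-2$ (cutting the verification to one representative per orbit) and the explicit cardinality argument identifying $\mathcal{C}$ with the short vectors of $\ker\psi$ (which the paper leaves implicit in its ``easy, although time consuming, to check'') --- are welcome tightenings but do not change the method.
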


\begin{proof}
Again it suffices to prove the statement for $W=O.$ From the previous theorem
we know, that the $5$-neighborhood of each codeword in congruent to the
canonical one. Clearly, a congruence mapping retains the properties described
in this theorem. Therefore, it suffices to prove the statement for the
canonical neighborhood. To show that the canonical neighborhood satisfies
these properties we prove that this neighborhood is a part of (the unique)
lattice tiling of $R^{5}$ by crosses. Then the proof will follow from the fact
that if words $U,V$ belong to a lattice $\mathcal{L}$ then also $-U$ and $U+V$
$\ $are in $\mathcal{L}$.\bigskip

\noindent Consider a homomorphism $\phi:Z^{5}\rightarrow Z_{11},$ the cyclic
group of order $11,$ given by $\phi(e_{1})=1,\phi(e_{2})=9,\phi(e_{3}%
)=4,\phi(e_{4})=3,$ and $\phi(e_{5})=5.$ Then $\phi$ satisfies the assumptions
of Corollary \ref{20}. Thus, $\phi$ induces a lattice tiling $\mathcal{L}$ of
$R^{5}$ by crosses, where the set $\ker\phi$ is the set $\mathcal{T}%
_{\mathcal{L}}$ of the centers of crosses in this tilling. It is easy, although time consuming, to check, that
all codewords from the canonical neighborhood belong to $\ker\phi,$ that is,
if $U=(u_{1},...,u_{5})$ belongs to the $5$-neighborhood then $u_{1}%
+9u_{2}+4u_{3}+3u_{4}+5u_{5}=0(\operatorname{mod}11).$ The proof is complete.
\end{proof}

\subsection{Phase D}

\noindent As the closing part of the proof of Theorem \ref{4} we show that for
any two codewords in $\mathcal{T}_\mathcal{L}$ their $5$-neighborhoods are not only
congruent but that they are identical.

\begin{theorem}
The $5$-neighborhood of each codeword in the tiling $\mathcal{L}$ is equal to
the $5$-neighborhood of the origin.
\end{theorem}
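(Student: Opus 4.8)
The plan is to promote the congruence of Theorem~\ref{12} to genuine equality by propagating the desired identity outward from the origin. By Claim~\ref{0} we may apply a single global congruence to $\mathcal{T}$ and assume henceforth that $N_{5}(O)$ \emph{is} the canonical neighborhood. Writing $\Lambda=\ker\phi$ for the lattice of the unique lattice tiling exhibited in the proof of Theorem~\ref{13}, and setting $B_{5}(W)=\{V\in Z^{5}:\rho_{M}(V,W)\le 5\}$ and $B_{5}=B_{5}(O)$, recall from that proof that every word of the canonical neighborhood lies in $\ker\phi$; since the canonical neighborhood and $\Lambda\cap B_{5}$ both have $152$ elements (by Theorem~\ref{11}) and the former is contained in the latter, we in fact have $N_{5}(O)=\Lambda\cap B_{5}$. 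Put $S=\{W\in\mathcal{T}:N_{5}(W)=N_{5}(O)\}$, so $O\in S$ and the theorem becomes $S=\mathcal{T}$. Since the nearest codewords to any codeword lie at distance $3$ and the crosses tile $R^{5}$, the tiling is connected by steps of length $3$: exactly as in the proof of Theorem~\ref{33}, every codeword is joined to $O$ by a chain $O=Z_{0},Z_{1},\dots,Z_{m}$ with $\rho_{M}(Z_{i-1},Z_{i})=3$. Hence it suffices to prove a single propagation step and then apply it repeatedly along such a chain.

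\textbf{Propagation step:} if $W\in S$ and $W'\in\mathcal{T}$ with $\rho_{M}(W,W')=3$, then $W'\in S$. First I would record that $t:=W'-W$ lies in the canonical neighborhood (it is a codeword of absolute value $3$ with respect to $W$, and $N_{5}(W)=W+(\Lambda\cap B_{5})=\Lambda\cap B_{5}(W)$), so in particular $W'\in\Lambda$. Writing $\mathcal{N}'=\{V-W':V\in N_{5}(W')\}$, I then compare $\mathcal{N}'$ with the canonical neighborhood on the overlap region, and claim that they agree exactly on $A:=\{w\in N_{5}(O):\lvert w+t\rvert_{M}\le 5\}$, with $A\subseteq\mathcal{N}'$. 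For the inclusion $A\subseteq\mathcal{N}'$: given $w\in A$, the word $V=W'+w$ lies in $\Lambda$ and in $B_{5}(W)$, hence in $N_{5}(W)=\Lambda\cap B_{5}(W)\subseteq\mathcal{T}$; since $\rho_{M}(V,W')=\lvert w\rvert_{M}\le 5$ we get $V\in N_{5}(W')$, i.e. $w\in\mathcal{N}'$. Conversely, any $w\in\mathcal{N}'$ with $\lvert w+t\rvert_{M}\le 5$ gives a codeword $V=W'+w\in N_{5}(W)\subseteq\Lambda$, so $w\in\Lambda\cap B_{5}=N_{5}(O)$; thus $\mathcal{N}'$ and $N_{5}(O)$ have the same intersection with $\{w:\lvert w+t\rvert_{M}\le 5\}$, namely $A$.

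With $A\subseteq\mathcal{N}'$ in hand I would invoke the structural closure of Theorem~\ref{13}: the neighborhood $\mathcal{N}'$ is symmetric and closed under the restricted sum $(u,z)\mapsto u+z$ whenever $\lvert u+z\rvert_{M}\le 5$. Let $\langle A\rangle$ be the smallest set containing $A$ that is closed under negation and this restricted addition. Then $\langle A\rangle\subseteq\mathcal{N}'$; and since $N_{5}(O)$ enjoys the same closure (Theorem~\ref{13} applied at $O$) and contains $A$, also $\langle A\rangle\subseteq N_{5}(O)$. The crux is then the generation claim $\langle A\rangle=N_{5}(O)$: the overlap of two adjacent $5$-neighborhoods already regenerates the whole neighborhood under symmetry and restricted addition. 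Granting this, $N_{5}(O)\subseteq\mathcal{N}'$, and since $\lvert\mathcal{N}'\rvert=\lvert N_{5}(O)\rvert=152$ we conclude $\mathcal{N}'=N_{5}(O)$, i.e. $W'\in S$. Running this along the connecting chains yields $S=\mathcal{T}$.

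The step I expect to be the main obstacle is the generation claim $\langle A\rangle=N_{5}(O)$. The subtlety is that, although $t$ has absolute value $3$, the set $A$ omits a substantial part of the neighborhood: for $t$ of type $[2^{1},1^{1}]$, several of the ten codewords of type $[2^{1},1^{1}]$ (those sign-aligned with $t$, where $\lvert w+t\rvert_{M}=6$) fall outside $A$. The point is that symmetry first restores $A\cup(-A)$, and the restricted additive closure then rebuilds the remaining words; verifying that this recovers all $152$ words is a finite computation which, up to the symmetries of the canonical neighborhood, reduces to checking one representative $t$ of type $[2^{1},1^{1}]$ and one of type $[1^{3}]$. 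I would carry out this check explicitly, generating the words in order of increasing absolute value, thereby closing the propagation step and, together with the connectivity argument, the whole theorem.
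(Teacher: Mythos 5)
Your overall architecture coincides with the paper's: reduce to a single propagation step between codewords at distance $3$ via a connecting chain, and carry out that step by exploiting the symmetry and restricted-addition closure of Theorem \ref{13} on the overlap of the two $5$-neighborhoods. The identification $N_{5}(O)=\ker\phi\cap B_{5}\setminus\{O\}$ and the inclusion $A\subseteq\mathcal{N}'$ are correct. However, there is a genuine gap: the generation claim $\langle A\rangle=N_{5}(O)$ is the entire mathematical content of the propagation step, and you do not verify it --- you only announce that you would carry out the check. It is not a priori clear that the closure of $A$ under negation and restricted addition recovers all $152$ words rather than stabilizing at a proper subset; for instance, for $t=(2,1,0,0,0)$ the word $w=(0,0,3,0,2)$ of type $[3^{1},2^{1}]$ satisfies $\left\vert w+t\right\vert _{M}=\left\vert -w+t\right\vert _{M}=8$, so neither $w$ nor $-w$ lies in $A$, and $w$ can only be reached through restricted sums whose summands must themselves be generated first. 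Until that finite computation is exhibited, the argument is incomplete at its crucial point, and even the truth of the theorem does not by itself guarantee that your particular closure reaches everything.

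The paper performs a much smaller verification by combining the same closure idea with the uniqueness clause of Theorem \ref{12}: since the $5$-neighborhood of a codeword is uniquely determined by its ten codewords of type $[2^{1},1^{1}]$, it suffices to show that for each codeword $U$ adjacent to $O$ and each $V\in\pm<(2,1,0,0,0)>$ the word $U+V$ is a codeword. Each such $U+V$ is exhibited either via $U-V$ being a codeword (whence $U+V$ follows by symmetry of $N_{5}(U)$) or explicitly as $Y+Z-X$ with $X,Y,Z,Y-X,Z-X,Y+Z-2X$ all in the canonical neighborhood, through two short tables (one for $U=(2,1,0,0,0)$, one for $U=(1,0,1,0,-1)$), the remaining cases following by cyclic shifts and global negation, which are symmetries of the canonical neighborhood. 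I recommend you adopt the same reduction: prove only that the ten type-$[2^{1},1^{1}]$ words of $N_{5}(O)$ lie in $\mathcal{N}'$, then invoke Theorem \ref{12} to conclude $\mathcal{N}'=N_{5}(O)$. That converts your open-ended closure computation into ten concrete identities per representative $t$, which is exactly the finite check the paper supplies.
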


\begin{proof}
Let $\mathcal{L}$ be a tiling of $R^{5}$ by crosses. We proved that the
$5$-neighbor\-hood of any codeword $W$ in $\mathcal{T}_\mathcal{L}$ is congruent to the
$5$-neighborhood of the origin. By Claim \ref{0}, we may assume that the
$5$-neighborhood of the origin is the canonical one.\bigskip

\noindent We have proved, see Theorem \ref{6}, that for each codeword $W$ the
$3$-neighbor\-hood of $W$ comprises twenty codewords; i.e., there are in
$\mathcal{T}_\mathcal{L}$ twenty codewords at distance $3$ from $W$. We will call these
codewords at distance $3$ from $W$ the codewords adjacent to $W.$ Ten of the
adjacent codewords are of type $[2^{1}, 1^{1}],$ and ten of them are of type
$[1^{3}].$ The proof of the theorem is based on the following claim, which
states that all codewords adjacent to $W$ have "the same" set of codewords of
type $[2^{1}, 1^{1}]$ as $W$ has.\bigskip

\noindent\textbf{Claim D. }Let $W$ be a codeword in $\mathcal{T}_\mathcal{L}$, and let $U$
be a codeword adjacent to $W.$ Further, let $S_{W},$ and $S_{U}$ be the set of
codewords of type $[2^{1}, 1^{1}]$ with respect to $W$ and $U,$ respectively.
Then $\{Z-W;$ where $Z\in S_{W}\}=\{Z-U,$ where $Z\in S_{U}\}.$\bigskip

\noindent Theorem \ref{12} states that the $5$-neighbourhood of each codeword
$W$ is uniquely determined by the codewords of type $[2^{1}, 1^{1}]$. Thus, with
Claim D in hands, we know that any two adjacent codewords have the same
$5$-neighborhood. The rest of the proof of the theorem follows easily by
induction because to each codeword $W$ there is a sequence of codewords
$O=Z_{0},Z_{1},...,Z_{m-1},Z_{m}\\=W$ such that the codeword $Z_{j}$ is adjacent
to the codeword $Z_{j-1}$ for all $j=1,...,m.$\bigskip

\noindent W.l.o.g. we prove Claim D only for $W=O,$ the origin. Consider a
codeword $U$ that is adjacent to $O.$ To prove Claim D for $U$ we need to show
that $U+V,$ where $V\in\pm<(2,1,0,0,0)>$ is a codeword in $\mathcal{T}_\mathcal{L} $. To
do so, it suffices either

(a) to show that $U-V$ is a codeword, or

(b) to choose $X,Y,Z$ so that

(i) $X,Y,Z,Y-X,Z-X,$ and $Y+Z-2X$ are in the canonical neighborhood; and

(ii) $Y+Z-X=U+V.$

\noindent Indeed, in the case (a), we know that for each codeword $W$ its
$5$-neighborhood is symmetric with respect $W,$ hence if $U-Z$ is a codeword
then also $U+Z$ is a codeword because $\left\vert Z\right\vert _{M}\leq5.$ In
the case (b) consider a codeword $X$. By Theorem \ref{13}, if there are
codewords $Y,Z,$ so that $\rho_{M}(Y,X)\leq5,\rho_{M}(Z,X)\leq5,$ and
$\rho_{M}(Y+Z-X,X)\leq5,$ then $Y+Z-X$ is a codeword as well. However, (i)
guarantees that all assumptions of Theorem \ref{13} are satisfied, therefore
(ii) guarantees that $U+V$ is in the $5$-neighborhood of the codeword
$U$.\bigskip

\noindent First we choose the codeword adjacent to the origin to be of type
$[2^{1}, 1^{1}].$ Let $U=(2,1,0,0,0).$ If $V=(2,1,0,0,0)$ then $U-V=(0,...,0)$ is
a codeword, and hence by (a) $U+V=(4,2,0,0,0)$ is a codeword as well. The
following table provides a suitable choice for the other four codewords in
$<(2,1,0,0,0)>.$

\noindent\vspace{-0.7cm}
\begin{center}
\scalebox{0.87}{
\begin{tabular}
[c]{|l|l|l|l|l|}\hline
$\ \ \ \ \ \ \ V$ & $\ \ \ \ \ \ \ X$ & $\ \ \ \ \ \ \ Y$ & $\ \ \ \ \ \ \ Z$
& $Y+Z-X=U+V$\\\hline
$(0,2,1,0,0)$ & $(1,0,1,0,-1)$ & $(0,3,0,0,-1)$ & $(3,0,2,0,0)$ &
$(2,3,1,0,0)$\\\hline
$(0,0,2,1,0)$ & $(2,1,0,0,0)$ & $(1,2,0,1,0)$ & $(3,0,2,0,0)$ & $(2,1,2,1,0)$%
\\\hline
$(0,0,0,2,1)$ & $(2,1,0,0,0)$ & $(1,2,0,1,0)$ & $(3,0,0,1,1)$ & $(2,1,0,2,1)$%
\\\hline
$(1,0,0,0,2)$ & $(0,1,0,-1,1)$ & $(3,0,0,-1,0)$ & $(0,2,0,0,3)$ &
$(3,1,0,0,2)$\\\hline
\end{tabular}
}
\end{center}

\noindent Theorem \ref{13} guarantees that the $5$-th neighborhood of each
codeword is symmetric. Therefore $U+V$ is a codeword also for all \newline%
$V\in <(-2,-1,0,0,0)>.$ Let $U^{\prime}$ $\in<(2,1,0,0,0)>.$ Then we apply the
same cyclic shift to $X$,$Y,Z$ given in the table to obtain the required
codewords. The same applies to $-U$ and its cyclic shifts. Finally, let $U$ be
a codeword of type $[1^{3}]$ adjacent to the origin. say $U=(1,0,1,0,-1).$
The proper choice of $X,Y,Z$ for each $V\in<(2,1,0,0,0)>$ is given in the
table below:\medskip

\noindent\vspace{-1cm}
\begin{center}
\scalebox{0.86}{
\begin{tabular}
[c]{|l|l|l|l|l|}\hline
$\ \ \ \ \ \ \ \ V$ & $\ \ \ \ \ \ \ X$ & $\ \ \ \ \ \ \ \ Y$ &
$\ \ \ \ \ \ \ \ Z$ & $Y+Z-X=U+V$\\\hline
$(2,1,0,0,0)$ & $(1,0,1,0,-1)$ & $(1,1,0,0,-2)$ & $(3,0,2,0,0)$ &
$(3,1,1,0,-1)$\\\hline
$(0,2,1,0,0)$ & $(0,0,2,1,0)$ & $(0,1,1,1,-1)$ & $(1,1,3,0,0)$ &
$(1,2,2,0,-1)$\\\hline
$(0,0,2,1,0)$ & $(0,0,2,1,0)$ & $(1,0,1,2,0)$ & $(0,0,4,0,-1)$ &
$(1,0,3,1,-1)$\\\hline
$(0,0,0,2,1)$ & $(2,1,0,0,0)$ & $(2,0,1,0,1)$ & $(1,1,0,2,-1)$ & $(1,0,1,2,0)
$\\\hline
$(1,0,0,0,2)$ & $(2,1,0,0,0)$ & $(3,0,0,-1,0)$ & $(1,1,1,1,1)$ & $(2,0,1,0,1)
$\\\hline
\end{tabular}
}
\end{center}
\medskip

\noindent while for $V\in<(-2,-1,0,0,0)>$, $U+V$ is a codeword as the
$5$-neighbour\-hood is symmetric. As in the case of $U$ being an adjacent
codeword of type $[2^{1}, 1^{1}],$ a suitable choice of $X,Y,Z$ for other cases
can be obtained by a cyclic shift. The proof is complete.
\end{proof}

\begin{acknowledgement}
The authors thank Prof. Alex Rosa for checking correctness of the main result
of the paper.
\end{acknowledgement}


\begin{thebibliography}{99}                                                                                               %
\bibitem {Bader}B.F. Al-Bdaiwi, P.Horak, and L. Milazzo, \textit{Perfect
1-error correcting Lee codes}, Designs Codes and Cryptography 52 (2009), 155-162.

\bibitem{Bryant} D. Bryant, B. Maenhaut, K. Quinn, and B. S. Webb, \textit{%
Existence and embeddings of partial Steiner systems of order ten with cubic
leaves}, Discrete Math. 284 (2004), 83-95.

\bibitem {D}J. Debroni, J. D. Eblen, M. A. Langston, W. Myrvold, P. Shor, and
D. Weerapurage, \textit{A complete resolution of the Keller maximum clique
problem, }In: SODA'2011, pp. 129--135.

\bibitem {GW}S. W. Golomb, and L. R. Welsh, \textit{Algebraic coding and the
Lee metric}. In: Error Correcting Codes, Wiley, New York, 1968, 175-189.

\bibitem {G}S. Gravier, M. Mollard, and Ch. Payan, \textit{On the existence of
three-dimensional tiling in the Lee metric,} Eur. J. Comb. 19 (1998), 567--572.

\bibitem {Hajos}G. Haj\'{o}s, \textit{\"{U}ber einfache und mehrfache
Bedeckung des }$n$\textit{-dimensionalen Raumes mit einem W\"{u}rfelgitter},
Math. Zeitschr. 47 (1942), 427-467.

\bibitem {H0}P. Horak, \textit{On Perfect Lee Codes}, Discrete Math. 309
(2009), 5551-5561.

\bibitem {H00}P. Horak, \textit{Tilings in Lee metric}, European J.
Combinatorics 30 (2009), 480-489.

\bibitem {H}P. Horak, and B.F. Al-Bdaiwi, \textit{Non-periodic tilings of
}$R^{n}$\textit{\ by crosses}, Discrete Comput. Geom. 47(2012), 1-16.

\bibitem {H1}P. Horak, and B.F. Al-Bdaiwi, \textit{Diameter perfect Lee
codes}, IEEE Transactions on Information Th. 58 (2012), 5490-5499.

\bibitem {HH1}P. Horak, and V. Hromada, \textit{Tiling }$R^{5}$ by crosses, to
appear in Discrete and Computational Geometry.

\bibitem {Kar}F. K\'{a}rteszi, \textit{Szeml\'{e}letes geometria}, Gondolat,
Budapest, 1966.

\bibitem {Keller}O. H. Keller, \textit{\"{U}ber die l\"{u}ckenlose
Einf\"{u}llung des Raumes mit W\"{u}rfeln,} J. Reine Angew. Math. 177 (1930), 231--248.

\bibitem {LS}J. F. Lagarias, and P. W. Shor, \textit{Keller's cube-tiling
conjecture is false in high dimensions,} Bull. Amer. Math. Soc. 27 (1992), 279--283.

\bibitem {Mackey}J. Mackey, \textit{A cube tiling of dimension eight with no
facesharing,} Discrete Comput. Geom. 28 (2002) 275--279.

\bibitem {Peron}O. Peron, \textit{Modulartige l\"{u}ckenlose Ausf\"{u}llung
des \thinspace}$%
\mathbb{R}
^{n}$ \textit{mit kongruenten W\"{u}rfeln I, II,} Math. Ann. 117 (1940),
415--447; 117 (1941), 609--658.

\bibitem {M}H. Minkowski, \textit{Diophantische Approximationen}, Teubner,
Leipzig, 1907.

\bibitem {Mol}E. Molnar, \textit{Sui mosaici dello spazio di dimensione n},
Atti Della Accademia Nazionale dei Lincei, Rend. Sc. Fis. Mat. e Nat. 51
(1971), 177-185.

\bibitem {Redei}L. Redei, \textit{Neuer Beweis des Haj\'{o}sschen Satzes
\"{u}ber die endlichen Abelschen Gruppen}, Acta Math. Acad. Sci. Hung. 6
(1955), , 27--40.

\bibitem {Stein}S. Stein, S. Szab\'{o}, \textit{Algebra and Tiling}, The
Mathematical Association of America, The Carus Mathematical Monographs,
Vol.25, 1994.

\bibitem {S}S. Szab\'{o}, \textit{On mosaics consisting of multidimensional
crosses}, Acta Math. Acad. Sci. Hung., 38 (1981), 191-203.

\bibitem {U}W. Ulrich, \textit{Non-binary error correcting codes}, The Bell
System Technical Journal (1957), 1341-1388.
\end{thebibliography}
\end{document}